\newcommand{\inlinemat}[1]{\ensuremath{\left(\begin{smallmatrix}#1\end{smallmatrix}\right)}}
\DeclarePairedDelimiter\ket{\lvert}{\rangle}
\DeclarePairedDelimiter\bra{\langle}{\rvert}
\newcommand{\tr} {\operatorname{tr}}
\newcommand{\supp} {\operatorname{supp}}
\newcommand{\spanspace} {\operatorname{span}}
\newcommand{\ketbra}[2]{\ensuremath{\ket{#1}\!\bra{#2}}}
\renewcommand{\orcidID}[1]{\smash{\href{http://orcid.org/#1}{\protect\raisebox{-1.25pt}{\protect\includegraphics{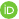}}}}}
\newcommand{\mvs}{\fontfamily{mvs}\fontencoding{U}\fontseries{m}\fontshape{n}\selectfont}
\newcommand\Letter{{\mvs\char66}}
\newcommand{\meas}{\ensuremath{\mathtt{measure}}}
\newcommand{\ifelse}[3]{\ensuremath{\mathtt{if}\ #1\ \mathtt{then}\ #2\ \mathtt{else}\ #3}}
\newcommand{\repeatuntil}[2]{\ensuremath{\mathtt{repeat}\  #1 \  \mathtt{until}\  #2}}
\newcommand{\cqconfig}[1]{\ensuremath{\langle#1\rangle}}
\newcommand{\cqprog}{\textit{cq-prog}}
\begin{document}
\title{Verifying Fault-Tolerance of Quantum Error Correction Codes}
%
%\titlerunning{Abbreviated paper title}
% If the paper title is too long for the running head, you can set
% an abbreviated paper title here
%
\author{Kean Chen\inst{1}\textsuperscript{(\Letter)}\orcidID{0000-0002-0772-6635}
\and
Yuhao Liu\inst{1}\orcidID{0009-0005-2822-0448}
\and
Wang Fang\inst{2}\orcidID{0000-0001-7628-1185}
\and
Jennifer Paykin\inst{3} 
\and \\
Xin-Chuan Wu\inst{4}
\and
Albert Schmitz\inst{3} 
\and
Steve Zdancewic\inst{1}\textsuperscript{(\Letter)}\orcidID{0000-0002-3516-1512} 
\and
Gushu Li\inst{1}\textsuperscript{(\Letter)}\orcidID{0000-0002-6233-0334}}
% \author{Kean Chen\inst{1}\orcidID{0000-1111-2222-3333} \and
% Yuhao Liu\inst{2,3}\orcidID{1111-2222-3333-4444} \and
% Wang Fang\inst{3}\orcidID{2222--3333-4444-5555}}
%
\authorrunning{K. Chen et al.}
% First names are abbreviated in the running head.
% If there are more than two authors, 'et al.' is used.
%
\institute{University of Pennsylvania, Philadelphia, USA \\ \email{\{keanchen,stevez,gushuli\}@seas.upenn.edu} \and University of Edinburgh, Edinburgh, UK \and
Intel Corporation, Hillsboro, USA \and
Intel Corporation, Santa Clara, USA}
% \institute{Princeton University, Princeton NJ 08544, USA \and
% Springer Heidelberg, Tiergartenstr. 17, 69121 Heidelberg, Germany
% \email{lncs@springer.com}\\
% \url{http://www.springer.com/gp/computer-science/lncs} \and
% ABC Institute, Rupert-Karls-University Heidelberg, Heidelberg, Germany\\
% \email{\{abc,lncs\}@uni-heidelberg.de}}
%
\maketitle              % typeset the header of the contribution
\setcounter{footnote}{0}

\begin{abstract}
Quantum computers have advanced rapidly in qubit count and gate fidelity. However, large-scale fault-tolerant quantum computing still relies on quantum error correction code (QECC) to suppress noise. Manually or experimentally verifying the fault-tolerance property of complex QECC implementation is impractical due to the vast error combinations. This paper formalizes the fault-tolerance of QECC implementations within the language of quantum programs. By incorporating the techniques of quantum symbolic execution, we provide an automatic verification tool for quantum fault-tolerance. We evaluate and demonstrate the effectiveness of our tool on a universal set of logical operations across different QECCs.

%\keywords{First keyword  \and Second keyword \and Another keyword.}
\end{abstract}

\section{Introduction}

Quantum computers have the potential to address many significant problems that are hard to tackle on classical computers~\cite{nielsen2010quantum,preskill2023quantumcomputing40years}. However, real-world quantum computers suffer from noise, and practical quantum computation relies on the quantum error correction codes (QECCs)~\cite{gottesman1997stabilizer,gottesman2009introqec} to correct the errors caused by the noises and protect the quantum information. 
Recently, prototyping QECCs have been experimentally demonstrated on various devices of different quantum computing technologies~\cite{lukin2023logical,acharya2024quantumerrorcorrectionsurface,acharya2022suppressingquantumerrorsscaling,reichardt2024logicalcomputationdemonstratedneutral}.

A critical property that a QECC must satisfy is \textit{fault-tolerance}. That is, a QECC should be able to correct a certain amount of errors even if the physical operations to implement the QECC are themselves noisy.
Proving fault-tolerance is required for every newly designed QECC. However, hand-written proofs~\cite{gottesman2009introqec,gottesman2024surviving} are only feasible for low-distance and few-qubit codes by propagating the state under combinations of errors.
Such proofs soon become overwhelming as more advanced and sophisticated QECCs~\cite{leverrier2022quantumtannercodes,Breuckmann2021qldpc,tamiya2024polylog} involving more physical qubits and physical operations are proposed.

Computer-aided techniques are well-suited for verifying the fault-tolerance of QECCs. However, current verification tools on quantum programs either 1) fail to capture the stabilizer formalism of QECCs~\cite{tao2021gleipnir,hung2019quantitative,chen2024automatic,gu2022giallar,shi2020certiq,guan2024measurement}, leading to significant scalability issues, or 2) do not consider execution faults~\cite{nan2023quantumsymbolicexecution,fang2024symbolic,bauer2023symqv,fang2024symphase}, making it hard for them to accommodate verification for fault-tolerance.

In particular, we identify several key challenges for mechanically proving the fault-tolerance of a QECC. \textbf{(1)} Quantum computing is intrinsically analog with \textit{continuous errors}, which are hard to reason about on digital classical computers. \textbf{(2)} The implementations of QECC often involve complex control flow, such as \textit{loops}, which introduce non-monotonicity in the transitions, making both semantics and error analysis less tractable. \textbf{(3)} QECCs contain non-Clifford logical operations for universality, which usually cannot benefit from the stabilizer formalism for efficient processing.

In this paper, we overcome these challenges and propose an automated verification tool for QECC fault-tolerance.
\textbf{First,} we extend the semantics of classical-quantum programming language~\cite{ying2010flowchart,ying2024practical} under the presence of quantum errors, where we can describe and formalize the criterion of fault-tolerance for QECC components.
\textbf{Second,} we show that the continuous errors in the input quantum states and during the execution can both be \textit{discretized} to certain input states and Pauli errors. This allows us to develop a symbolic execution engine to prove QECC fault-tolerance via stabilizer formalism.  
\textbf{Third,} we observe that the loops in typical QECCs have unique properties, \textit{memory-less/conservative}, which can enable new symbolic transition rules to overcome the complexity caused by loops.
\textbf{Fourth,} to handle the non-Clifford components, we propose a two-party framework that absorbs non-Clifford components in the inputs. Then, we show that the execution with only Clifford components is fault-tolerant for arbitrary inputs and thus prove the overall fault-tolerance.

We have implemented a symbolic execution framework to prove the fault-tolerance of QECCs and used it to check the fault-tolerance of the essential functional components (i.e., state preparation, gate, measurement, error correction) of various QECCs of different sizes. For example, we are able to prove the fault-tolerance of a two-logical-qubit gate in Toric code with up to 100 physical qubits in about 3 hours and the fault-tolerance of sophisticated state preparation protocol in surface code with 49 physical qubits in about 70 hours. 
When the checking fails, our framework can show the error propagation path that violates the fault-tolerance criterion to help debug the QECC implementation.

The major contributions of this paper can be summarized as follows:
\begin{enumerate}
 \vspace{-5pt}
    \item We extend and define the semantics of quantum programs with faulty executions to reason about the influence of faults and formalize the fault-tolerance properties of QECCs.
    \item We overcome the challenges of mechanically proving the QECC fault-tolerance by introducing a series of theorems to discretize the continuous errors, designing new transition rules by leveraging the unique properties of loops in QECCs, and a two-party framework to handle non-Clifford components.
    \item We implement a symbolic execution framework based on our theories and show that it can prove the fault-tolerance of various QECCs or indicate the error propagation path that violates the fault-tolerance criterion.
\end{enumerate}

\section{Background}

\subsection{Quantum Computing Basics}

The state space of a quantum system is described by a $d$-dimensional complex Hilbert space $\mathcal{H}^d$. A \textbf{pure state} of such a system is a unit vector $\ket{\psi}\in\mathcal{H}^d$. A \textbf{mixed state} is a statistical ensemble of pure states $\{\ket{\psi_i}\}$ with probability $p_i$, described by the \textbf{partial density operator} $\rho=\sum_i p_i\ketbra{\psi_i}{\psi}_i$. If \(\sum_{i} p_i=1\), then \(\rho\) is also called a \textbf{density operator}. A pure state \(\ket{\psi}\) can also be regarded as the the density operator $\rho=\ketbra{\psi}{\psi}$. Suppose \(\rho=\sum_i\lambda_i\ketbra{\psi_i}{\psi_i}\) is the spectral decomposition of a density operator \(\rho\) such that \(\lambda_i>0\), then the \textbf{support} of \(\rho\) is defined as  \(\supp(\rho)=\spanspace(\{\ket{\psi_i}\}_i)\) and is denoted as \textbf{supp}\(\bm{(\rho)}\). 

We use \textbf{quantum channel} to describe a general quantum evolution, which is mathematically a completely positive~\cite{choi1975completely,jamiolkowski1972linear,nielsen2010quantum} trace-preserving map \(\mathcal{E}\) on density operators. Specifically, \(\mathcal{E}\) has the non-unique form \(\mathcal{E}(\rho)=\sum_iE_i\rho E_i^\dagger\) such that $\sum_iE_i^\dagger E_i=I$. The operators $\{E_i\}$ are called Kraus operators.
In particular, a \textbf{quantum gate} corresponds to a unitary quantum evolution that can be described by a unitary channel:
\(\mathcal{E}(\rho)=U\rho U^\dag\), where \(U\) is a unitary matrix (i.e., \(U^\dagger U = I\)). \textbf{Qubit} (quantum bit) has a 2-dimensional Hilbert space $\mathcal{H}^2$ as its state space and $\{\ket{0},\ket{1}\}$ as the \textbf{computational basis}. The state space of $n$ qubits is the tensor product of all state spaces of each qubit.

A \textbf{quantum measurement} is described by a set of linear operators $\{M_i\}$ such that $\sum_i M_i^\dagger M_i=I$. When measuring a state $\rho$, the probability of outcome $i$ is $p_i=\tr(M_i\rho M_i^\dagger)$, after which the state collapses to $\rho'=M_i\rho M_i^\dagger/p_i$.

\subsection{Pauli Operator and Stabilizer Formalism}\label{sec-5191611}

The following unitaries are called the Pauli matrices:
\(X=\inlinemat{0 & 1 \\1 & 0}, Y=\inlinemat{0 & -i \\ i & 0}, Z=\inlinemat{1 & 0 \\ 0 & -1}\).
An \(\bm{n}\)\textbf{-qubit Pauli operator} $P$ is a tensor product of $n$ Pauli matrices $\sigma_i\in \{I=\inlinemat{1 & 0 \\ 0 & 1},X,Y,Z\}$ with a global phase $c\in\{1,-1,i,-i\}$: 
\(P=c\cdot \sigma_1\otimes\dots\otimes \sigma_n.\)
The \textbf{weight of the Pauli operator} \(P\) is the number of indices \(i\) such that \(\sigma_i\neq I\). Note that the set of all $n$-qubit Pauli operators forms a group \(\mathcal{P}_n\) under matrix multiplication. 
The \textbf{vector representation}~\cite{gottesman2004simulation,nielsen2010quantum} of the unsigned Pauli operator $\sigma_1\otimes\cdots \otimes \sigma_n$ is a $2n$-length vector $[x_1,\ldots,x_n, z_1,\ldots, z_n]$ where $x_i,z_i\in \{0,1\}$ and the pair $(x_i,z_i)$ corresponds to $(0,0),(0,1),(1,0),(1,1)$ representing $\sigma_i = I,Z,X,Y$, respectively.

A state \(\ket{\psi}\) is stabilized by a unitary $U$ if $U\ket{\psi}=\ket{\psi}$ and $U$ is called the stabilizer of $\ket{\psi}$.
Suppose $\{P_1,\dots,P_m\}$ is a set of $m$ commuting and independent\footnote{Each \(P_j\) cannot be written as a product of others.} $n$-qubit Pauli operators such that \(P_i\neq \pm I\) and $P_i^2 \neq-I$. These Pauli operators generate a group \(\mathcal{G}=\langle P_1,\ldots, P_m\rangle\) by matrix multiplication. Suppose $V$ is the subspace containing the states stabilized by elements in \(\mathcal{G}\), i.e., \(V:=\{\ket{\psi}\ |\ \forall P\in \mathcal{G}, P\ket{\psi}=\ket{\psi}\}\), then \(\mathcal{G}\) is called the \textbf{stabilizer group} of \(V\), and the group elements \(P\in\mathcal{G}\) are called the \textbf{stabilizers} of \(V\). Particularly, $P_1,\dots,P_m$ are called the \textbf{generators}.
Note that $V$ is a $2^{n-m}$-dimensional subspace. In particular, if $m=n$, then $V$ is a one-dimensional subspace.

An \(n\)-qubit unitary \(U\) is called an \textbf{Clifford} unitary if it maps Pauli operators to Pauli operators under conjugation, i.e., 
\(\forall P\in \mathcal{P}_n,\,\, UPU^\dagger\in \mathcal{P}_n\). 
All \(n\)-qubit Clifford unitaries can be generated by the $H$, $S$, and $\textup{CNOT}$ gates~\cite{gottesman2004simulation,nielsen2010quantum}.
There is a special type of quantum circuit called the \textbf{stabilizer circuits}, which contains only the following elements: 1) preparing qubits into the computational basis; 2) applying Clifford gates; 3) measuring qubits in the computational basis.
Stabilizer circuits are vastly utilized in quantum error correction and can be simulated efficiently by classical computers~\cite{gottesman1998heisenberg}.

\vspace{-10pt}\subsection{Quantum Error Correction}

Quantum computers in the real world suffer errors that will significantly impact the outcome.
Due to the no-cloning theorem~\cite{nielsen2010quantum}, it is impossible to duplicate quantum states to resist errors. Instead, we employ quantum error correction codes to encode quantum information, which can detect and correct errors.
A \textbf{quantum error correction code} (QECC) $\mathcal{C}$ uses \(n\) physical qubits ($\mathcal{H}^{2^n}$) to encode \(k\) logical qubits ($\mathcal{H}^{2^k}$). Specifically, it assigns a $2^k$-dimensional subspace \(\mathcal{H}'\subseteq \mathcal{H}^{2^n}\), called the \textbf{code space}, and an isomorphism between \(\mathcal{H}'\) and \(\mathcal{H}^{2^k}\). If it is clear from the context, we will use \(\mathcal{C}\) to refer to both the QECC and its code space.
The error correction process of a QECC consists of the following steps: 
\begin{itemize}
\vspace{-5pt}
    \item \textbf{Syndrome measurement}: applying a list of measurements on physical qubits to detect potential errors. 
    \item \textbf{Recovery}: performing recovery operations based on the outcome of the syndrome measurement.
    \vspace{-5pt}
\end{itemize}

The vast majority of QECC are \textbf{stabilizer codes}. For a stabilizer code \(\mathcal{C}\), its code space is the subspace stabilized by a stabilizer group $\mathcal{G}=\langle P_1,\dots,P_m\rangle$, where \(P_1,\ldots,P_m\) are Pauli operators. Note that the code space is of dimension \(2^{n-m}\). Therefore, it encodes \(k=n-m\) logical qubits. The error syndromes can be obtained by performing Pauli measurement of the generators \(P_1,\ldots,P_m\). 
A Pauli error $E$ that anticommutes with a generator can be detected by the measurement of this generator. 
The code distance $d$ is defined as the smallest integer $t$ such that there exists a Pauli operator of weight \(t\) acting as a non-trivial logical operation on the code space \(\mathcal{C}\).

Throughout this paper, the \textbf{code parameters} of stabilizer codes are denoted as \(\bm{[[n,k,d]]}\), where \(n,k\) and \(d\) stand for the number of physical qubits, number of encoded logical qubits and code distance, respectively.

\vspace{-10pt}\subsection{Error Propagation and Fault-Tolerance}

We first introduce the following definition.
\begin{definition}\label{def-1311407}
The \(\bm{r}\)\textbf{-error space surrounding} \(\ket{\psi}\) is defined as
\[\mathcal{S}_r(\ket{\psi}) \coloneqq \spanspace(\{P\ket{\psi} \,\, |\,\,  P \textup{ is a Pauli operator of weight at most } r\}).\]
\end{definition} 
We say that a quantum state \(\rho\) has at most \(\bm{r}\) \textbf{errors w.r.t.} \(\bm{\vert\psi\rangle}\) if \(\supp(\rho) \subseteq \mathcal{S}_r(\ket{\psi})\). We may ignore \(\ket{\psi}\) if it is clear from the context. 
We can further generalize the notion of error space for any pure state $\ket{\psi}\in \mathcal{H}_1\otimes \cdots\otimes \mathcal{H}_m$ in a composite system by defining:
\[\mathcal{S}_{r_1,\ldots,r_m}(\ket{\psi})\coloneqq \spanspace(\{P\ket{\psi}\,|\, P\,\,\textup{is a Pauli of weight at most}\,\, r_i\,\, \textup{on block}\,\, \mathcal{H}_i\}).\]
We say $\rho$ has at most $\bm{r_i}$ \textbf{errors on block} $\bm{\mathcal{H}_i}$ \textbf{w.r.t.} $\bm{\vert\psi\rangle}$ if $\supp(\rho)\subseteq \mathcal{S}_{r_1,\ldots,r_m}(\ket{\psi})$.

\begin{figure}[ht]
    % \vspace{-9mm}
     \begin{subfigure}[b]{0.48\linewidth}
     \centering
     \includegraphics[width=1.0\linewidth]{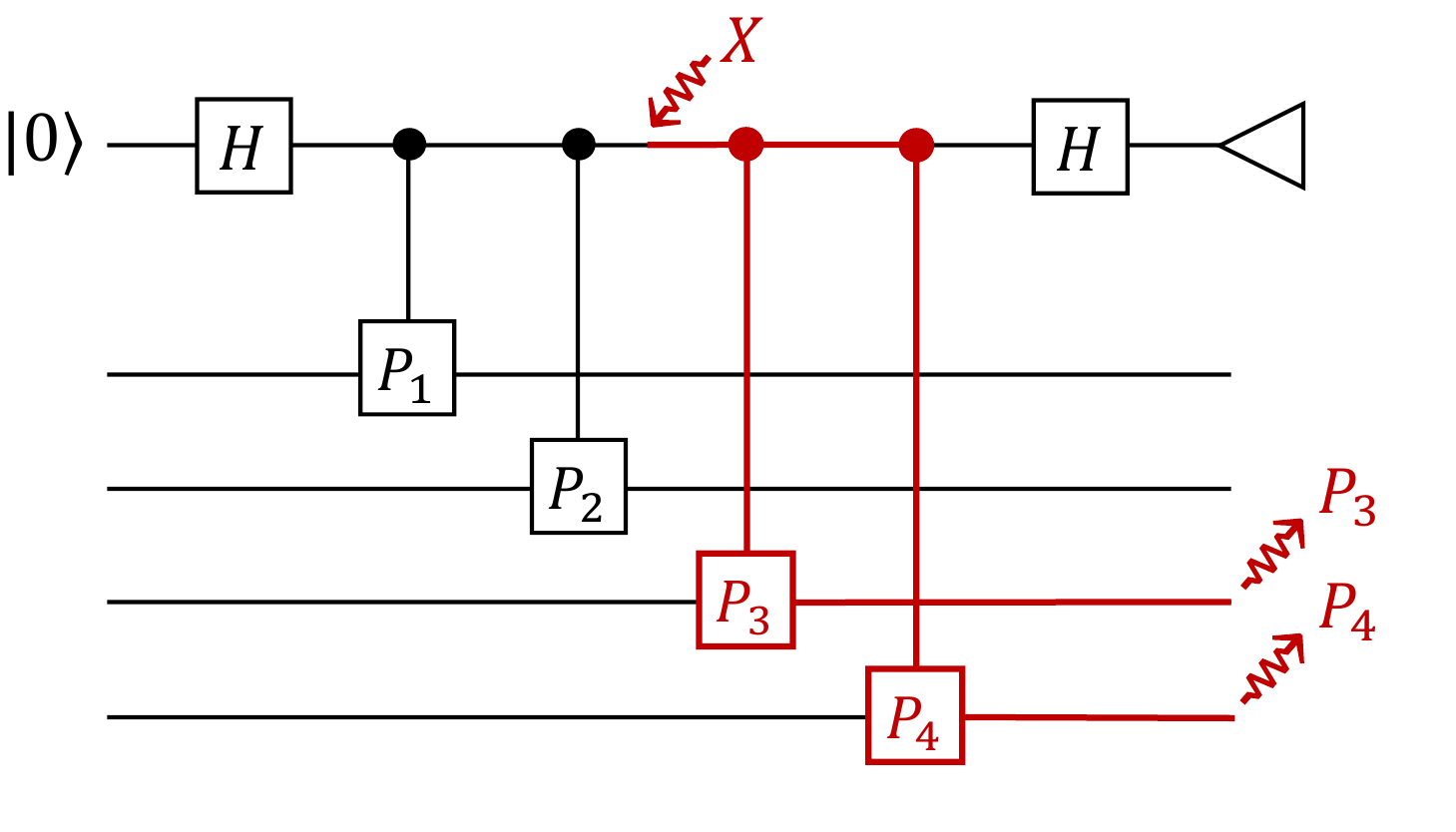}
     \vspace{-15pt}
     \caption{Multi-qubit Pauli measurement using a single ancilla, in which a Pauli-\(X\) error on the ancilla qubit will propagate to two Pauli errors \(P_3, P_4\) on the data qubits.}
     \label{fig-multi-Pauli-sing-anc-1181605}
     \end{subfigure}
     \hfill
     \begin{subfigure}[b]{0.4\linewidth}
     \centering
     \includegraphics[width=1.0\linewidth]{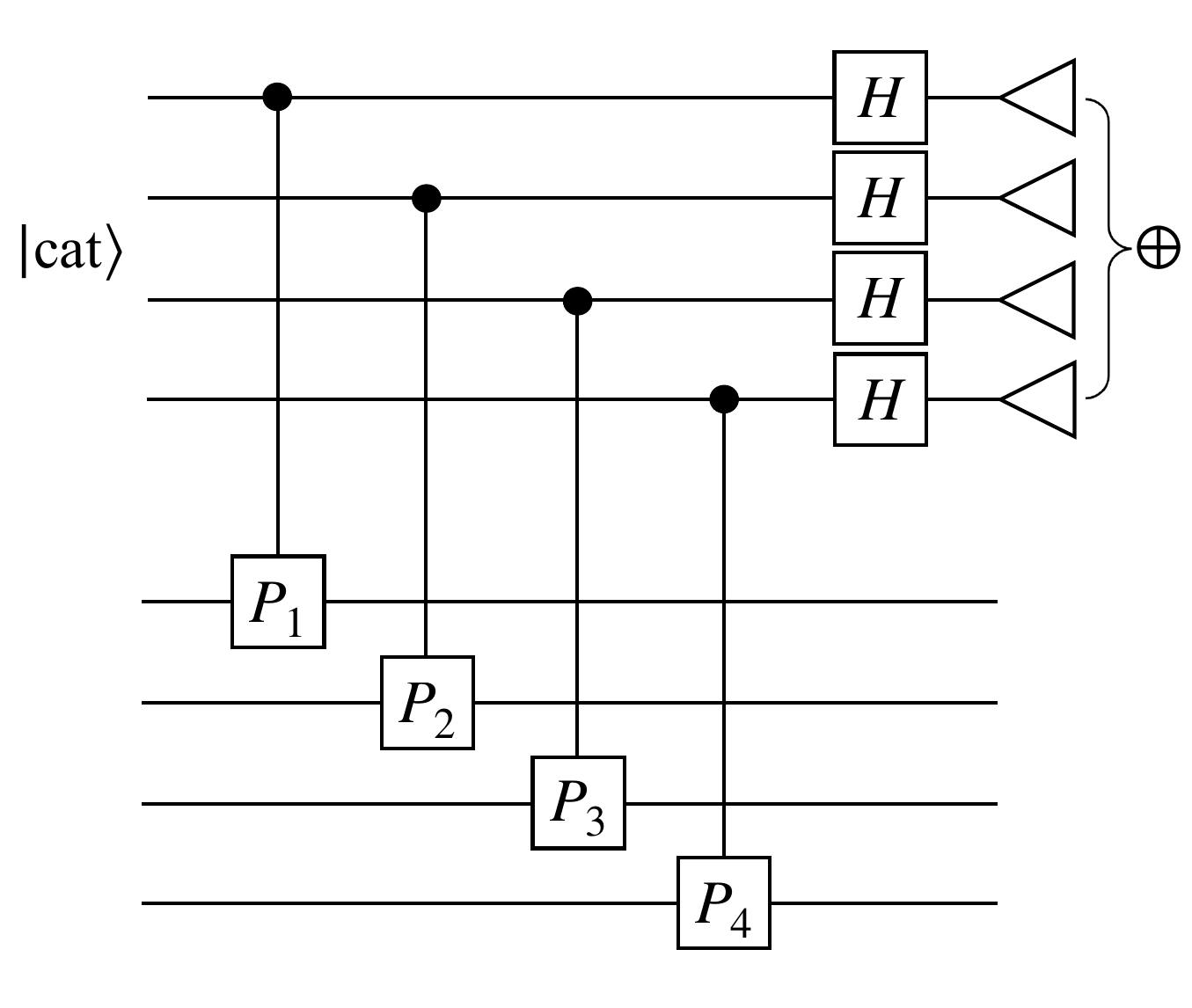}
          \vspace{-15pt}
     \caption{Multi-qubit Pauli measurement using the cat state.}
     \label{fig-multi-Pauli-cat-1181649}
     \end{subfigure}
      \vspace{0mm}
     \caption{Two implementations of multi-qubit Pauli measurement \(P_1\otimes P_2\otimes P_3\otimes P_4\).}
     \vspace{-3mm}
\end{figure}

While QECCs effectively detect existing errors on quantum states, the error correction process and logical operations can introduce new errors during executions. Even worse, errors may propagate among the qubits. For example, \cref{fig-multi-Pauli-sing-anc-1181605} shows an error propagation in an implementation of a multi-qubit Pauli measurement (which corresponds to a logical Pauli measurement). In contrast, \cref{fig-multi-Pauli-cat-1181649} shows a ``transversal'' implementation, where the ancilla qubits are initialized to the cat state \(\ket{\textup{cat}}=(\ket{0000}+\ket{1111})/\sqrt{2}\). Here, a single error will not propagate to two or more errors, which, roughly speaking, is called \textbf{fault-tolerance}~\footnote{The formal definition of fault-tolerance is given in \cref{sec:fault-tolerance}}. 
However, being entangled, the cat state itself cannot be prepared transversally, requiring a more elaborate design (see \cref{sec-cat-state-1212245}) for fault-tolerance. 
This leads to a fundamental question:

    \vspace{2pt}
\noindent\fbox{%
    \parbox{0.95\textwidth}{%
        \textit{Given a QECC and an implementation of a logical operation on it, how can we tell if it is fault-tolerant?}
    }
}   \vspace{2pt}

It turns out that manually checking the fault-tolerance of QECC implementation is highly non-trivial because the prover needs to iterate over all possible error combinations and propagation paths. 
In the rest of this paper, we will introduce our new theory and tools to formalize and verify the fault-tolerance of QEC code implementations.

\vspace{-3pt}\section{Related Work}

\textbf{Formal logic for quantum computing.} There are extensive works on building formal logic to reason quantum programs, including quantum assertions~\cite{li2020projection}, quantum Hoare logic~\cite{ying2012floyd,ying2019formalverfihoare,ying2021hoareclavar}, the logic for erroneous quantum programs~\cite{tao2021gleipnir,hung2019quantitative}, ZX-calculus~\cite{robert2022equivalence,robert2023equiv}, mechanized verification~\cite{shi2021symbolic,ying2023coqq,ying2019formalverfihoare,paykin2017qwire,hietala2020proving}, etc. However, most of these works require human interaction, and none focus on fault-tolerance. In contrast, our work formally characterizes QEC fault-tolerance and provides an automated verification tool.

\noindent\textbf{Automatic testing and verification of quantum programs.} Other works focus on proving properties of quantum programs based on certain automatic techniques, including quantum symbolic execution~\cite{nan2023quantumsymbolicexecution,fang2024symbolic,bauer2023symqv,fang2024symphase}, equivalence checking of quantum programs~\cite{Burgholzer2021equiv,Burgholzer2022nonunitary,ying2022equivseq,hong2021approximate}, program invariance generation~\cite{wu2017invariants,Hung2019OnTT}, automatic testing and verification~\cite{chen2024automatic,gu2022giallar,shi2020certiq,guan2024measurement}, etc. However, these tools are not designed to target the fault-tolerance of QECCs.

\vspace{-8pt}\section{Formalism of Quantum Fault-Tolerance}\label{sec:fault-tolerance}

Before verifying the fault-tolerance, we first formalize the quantum fault-tolerance by extending the language of classical-quantum programs~\cite{ying2010flowchart,ying2024practical} with faulty transitions and then formalizing the QEC fault-tolerance criteria.

\subsection{Classical-Quantum Program (\cqprog)}

Note that QEC protocols contain not only unitary circuits but also include mid-circuit measurements with sophisticated classical control flows (e.g., classical oracles, branches, and repeat-until-success). Therefore, it is necessary to adapt a language with classical components. In this paper, we select the classical-quantum program~\cite{ying2012floyd,fang2024symbolic,ying2024practical,ying2010flowchart}.

\begin{definition}[Syntax of Classical-Quantum Programs (\cqprog)]\label{def:syntax-qprog}

A classical-quantum program (\cqprog) on qubits $q_1,\dots,q_n$ with classical variables and \textit{repeat-until-success} structure is defined by the following syntax:\vspace{-5pt}
%\begin{equation*}
\begin{align*}
S ::=&\ S_1;S_2 & (\textit{sequence})\quad &|\ q:=\ket{0} & (\textit{initialization}) \\
|&\ U(\overline{q}) & (\textit{unitary})\quad &|\ x:=\meas\ q & (\textit{measurement}) \\
|&\ x:=e & (\textit{assign})\quad &|\ \ifelse{b}{S_1}{S_2} & (\textit{conditionals})\\
|&\ y:=f(x) & (\textit{classical oracle})\quad&|\ \repeatuntil{S}{b} &  (\textit{repeat-until})
\end{align*}
\vspace{-20pt}
%\end{equation*}
\end{definition}
The notations used in \cref{def:syntax-qprog} are explained as follows:
$q$ is a qubit, $x$ and $y$ are classical variables, $e$ is an expression on classical variables, $f$ is a classical oracle, $\overline{q} = q_i,\dots,q_j$ is an arbitrary list of distinct qubits; 
$U(\overline{q})$ performs a unitary gate \(U\) on a list of qubits $\overline{q}$; 
$x:=\meas\ q$ performs a computational-basis measurement $\{\ketbra{0}{0},\ketbra{1}{1}\}$ on the qubit $q$ and stores the result ($0$ or $1$) in the classical variable $x$; 
$\repeatuntil{S}{b}$ performs the quantum program $S$ until the Boolean expression $b$ is evaluated as true.

Then, the execution of a classical-quantum program can be described in terms of the transition between the classical-quantum configurations.
\begin{definition}[Classical-Quantum Configuration]\label{def:config-qprog}
A classical-quantum configuration is a triple $\cqconfig{S,\sigma,\rho}$, where $S$ is a \cqprog, $\sigma:x\mapsto v$ is a map from classical variables to classical values, and $\rho$ is a partial density operator describing the (sub-normalized) quantum state of the qubits.

\end{definition}

\begin{figure}[ht]
    \centering
    \vspace{0pt}
    \begin{gather*}
(\textup{IN}) \quad \cqconfig{q:=\ket{0},\sigma,\rho}\rightarrow\cqconfig{\downarrow,\sigma,\rho_{q:=\ket{0}}}\quad\quad 
(\textup{UT}) \quad \cqconfig{U(\overline{q}),\sigma,\rho}\rightarrow\cqconfig{\downarrow,\sigma,U_{\overline{q}}\rho U_{\overline{q}}^\dagger}\\
(\textup{M0}) \quad \cqconfig{x:=\meas\ q,\sigma,\rho}\rightarrow\cqconfig{\downarrow,\sigma[0/x],\ket{0}_{q}\bra{0}\rho\ket{0}_{q}\bra{0}} \\
(\textup{M1}) \quad \cqconfig{x:=\meas\ q,\sigma,\rho}\rightarrow\cqconfig{\downarrow,\sigma[1/x],\ket{1}_{q}\bra{1}\rho\ket{1}_{q}\bra{1}}\\
(\textup{AS})\quad \cqconfig{x:=e,\sigma,\rho}\rightarrow\cqconfig{\downarrow,\sigma[\sigma(e)/x],\rho}\quad\quad (\textup{CO})\quad \cqconfig{y:=f(x),\sigma,\rho}\rightarrow\cqconfig{\downarrow,\sigma[f(x)/y],\rho}\\
(\textup{SC})\quad \inference{\langle S_1,\sigma,\rho\rangle\rightarrow \langle S_1',\sigma',\rho'\rangle}{\langle S_1;S_2,\sigma,\rho\rangle\rightarrow \langle S_1';S_2,\sigma',\rho'\rangle}\\
(\textup{CT})\quad \inference{\sigma\models b}{\langle \ifelse{b}{S_1}{S_2},\sigma,\rho\rangle\rightarrow \langle S_1,\sigma,\rho\rangle}\\
(\textup{CF})\quad \inference{\sigma\models \neg b}{\langle \ifelse{b}{S_1}{S_2},\sigma,\rho\rangle\rightarrow \langle S_2,\sigma,\rho\rangle}\\
\textup{(RU)} \quad \langle \repeatuntil{S}{b}, \sigma, \rho\rangle \rightarrow \langle S; \ifelse{b}{\downarrow}{\{\repeatuntil{S}{b}\}}, \sigma, \rho\rangle 
%\textup{ where } S'=\repeatuntil{S}{b} 
\end{gather*}
    \vspace{-15pt}
    \caption{Rules for the ideal transition relation. In rule \textup{(IN)}, \(\rho_{q:=\ket{0}}= \ket{0}_q\bra{0}\rho\ket{0}_q\bra{0}+\ket{0}_q\bra{1}\rho\ket{1}_q\bra{0}\) where we use $\ket{i}_q$ to denote the pure state $\ket{i}$ at qubit $q$ and $\ket{i}_q\bra{j}$ is the abbreviation of product of $\ket{i}_q$ and $\,_q\bra{j}$; in rule \textup{(UT)}, \(U_{\overline{q}}\) means a unitary that acts as \(U\) on \(\overline{q}=q_i\ldots q_j\), and acts trivially on other qubits. }
    \vspace{-10pt}
    \label{fig:idealtrans-qprog}
\end{figure}

We can now define the operational semantics of a \cqprog{} by specifying the transition relationship between the classical-quantum configurations.
The rules in Fig.~\ref{fig:idealtrans-qprog} define the ideal transition relation ``$\rightarrow$'' between classical-quantum configurations.

\subsection{Execution with Faults}
We now inject faults into ideal executions.
We assume that all the classical operations (i.e., (AS), (CO), (CT), (CF), and (RU)) are flawless, and only quantum operations (i.e., (IN), (UT), (M0), and (M1)) can be faulty. When a fault happens, the original quantum operation is replaced by an arbitrary quantum operation acting on the same qubits. The corresponding transition is called faulty transition.

\begin{definition}[Faulty Transition]\label{def:faulttrans-qprog}
The following rules define the faulty transition ``$\leadsto$'' between classical-quantum configurations when faults occur during the execution of a quantum program:\vspace{-5pt}
\begin{gather*}
(\textup{F-IN}) \quad \langle q:=\ket{0} ,\sigma,\rho\rangle \leadsto \langle \downarrow, \sigma,\mathcal{E}_q(\rho)\rangle \quad\quad\,\,
(\textup{F-UT})\quad \langle U(\overline{q}),\sigma,\rho\rangle \leadsto \langle \downarrow,\sigma, \mathcal{E}_{\overline{q}}(\rho) \rangle\\
(\textup{F-M0})\quad \langle x:=\meas\,\, q,\sigma,\rho\rangle\leadsto \langle \downarrow, \sigma[0/x] , \mathcal{E}_{0,q}(\rho)\rangle\\
(\textup{F-M1})\quad \langle x:=\meas\,\, q,\sigma,\rho\rangle\leadsto \langle \downarrow, \sigma[1/x] , \mathcal{E}_{1,q}(\rho)\rangle
\end{gather*}
where in rules \textup{(F-IN)} and \textup{(F-UT)}, \(\mathcal{E}_q\) and \(\mathcal{E}_{\overline{q}}\) can be arbitrary trace-preserving quantum operations that act non-trivially only on \(q\) and \(\overline{q}\), respectively; in rules \textup{(F-M0)} and \textup{(F-M1)}, \(\mathcal{E}_{0,q}\) and \(\mathcal{E}_{1,q}\) can be arbitrary (non-trace-preserving) quantum operations~\footnote{It is worth noting that \(\textup{(F-M0)}\) and \(\textup{(F-M1)}\) also cover the case of bit-flip error on the measurement outcome, since it is equivalent to letting \(\mathcal{E}_{0,q}(\rho)=\ket{1}_q\bra{1}\rho\ket{1}_q\bra{1}\) and \(\mathcal{E}_{1,q}=\ket{0}_{q}\bra{0}\rho\ket{0}_{q}\bra{0}\).} acting non-trivially only on \(q\). We omit the rule \(\textup{(F-SC)}\) here as it is simply analogous to the rule \(\textup{(SC)}\) in \cref{fig:idealtrans-qprog}.
\end{definition}

Now, we can describe the execution of a \cqprog\  with faults using the following faulty transition tree.
\begin{definition}[Transition Tree with Faults]\label{def-faultytree}
Let \(S_0\) be a \cqprog, \(\sigma_0\) be a classical state and \(r\geq 0\) be an integer. An \(r\)-fault transition tree \(\mathcal{T}\) starting with \((S_0,\sigma_0)\) is a (possibly infinite) tree with classical-quantum configurations as nodes and satisfies the following conditions:
\begin{enumerate}
\vspace{-3pt}
    \item The root is \(\langle S_0,\sigma_0,\widetilde{\rho}\rangle\), where \(\widetilde{\rho}\) is an indeterminate quantum state.
    \item A node is of the form \(\langle \downarrow,\sigma,\rho\rangle\) iff it is a leaf node.
    \item A node is of the form \(\langle x:=\meas \,\, q ;S,\sigma,\rho\rangle\) iff it has two children, either generated by \textup{(M0)} and \textup{(M1)}, or generated by \((\textup{F-M0})\) and \((\textup{F-M1})\). Furthermore, for the latter case, the quantum operations \(\mathcal{E}_{0,q}\) and \(\mathcal{E}_{1,q}\) in the transition rules satisfy that \(\mathcal{E}_{0,q}+\mathcal{E}_{1,q}\) is trace-preserving. 
    \item If a node $v$ is not of the form \(\langle \downarrow,\sigma,\rho\rangle\) or \(\langle x:=\meas \,\, q ;S,\sigma,\rho\rangle\), then it has only one child $v'$ such that $v\rightarrow v'$ or $v\leadsto v'$ (cf. Fig. \ref{fig:idealtrans-qprog} and Def. \ref{def:faulttrans-qprog}).
    \item Each path contains at most \(r\) faulty transitions ``\(\leadsto\)''.
    \vspace{-3pt}
\end{enumerate}
Let $\rho_0$ be a concrete quantum state. We use $\mathcal{T}(\rho_0)$ to denote the tree obtained by replacing the indeterminate quantum state $\widetilde{\rho}$ with $\rho_0$ in the root node.
\end{definition}

If we fix a \cqprog\ \(S\) and a classical state \(\sigma_0\), then any \(r\)-fault transition tree \(\mathcal{T}\) starting with \((S_0,\sigma_0)\) corresponds to a well-defined quantum channel.
\begin{proposition}\label{prop-1250405}
Let \(S_0\) be a quantum program, \(\sigma_0\) be a classical state and \(r\geq 0\). If \(\mathcal{T}\) is an \(r\)-fault transition tree starting with \((S_0,\sigma_0)\), then the map \(\mathcal{E}_{\mathcal{T}}\): \(\rho\mapsto \sum_{\langle\downarrow,\sigma,\rho'\rangle \in\mathcal{T}(\rho)}\rho'\)
is a quantum channel (not necessarily trace-preserving).
\end{proposition}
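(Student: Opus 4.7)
The plan is to realize $\mathcal{E}_{\mathcal{T}}$ as the monotone limit of finite-depth truncations of the tree, leveraging the fact that each individual transition carries $\rho$ forward by a completely positive map (either a trace-preserving one for single-child transitions, or a pair $\mathcal{E}_{0,q},\mathcal{E}_{1,q}$ whose sum is trace-preserving at measurement nodes), and then passing to the limit via monotone convergence in the finite-dimensional operator cone.

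Concretely, I would define the depth-$n$ truncation
\[
\mathcal{E}_{\mathcal{T}}^{(n)}(\rho) \;\coloneqq\; \sum_{\substack{\langle\downarrow,\sigma,\rho'\rangle\in\mathcal{T}(\rho)\\ \text{depth}\leq n}} \rho'
\]
and prove by induction on $n$ two invariants: (i) $\mathcal{E}_{\mathcal{T}}^{(n)}$ is a finite sum of compositions of CP linear maps on the underlying finite-dimensional Hilbert space, hence itself linear and completely positive; and (ii) writing $L_n$ for the leaves of depth $\leq n$ and $N_n$ for the non-leaf nodes at depth exactly $n$, the total mass satisfies $\tr(\sum_{v\in L_n}\rho_v)+\tr(\sum_{v\in N_n}\rho_v)=\tr(\rho)$. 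The inductive step for (ii) splits on the transition rule applied at each $v\in N_n$: every single-child step acts on the quantum register either trivially (the classical rules) or by a trace-preserving CP map (the ideal rules and the faulty rules \textup{(F-IN)}, \textup{(F-UT)}, which are explicitly required to be trace-preserving in Definition \ref{def:faulttrans-qprog}), while a measurement node has two children whose contributions add to a trace-preserving CP map by the third clause of Definition \ref{def-faultytree}. Invariant (ii) immediately yields $\tr(\mathcal{E}_{\mathcal{T}}^{(n)}(\rho))\leq \tr(\rho)$ uniformly in $n$.

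Because deepening the truncation only appends new positive summands without disturbing previously counted leaves, $\{\mathcal{E}_{\mathcal{T}}^{(n)}(\rho)\}_{n\geq 0}$ is monotone non-decreasing in the L\"owner order; together with the uniform trace bound, this forces convergence (in any operator norm) to a positive operator $\mathcal{E}_{\mathcal{T}}(\rho)$ with $\tr(\mathcal{E}_{\mathcal{T}}(\rho))\leq \tr(\rho)$. Linearity passes to the pointwise limit, and the cone of completely positive maps is closed under pointwise limits in finite dimensions, so $\mathcal{E}_{\mathcal{T}}$ is a linear, completely positive, trace-non-increasing map, i.e., a quantum channel in the (not necessarily trace-preserving) sense of the proposition.

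The only real obstacle is that $\mathcal{T}$ may be infinite due to the repeat-until-success rule \textup{(RU)}, so the sum defining $\mathcal{E}_{\mathcal{T}}(\rho)$ is a priori an infinite series of positive operators whose convergence must be justified; the truncation-plus-monotone-bounded argument is precisely what supplies that justification. All remaining parts are routine, since the transition rules in \cref{fig:idealtrans-qprog} and Definition \ref{def:faulttrans-qprog} are tailored to be CP on every step.
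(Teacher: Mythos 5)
Your proposal is correct and follows essentially the same route as the paper's proof: truncate the tree at finite depth, show the truncations are completely positive with a uniform trace bound coming from trace preservation over terminated plus non-terminated nodes, observe L\"owner-monotonicity as the depth grows, and pass to the limit. The only cosmetic difference is that the paper packages the limiting step via Choi--Jamio{\l}kowski matrices while you take pointwise limits and invoke closedness of the CP cone in finite dimensions, which are equivalent formulations of the same argument.
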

The proof can be found in \cref{sec:proof-prop1}. %\cite[\cref{sec:proof-prop1}]{chen2025verifying}.

Note that the semantics of a \cqprog{}\ in the presence of faults is nondeterministic. This is because there can be infinite many \(r\)-fault transition trees starting with \((S_0,\sigma_0)\), where each transition tree corresponds to a valid semantics with \(r\) faults.
In this context, each tree $\mathcal{T}$ is also referred to as an $\bm{r}$\textbf{-fault instantiation} of \((S_0,\sigma_0)\), or simply an instantiation of \(S_0\) with \(r\) faults, if \(\sigma_0=\emptyset\).

\subsection{Formalization of Quantum Fault-Tolerance}

Given the definition of quantum gadgets and \cqprog, we can formally define quantum fault-tolerance developed in previous QEC theory~\cite{gottesman2009introqec,gottesman2024surviving}.
Different from their definitions, we separate the fault-tolerance property of a QEC program from its semantical correctness in the ideal case, and we further assume that a QEC program always produces the correct semantics in the ideal case.

A QECC usually consists of four types of \textbf{gadgets}:
(1). \textbf{Preparation}: prepare a logical state, e.g., the logical state $\ket{\overline{0}}$.
(2). \textbf{Measurement}: measure on a logical basis, e.g., the logical \(Z\)-basis measurement \(\{\ketbra{\overline{0}}{\overline{0}},\ketbra{\overline{1}}{\overline{1}}\}\).\footnote{A bar on a ket (pure state) denotes the corresponding logical state w.r.t. a QECC.}
(3). \textbf{Gate}: perform a logical quantum gate, e.g., the logical CNOT gate.
(4). \textbf{Error correction}: detect and correct physical errors in the logical state.
We provide some circuit implementation examples of the gadgets in \cref{sec:713-cc}. %\cite[\cref{sec:713-cc}]{chen2025verifying}.

In the rest this section, we fix a QECC \(\mathcal{C}\) with distance \(d\) and set \(t\coloneqq \lfloor \frac{d-1}{2} \rfloor\) as the \textbf{number of correctable errors}.
The following symbols will be used:
\begin{itemize}
\vspace{-5pt}
    \item ``$\begin{picture}(17,10)\thicklines
\put(1,1.1){\line(1,0){15}}
\put(1,2.95){\line(1,0){15}}
\thicklines\end{picture}$'' stands for a logical block of $\mathcal{C}$ formed by physical qubits.
    \item $\begin{picture}(40,12)
\thicklines
\put(1,0){\(\ket{\psi}\)}
\put(15,1){\line(1,0){5}}
\put(15,3){\line(1,0){5}}
\put(27,-5){\line(1,1){7}}
\put(27,-5){\line(-1,1){7}}
\put(27,9){\line(1,-1){7}}
\put(27,9){\line(-1,-1){7}}
\put(27,1){\makebox(0,0){$r$}}
\put(33,1){\line(1,0){5}}
\put(33,3){\line(1,0){5}}
\thicklines\end{picture}$ stands for an arbitrary quantum mixed state having at most $r$ errors w.r.t. $\ket{\psi}$ (see Def. \ref{def-1311407}). When multiple $\begin{picture}(15,12)
\thicklines
\put(7,-5){\line(1,1){7}}
\put(7,-5){\line(-1,1){7}}
\put(7,9){\line(1,-1){7}}
\put(7,9){\line(-1,-1){7}}
\put(7,1){\makebox(0,0){$r_i$}}
\thicklines\end{picture}$ are applied on multiple blocks, they together denote a quantum mixed state having at most $r_i$ errors on the $i$-th block, respectively. %weight-\(r\) error channel. 
    \item $\begin{picture}(17,18)
\thicklines
\put(11,2){\oval(20,20)[l]}
\put(11,-8){\line(0,1){20}}
\put(9.5,2){\oval(13,15)[l]}
\put(9.5,-5.5){\line(0,1){15}}
\put(11,1){\line(1,0){5}}
\put(11,3){\line(1,0){5}}
\end{picture}\vspace{2mm}$, $\begin{picture}(17,18)
\thicklines
\put(1,1){\line(1,0){5}}
\put(1,3){\line(1,0){5}}
\put(6,2){\oval(20,20)[r]}
\put(6,-8){\line(0,1){20}}
\put(7.5,2){\oval(13,15)[r]}
\put(7.5,-5.5){\line(0,1){15}}
\end{picture}$, $\begin{picture}(24,18)
\thicklines
\put(1,9){\line(1,0){5}}
\put(1,7){\line(1,0){5}}
\put(1,-3){\line(1,0){5}}
\put(1,-5){\line(1,0){5}}
\put(12,2){\oval(12,25)}
\put(12,2){\oval(8,21)}
\put(18,9){\line(1,0){5}}
\put(18,7){\line(1,0){5}}
\put(18,-3){\line(1,0){5}}
\put(18,-5){\line(1,0){5}}
\end{picture}$ and  $\begin{picture}(32,18)
\thicklines
\put(1,1){\line(1,0){5}}
\put(1,3){\line(1,0){5}}
\put(6,-8){\framebox(20,20){\small\text{\!EC}}}
\put(8,-6){\framebox(16,16){}}
\put(26,1){\line(1,0){5}}
\put(26,3){\line(1,0){5}}
\end{picture}$
stand for the preparation, measurement, (two-qubit) gate, and error correction gadgets, respectively.
    \vspace{1pt}\item ``$s$'' near a gadget indicates $s$ faults appear in the execution of the gadget. The gadget is ideal (fault-free) if no number is present near it.  
    \vspace{-2pt}
\end{itemize}

%We use these symbols to introduce the formalization of fault tolerance for these gadgets.
Then, we introduce the formalization of fault-tolerance. We only present the two-qubit case for gate gadgets, which can be easily extended to \(n\)-qubit cases.
% Here, we only present the case for two-qubit gates, 
\begin{definition}[Fault-Tolerance of QECC Gadgets]\label{def-ft-1310455}

1. A \textbf{state preparation gadget} is fault-tolerant if the following holds:
\begin{equation}\label{eq-FT-prepare-12312218}
\begin{picture}(250,10)
\thicklines

% FT preparation
\put(100,2){\oval(20,20)[l]}
\put(100,-8){\line(0,1){20}}
\put(98.5,2){\oval(13,15)[l]}
\put(98.5,-5.5){\line(0,1){15}}

\put(100,1){\line(1,0){15}}
\put(100,3){\line(1,0){15}}
\put(103,7){\makebox(0,0)[bl]{$s$}}

\put(130,0){$=$}

% FT preparation
\put(160,2){\oval(20,20)[l]}
\put(160,-8){\line(0,1){20}}
\put(158.5,2){\oval(13,15)[l]}
\put(158.5,-5.5){\line(0,1){15}}

\put(160,1){\line(1,0){11}}
\put(160,3){\line(1,0){11}}

% error
\put(177,-5){\line(1,1){7}}
\put(177,-5){\line(-1,1){7}}
\put(177,9){\line(1,-1){7}}
\put(177,9){\line(-1,-1){7}}
\put(177,1){\makebox(0,0){$s$}}

\put(183,1){\line(1,0){10}}
\put(183,3){\line(1,0){10}}
\put(198,0){$.$}

\put(0,0){$\textup{whenever } s \leq t,$}
\end{picture}
\end{equation}
That is, a \cqprog{} \(S\) is a FT preparation gadget if for any $s$-fault instantiation of \(S\), s.t. \(s\leq t\), the prepared state has at most \(s\) errors.

2. A \textbf{two-qubit logical gate gadget} is fault-tolerant if the following holds:\vspace{-10pt}
\begin{equation}\label{eq-FT-gate-12312216}
\begin{picture}(290,30)
\thicklines

\put(100,-3){$\ket{\overline{\psi}}$}

\put(115,11){\line(1,0){11}}
\put(115,13){\line(1,0){11}}

% error
\put(132,5){\line(1,1){7}}
\put(132,5){\line(-1,1){7}}
\put(132,19){\line(1,-1){7}}
\put(132,19){\line(-1,-1){7}}
\put(132,11){\makebox(0,0){\scriptsize $r_1$}}

\put(138,11){\line(1,0){12}}
\put(138,13){\line(1,0){12}}

\put(115,-11){\line(1,0){11}}
\put(115,-13){\line(1,0){11}}

\put(132,-19){\line(1,1){7}}
\put(132,-19){\line(-1,1){7}}
\put(132,-5){\line(1,-1){7}}
\put(132,-5){\line(-1,-1){7}}
\put(132,-13){\makebox(0,0){\scriptsize $r_2$}}

\put(138,-11){\line(1,0){12}}
\put(138,-13){\line(1,0){12}}

% FT gate
\put(160,0pt){\oval(20,50)}
\put(160,0pt){\oval(16,46)}
\put(160,0pt){\makebox(0,0){$U$}}

\put(170,11){\line(1,0){10}}
\put(170,13){\line(1,0){10}}

\put(170,-11){\line(1,0){10}}
\put(170,-13){\line(1,0){10}}

\put(171,20){\makebox(0,0)[bl]{$s$}}

\put(195,0){$=$}

\put(216,-3){$\ket{\overline{\psi}}$}

\put(230,11){\line(1,0){10}}
\put(230,13){\line(1,0){10}}

\put(230,-11){\line(1,0){10}}
\put(230,-13){\line(1,0){10}}

% FT gate
\put(250,0pt){\oval(20,50)}
\put(250,0pt){\oval(16,46)}
\put(250,0pt){\makebox(0,0){$U$}}

\put(260,11){\line(1,0){11}}
\put(260,13){\line(1,0){11}}

% error
\put(277,5){\line(1,1){7}}
\put(277,5){\line(-1,1){7}}
\put(277,19){\line(1,-1){7}}
\put(277,19){\line(-1,-1){7}}
\put(277,11){\makebox(0,0){\scriptsize $e_1$}}

\put(283,11){\line(1,0){10}}
\put(283,13){\line(1,0){10}}

\put(260,-11){\line(1,0){11}}
\put(260,-13){\line(1,0){11}}

\put(277,-19){\line(1,1){7}}
\put(277,-19){\line(-1,1){7}}
\put(277,-5){\line(1,-1){7}}
\put(277,-5){\line(-1,-1){7}}
\put(277,-13){\makebox(0,0){\scriptsize $e_2$}}

\put(283,-11){\line(1,0){10}}
\put(283,-13){\line(1,0){10}}
\put(300,0){$,$}

\put(-27,0){\textup{whenever}\, $r_1+r_2+s \leq t,$}
\end{picture}
\vspace{5mm}
\end{equation}
where $e_1,e_2= r_1+r_2+s$.
That is, a \cqprog{} \(S\) is a FT gate gadget if for any input logical state \(\ket{\overline{\psi}}\) with \(r_1\) and \(r_2\) errors on each logical block and any $s$-fault instantiation of \(S\), s.t. \(r_1+r_2+s\leq t\), the output state has at most \(r_1+r_2+s\) errors on each block.

3. A \textbf{measurement gadget} is fault-tolerant if the following holds: 
\begin{equation}\label{eq-FT-meas-12312216}
\begin{picture}(250,10)
\thicklines

\put(90,0){$\ket{\overline{\psi}}$}

\put(105,1){\line(1,0){11}}
\put(105,3){\line(1,0){11}}
% error
\put(122,-5){\line(1,1){7}}
\put(122,-5){\line(-1,1){7}}
\put(122,9){\line(1,-1){7}}
\put(122,9){\line(-1,-1){7}}
\put(122,1){\makebox(0,0){$r$}}

\put(128,1){\line(1,0){11}}
\put(128,3){\line(1,0){11}}

% FT measurement
\put(139,2){\oval(20,20)[r]}
\put(139,-8){\line(0,1){20}}
\put(140.5,2){\oval(13,15)[r]}
\put(140.5,-5.5){\line(0,1){15}}

\put(150,5){\makebox(0,0)[bl]{$s$}}

\put(170,0){$=$}

\put(190,0){$\ket{\overline{\psi}}$}

\put(205,1){\line(1,0){10}}
\put(205,3){\line(1,0){10}}

% FT measurement
\put(215,2){\oval(20,20)[r]}
\put(215,-8){\line(0,1){20}}
\put(216.5,2){\oval(13,15)[r]}
\put(216.5,-5.5){\line(0,1){15}}
\put(228,0){$.$}

\put(-15,0){$\textup{whenever } r+s\leq t,$}
\end{picture}
\end{equation}
That is, a \cqprog{} \(S\) is a FT measurement gadget if for any input logical state \(\ket{\overline{\psi}}\) with \(r\) errors and any $s$-fault instantiation of \(S\), s.t. \(r+s\leq t\), the measurement outcomes are the same (in distribution) as those in the ideal case.
%\end{definition}

4. An \textbf{error correction gadget} 
is fault-tolerant if the following holds:
\begin{equation}\label{eq-FT-EC-12312217}
\begin{picture}(250,10)
\thicklines

\put(80,0){$\ket{\overline{\psi}}$}

\put(95,1){\line(1,0){11}}
\put(95,3){\line(1,0){11}}
% error
\put(112,-5){\line(1,1){7}}
\put(112,-5){\line(-1,1){7}}
\put(112,9){\line(1,-1){7}}
\put(112,9){\line(-1,-1){7}}
\put(112,1){\makebox(0,0){$r$}}

\put(118,1){\line(1,0){11}}
\put(118,3){\line(1,0){11}}

%FT EC
\put(130,-8){\framebox(20,20){\small\text{\!EC}}}
\put(132,-6){\framebox(16,16){}}
\put(150,1){\line(1,0){10}}
\put(150,3){\line(1,0){10}}
\put(153,7){\makebox(0,0)[bl]{$s$}}

\put(178,0){$=$}

\put(200,0){$\ket{\overline{\psi}}$}

\put(215,1){\line(1,0){11}}
\put(215,3){\line(1,0){11}}

% error
\put(232,-5){\line(1,1){7}}
\put(232,-5){\line(-1,1){7}}
\put(232,9){\line(1,-1){7}}
\put(232,9){\line(-1,-1){7}}
\put(232,1){\makebox(0,0){$s$}}

\put(238,1){\line(1,0){11}}
\put(238,3){\line(1,0){11}}

\put(250,0){$.$}

\put(-25,0){$\textup{whenever } r+s\leq t,$}
\end{picture}
\end{equation}
That is, a \cqprog{} \(S\) is a fault-tolerant error correction gadget if for any input logical state \(\ket{\overline{\psi}}\) with \(r\) errors and any $s$-fault instantiation of \(S\), s.t. \(r+s\leq t\), the output state has at most \(s\) errors.
\end{definition}
\section{Discretization}
Because physical quantum errors are analog (e.g., small imprecisions in rotation angle), the fault-tolerance conditions formalized above are defined over continuous error channels and logical states that are difficult to reason about using computer-aided techniques.
It is essential to break these formulations into discrete yet equivalent forms.
In this section, we introduce two new discretization theorems that allow us to analyze and verify quantum fault-tolerance on \cqprog{} in a systematic way.

\subsection{Discretization of Input Space}
We show that, for fault-tolerance verification, the input space can be discretized from continuous code space into discrete logical basis states.
A similar result of input space discretization was given in \cite{fang2024symbolic} for verifying \textit{fault-free} QEC programs. Our result is a further refinement that uses fewer inputs and applies to fault-tolerance verification of various types of gadgets.

\begin{theorem}[Discretization of Input Space]\label{lemma-dis-input-1272245}
\begin{enumerate}
    \item  A gate or error correction gadget is fault-tolerant if and only if it satisfies the corresponding fault-tolerance properties (c.f. Equations \eqref{eq-FT-gate-12312216}, \eqref{eq-FT-EC-12312217}) on both the logical computational basis \(\{\ket{\overline{i}}\}_{i=0}^{2^k-1}\) and the logical state \(|\overline{+}\rangle\coloneqq\sum_{i=0}^{2^k-1} \ket{\overline{i}}/\sqrt{2^k}\), where \(k\) is the number of logical qubits.
    \item A \(Z\)-basis measurement gadget is fault-tolerant if and only if it satisfies the corresponding fault-tolerance properties (c.f. Equation \eqref{eq-FT-meas-12312216}) on the logical computational basis \(\{\ket{\overline{i}}\}_{i=0}^{2^k-1}\), where \(k\) is the number of logical qubits.
\end{enumerate}
\end{theorem}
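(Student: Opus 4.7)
The \emph{only if} direction is immediate since the listed states lie in the code space and are covered by the general fault-tolerance conditions. The content is the \emph{if} direction, which I split into item~1 (gate and error correction) and item~2 (measurement); the latter admits a shorter argument via positivity alone.

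For item~1, fix any $s$-fault instantiation $\mathcal{T}$ and one of the Kraus operators $E_j$ of the channel $\mathcal{E}_{\mathcal{T}}$ supplied by \cref{prop-1250405}. Since $\mathcal{S}_r(\ket{\overline{\psi}})$ is spanned by $\{P\ket{\overline{\psi}}\}$ for weight-$\le r$ Paulis $P$, it suffices to show $E_j P\ket{\overline{\psi}}\in\mathcal{S}_e(U\ket{\overline{\psi}})$ for every code state $\ket{\overline{\psi}}=\sum_i\alpha_i\ket{\overline{i}}$. The engine is an orthonormality lemma: writing $\{Q_\beta\}$ for canonical representatives of weight-$\le e$ Paulis modulo the stabilizer group, the family $\{Q_\beta U\ket{\overline{i}}\}_{\beta,i}$ is orthonormal. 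Indeed, distinct representatives must have distinct syndromes, for otherwise $Q_\beta^\dagger Q_{\beta'}$ would be a weight-$<d$ non-stabilizer commuting with every stabilizer, hence a nontrivial logical of weight $<d$, contradicting the code distance; orthogonality across distinct $\beta$ then follows because $Q_\beta^\dagger Q_{\beta'}$ anticommutes with some stabilizer that fixes the code space, and within a fixed $\beta$ the unitary $Q_\beta$ maps the code basis $\{U\ket{\overline{i}}\}_i$ to an orthonormal set.

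Given this lemma, the hypothesis on each $\ket{\overline{i}}$ yields a unique expansion $E_j P\ket{\overline{i}}=\sum_\beta c_\beta^{\,i}\,Q_\beta U\ket{\overline{i}}$, while the hypothesis on $\ket{\overline{+}}$ yields $E_j P\ket{\overline{+}}=\sum_\beta c_\beta^{\,+}\,Q_\beta U\ket{\overline{+}}$. Expanding $\ket{\overline{+}}=2^{-k/2}\sum_i\ket{\overline{i}}$ on both sides of the second identity and invoking orthonormality forces $c_\beta^{\,i}=c_\beta^{\,+}$ for all $i,\beta$. Setting $N_P:=\sum_\beta c_\beta^{\,+}\,Q_\beta$, a linear combination of weight-$\le e$ Paulis, linearity in $\alpha_i$ gives $E_j P\ket{\overline{\psi}}=N_P U\ket{\overline{\psi}}\in\mathcal{S}_e(U\ket{\overline{\psi}})$, as required. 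The two-block gate case is identical, using the multi-indexed basis $\{\ket{\overline{i_1i_2}}\}$ and $\ket{\overline{+}}=2^{-k}\sum_{i_1,i_2}\ket{\overline{i_1i_2}}$ after taking block-wise syndrome representatives so that per-block error counts are tracked correctly. For item~2 no $\ket{\overline{+}}$ is needed: each outcome $j$ is captured by a POVM element $F^{(j)}\succeq 0$, and the hypothesis on each $\ket{\overline{i}}$ translates to $\langle v|F^{(j)}|v\rangle=\delta_{ij}$ for unit $v\in V_i:=\mathcal{S}_r(\ket{\overline{i}})$. Positive semidefiniteness forces $F^{(j)}v=0$ for $v\in V_i$ when $j\ne i$, while $\sum_j F^{(j)}=I$ together with the $j=i$ hypothesis forces $F^{(i)}v=v$, so $F^{(j)}v=\delta_{ij}v$ on all of $V_i$. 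The same lemma yields pairwise orthogonality of the $V_i$, hence $\mathcal{S}_r(\ket{\overline{\psi}})\subseteq\bigoplus_i V_i$, and decomposing any unit $\ket{\phi}\in\mathcal{S}_r(\ket{\overline{\psi}})$ as $\sum_i\ket{\phi_i}$ with $\ket{\phi_i}\in V_i$, a direct computation using the common Pauli coefficients induced by $\ket{\phi}$ gives $\langle\phi|F^{(j)}|\phi\rangle=|\alpha_j|^2$, matching the ideal $Z$-basis distribution; extension to mixed inputs is by linearity.

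The main obstacle is the orthonormality lemma, which hinges on the distance argument above; once it is in place, item~1 follows from the rigidification supplied by $\ket{\overline{+}}$ (which pins down the $i$-dependence of the Kraus expansions), while item~2 requires no such auxiliary state because positivity of POVM elements already pins down the off-diagonal structure automatically.
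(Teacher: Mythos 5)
Your proof is correct, and for item~1 it takes a genuinely different route from the paper's. The paper never expands in an error basis: it sandwiches the gadget channel $\mathcal{E}_{\mathcal{T}}$ between the maximally spreading weight-$r$ Pauli channel of \cref{prop-1312110} and the error corrector of \cref{prop-1272320}, so that the fault-tolerance hypotheses on $\ket{\overline{i}}$ and $\ket{\overline{+}}$ become exact fixed-point identities of the composite channel; Kraus rigidity for a channel fixing a pure state then gives $E_i\ket{\overline{j}}=\alpha_{i,j}\ket{\overline{j}}$ and $E_i\ket{\overline{+}}=\beta_i\ket{\overline{+}}$, forcing $\alpha_{i,j}=\beta_i$ and hence scalar action on the whole code space, and general noisy inputs are handled by support monotonicity (\cref{prop-1271718}). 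You instead keep the raw Kraus operators of $\mathcal{E}_{\mathcal{T}}$ and build the orthonormal family $\{Q_\beta U\ket{\overline{i}}\}_{\beta,i}$ of syndrome representatives; your coefficient matching $c_\beta^{\,i}=c_\beta^{\,+}$ is the exact analogue of the paper's $\alpha_{i,j}=\beta_i$, but performed in the error-expanded basis rather than after correcting back to the code space. Your version makes the role of the code distance explicit (the paper buries it inside the correctness of the $r$-error corrector) and yields the effective noise operator $N_P$ as a by-product, at the price of the representative bookkeeping, which as you note must be done blockwise for the two-qubit gate. For item~2 the two arguments are closer: the paper's annihilation property $E_{i,j}\ket{\overline{l}}=0$ for $l\neq i$ is your statement $F^{(j)}v=0$ pushed down to the Kraus level, though your positivity argument at the POVM level is arguably cleaner. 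One small addition worth making: the diagrammatic conditions are equalities of states, so you should also record that the trace is preserved on general inputs; in your framework this follows because $\sum_j \lVert N_P^{(j)} U\ket{\overline{\psi}}\rVert^2$ is independent of $\ket{\overline{\psi}}$ by the same orthonormality, but it deserves a sentence (the paper notes it explicitly in one line).
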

The proof can be found in \cref{sec:proof-theorem1}. %\cite[\cref{sec:proof-theorem1}]{chen2025verifying}.

\subsection{Discretization of Faults}

%\textcolor{red}{
Since a general \cqprog{}\ contains complicated control flows that can also be affected by quantum faults (e.g., through outcomes of mid-circuit measurements), the discretization theorem in conventional QEC theory~\cite{gottesman2009introqec,gottesman2024surviving} does not fit our \cqprog{} framework.
We provide our discretization result for \cqprog, showing that faults can be discretized from arbitrary quantum channels into Pauli channels.

\begin{definition}[Transition Tree with Pauli Faults]\label{def-tree-pauli-faults-110250}
Let \(S_0\) be a quantum program, \(\sigma_0\) be a classical state and \(r\geq 0\) be an integer. An \(\bm{r}\)\textbf{-Pauli-fault} transition tree \(\mathcal{T}\) starting with \((S_0,\sigma_0)\) is a transition tree (c.f. \cref{def-faultytree}), such that all faulty transitions in it satisfy the following rules instead:%\vspace{-5pt}
\begin{gather*}
(\textup{PF-IN})\quad \langle q:=\ket{0} ,\sigma,\rho\rangle \leadsto \langle \downarrow, \sigma,P_q\rho_{q:=\ket{0}}P_q\rangle \\
(\textup{PF-UT})\quad \langle U(\overline{q}),\sigma,\rho\rangle \leadsto \langle \downarrow,\sigma, P_{\overline{q}} U_{\overline{q}}\rho U_{\overline{q}}^\dag P_{\overline{q}} \rangle\\
(\textup{PF-M0}) \quad \langle x:=\meas\,\, q,\sigma,\rho\rangle\leadsto \langle \downarrow, \sigma[0/x] , P_{0,q}\ket{0}_q\bra{0}Q_{q}\rho Q_{q}\ket{0}_q\bra{0}P_{0,q}\rangle\\
(\textup{PF-M1}) \quad \langle x:=\meas\,\, q,\sigma,\rho\rangle\leadsto \langle \downarrow, \sigma[1/x] , P_{1,q}\ket{1}_q\bra{1}Q_{q}\rho Q_{q}\ket{1}_q\bra{1} P_{1,q}\rangle
%\vspace{-pt}
\end{gather*}
where in rules \((\textup{PF-IN})\) and \((\textup{PF-UT})\), \(P_q\) and \(P_{\overline{q}}\) are arbitrary Pauli operators acting on \(q\) and \(\overline{q}\), respectively. In rules \((\textup{PF-M0})\) and \((\textup{PF-M1})\), \(P_{0,q},P_{1,q},Q_{q}\) are arbitrary Pauli operators acting on \(q\).\footnote{Rules (PF-M0) and (PF-M1) introduce both pre- and post-measurement Pauli errors, aiming to account for all possible error patterns of a measurement operation.}
Furthermore, if a node is \(\langle x:=\meas q; S,\sigma,\rho\rangle\) and its two children are generated by \((\textup{PF-M0})\) and \((\textup{PF-M1})\), then both transitions share the same Pauli error \(Q_q\).
\end{definition}

We call such tree \(\mathcal{T}\) an $r$-Pauli-fault instantiation of \((S_0,\sigma_0)\) and if \(\sigma_0=\emptyset\), we call it an \(r\)-Pauli-fault instantiation of \(S_0\). Then, we have the following theorem.

\begin{theorem}[Discretization of Faults]\label{lemma-disc-fautls-191645}
A gadget is fault-tolerant if and only if it satisfies the corresponding fault-tolerance properties (c.f. Equations \eqref{eq-FT-prepare-12312218}, \eqref{eq-FT-gate-12312216}, \eqref{eq-FT-meas-12312216}, \eqref{eq-FT-EC-12312217}) with the Pauli fault instantiation (see \cref{def-tree-pauli-faults-110250}).
\end{theorem}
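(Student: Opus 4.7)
The plan is to prove both directions. The easy direction — FT against arbitrary faults implies FT against Pauli faults — is immediate since each $r$-Pauli-fault tree is a special instance of an $r$-fault tree: Pauli conjugations are trace-preserving channels on the designated qubits, and the pair (PF-M0)/(PF-M1) satisfies the requirement that $\mathcal{E}_{0,q}+\mathcal{E}_{1,q}$ is trace-preserving, since their sum equals $\rho \mapsto Q_q\rho Q_q$.

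For the converse, the key ingredient is that the Pauli operators form a linear basis for the space of operators on any fixed qubit set. Expanding each Kraus operator of an arbitrary faulty channel $\mathcal{E}$ on $\overline q$ as $E_i = \sum_P \alpha_{i,P}P$ yields $\mathcal{E}(\rho) = \sum_{P,P'} c_{P,P'}\,P\rho P'^{\dag}$. Because $\mathcal{E}(\rho)\succeq 0$, its support equals its image, which is contained in $\spanspace\{Pv \mid P \text{ Pauli on }\overline q,\ v\in\supp(\rho)\}$; this span coincides with the linear span of the supports of the corresponding Pauli-fault outputs $P\rho P$.

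I apply this decomposition rule by rule. For (F-UT), factor $\mathcal{E}_{\overline q}(\rho) = \mathcal{F}(U_{\overline q}\rho U_{\overline q}^{\dag})$ with $\mathcal{F}=\mathcal{E}_{\overline q}\circ\operatorname{Ad}_{U_{\overline q}^{\dag}}$, then Pauli-decompose $\mathcal{F}$; the output span matches that of (PF-UT) outputs $PU_{\overline q}\rho U_{\overline q}^{\dag} P$. For (F-IN), the support of $\mathcal{E}_q(\rho)$ lies in $\mathcal{H}_q\otimes\supp(\tr_q\rho)$, which coincides with the span of $\supp(P_q\rho_{q:=\ket{0}}P_q)$ over single-qubit Paulis (since $\{\ket{0},\ket{1}\}$ span $\mathcal{H}_q$). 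Propagating these bounds inductively through the tree, the density operator produced by any $r$-fault instantiation has support inside the span of outputs of a family of $r$-Pauli-fault instantiations. Each error space $\mathcal{S}_r(\ket{\psi})$ (and its multi-block generalization) is a linear subspace, and Pauli-fault outputs lie in the correct error space by hypothesis, so the arbitrary-fault output does too — giving FT conditions \eqref{eq-FT-prepare-12312218}, \eqref{eq-FT-gate-12312216}, \eqref{eq-FT-EC-12312217}. For the measurement FT condition \eqref{eq-FT-meas-12312216}, the outcome probabilities are traces of branch density operators and inherit the same linear decomposition, so matching the ideal distribution on every Pauli-fault instantiation implies the same on the arbitrary instantiation.

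The hard part will be the measurement rules (F-M0)/(F-M1). Definition \ref{def-tree-pauli-faults-110250} couples the two branches of a measurement node via a shared pre-measurement Pauli $Q_q$, so Pauli-decomposing $\mathcal{E}_{0,q}$ and $\mathcal{E}_{1,q}$ independently would not correspond to any valid $r$-Pauli-fault tree. I would resolve this through a Stinespring dilation of the trace-preserving sum $\mathcal{E}_{0,q}+\mathcal{E}_{1,q}$: represent it as a unitary $V$ on $q$ together with an ancilla that is measured in the computational basis, then Pauli-expand $V$ along the system qubit $q$. This expansion naturally factors each branch into a common pre-Pauli $Q_q$, the ideal projection onto $\ket{b}_q$, and a branch-dependent post-Pauli $P_{b,q}$, matching (PF-M0) and (PF-M1) with the shared $Q_q$ required by \cref{def-tree-pauli-faults-110250}.
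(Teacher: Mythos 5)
Your core argument for the nontrivial direction --- expand each Kraus operator of an arbitrary fault in the Pauli basis, observe that the support of the resulting output at each node is contained in the span of the supports of the corresponding Pauli-fault outputs, and propagate this containment inductively through the transition tree --- is exactly the paper's auxiliary lemma (\cref{lemma-1261556}), proved there by induction on the node level using \cref{prop-1271727}. One remark on the part you single out as hard: the shared-$Q_q$ coupling in \cref{def-tree-pauli-faults-110250} is harmless once the covering statement is phrased, as it must be, as containment in the \emph{sum of supports over all} $r$-Pauli-fault trees with the same fault locations. The per-node containment on, say, the outcome-$0$ branch only needs that the operators $P_{0,q}\ket{0}_q\bra{0}Q_q$ span all operators acting nontrivially on $q$ as $(P_{0,q},Q_q)$ ranges over all pairs, and every such pair is realized by \emph{some} valid Pauli-fault tree (the constraint merely forces the sibling branch of that same tree to reuse $Q_q$, which restricts nothing at the level of a union of supports). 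The Stinespring dilation is unnecessary, and trying to produce a single Pauli-fault tree matching both branches is the wrong target.

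The genuine gap is termination. The fault-tolerance properties in \cref{def-ft-1310455} are equalities of states and of outcome distributions, so they implicitly require that the $r$-fault instantiation terminates with probability one, i.e., that $\mathcal{E}_{\mathcal{T}}$ is trace-preserving on the relevant inputs (cf.\ \cref{prop-1250405}, which only guarantees a possibly trace-decreasing channel). Your support-containment argument bounds where the mass at each leaf lives, but says nothing about whether the leaf masses sum to one; with repeat-until-success loops this is not automatic, since an adversarial non-Pauli fault could in principle steer the execution into a nonterminating branch. The paper devotes the entire second half of its proof to excluding this: assuming the root of $\mathcal{T}(\rho)$ is nonterminating, it extracts an infinite nonterminating sub-tree (\cref{lemma-1262122}), uses the bound of at most $r$ faults per path to find a node below which that sub-tree is fault-free, upgrades the support containment to an operator domination $q(v_{\mathcal{T}(\rho)}) \sqsubseteq \alpha \sum_{i} q(v_{\mathcal{T}'_i(\rho)})$ via \cref{prop-1280005}, and then derives $0 < p^{\textup{nt}} \leq 0$ from the fact that the Pauli-fault trees terminate under the fault-tolerance hypothesis. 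Without some version of this argument your proof establishes only that the (possibly sub-normalized) output is supported in the correct error space, which is strictly weaker than the stated fault-tolerance conditions.
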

The proof can be found in \cref{sec:proof-theorem2}. %\cite[\cref{sec:proof-theorem2}]{chen2025verifying}.

\section{Symbolic Execution with Quantum Faults}\label{sec-QSE-1112059}

Having discretized the inputs and errors, we can now establish a symbolic execution framework based on the quantum stabilizer formalism to reason about fault-tolerance of QECC implementations.
Specifically, we maintain a symbolic configuration \(\langle S,\widetilde{\sigma},\widetilde{\rho},p,\varphi,\widetilde{F}\rangle\), where
\begin{itemize}
 \vspace{-5pt}
    \item \(S\) is the quantum program to be executed,
    \item \(\widetilde{\sigma}\) is a symbolic classical state, 
    \item \(\widetilde{\rho}\) is a symbolic quantum stabilizer state, parameterized by a set of symbols, and is equipped with a set of stabilizer operations (see \cref{sec-sym-stab-state-opt-120009}).
    \item \(p\) is the probability corresponding to the current execution path, which is, in fact, a concrete number instead of a symbolic expression.
    \item \(\varphi\) is the path condition, 
    expressing the assumptions introduced in an execution path of the \cqprog{}. 
    \item \(\widetilde{F}\) is a symbolic expression that records the number of faults occurring during the execution of faulty transitions. 
     \vspace{-0pt}
\end{itemize}

\subsection{Symbolic Stabilizer States and Error Injection}\label{sec-sym-stab-state-opt-120009}

\begin{definition}[Symbolic Stabilizer States]\label{def-1281938}
For any commuting and independent set \(\{P_1,\ldots, P_n|P_i\neq \pm I, P_i^2\neq -I\}\) of \(n\)-qubit Pauli operators and Boolean functions \(g_1,\ldots, g_n\) over \(m\) Boolean variables, the symbolic stabilizer state \(\widetilde{\rho}(s_1,\ldots,s_m)\) over symbols \(s_1,\ldots,s_m\) is a stabilizer state of
\[\langle (-1)^{g_1(s_1,\ldots,s_m)}P_1,\ldots,(-1)^{g_n(s_1,\ldots,s_m)}P_n \rangle,\]
where \(\langle A_1,\ldots, A_n \rangle\) denotes the group generated by \(A_1,\ldots, A_n\).
\end{definition}

When it is clear from the context, we will omit the symbols \(s_1,\ldots,s_m\) using \(\widetilde{\rho}\) to denote \(\widetilde{\rho}(s_1,\ldots,s_m)\), 
and directly use \(\langle A_1,\ldots, A_n\rangle\) to denote the stabilizer state of the group generated by \(A_1,\ldots, A_n\).

There are four types of symbolic stabilizer operations: 
\begin{enumerate}
 \vspace{-5pt}
    \item A symbolic function \(\textup{IN}(q_i,\widetilde{\rho})\) for the initialization statement \(q_i\coloneqq \ket{0}\).
    \item A symbolic function \(\textup{UT}(U,\overline{q},\widetilde{\rho})\) for the Clifford unitary transform statement \(U(\overline{q})\).
    \item A  symbolic function \(\textup{M}(q_i,\widetilde{\rho})\) for the measurement statement \(x\coloneqq \meas\  q_i\).
    \item A symbolic error injection function \(\textup{EI}(\overline{q},\widetilde{\rho})\) for the faulty transitions.
     \vspace{-5pt}
\end{enumerate}
The symbolic functions \(\textup{IN}(q_i,\widetilde{\rho})\), \(\textup{UT}(U,\overline{q},\widetilde{\rho})\) return the symbolic quantum states resulting from the corresponding quantum operations and \(\textup{M}(q_i,\widetilde{\rho})\) returns a triplet comprising the measurement outcome, probability and post-measurement state. They are similarly to those in \cite{fang2024symbolic} and are available in \cref{sec:symbolic-rule}. %\cite[\cref{sec:symbolic-rule}]{chen2025verifying}. % and are provided in the appendix. 

Here we introduce our new symbolic error injection function \(\textup{EI}(\overline{q},\widetilde{\rho})\). Suppose \(\widetilde{\rho}\) is a symbolic stabilizer state:
\begin{equation}\label{eq-150251}
\widetilde{\rho}=\langle (-1)^{g_1(s_1,\ldots,s_m)}P_1,\ldots,(-1)^{g_n(s_1,\ldots,s_m)}P_n \rangle.
\end{equation}
The symbolic error injection function acts on \(\widetilde{\rho}\) by injecting symbolic Pauli errors. It returns a symbol recording the activation of those Pauli errors and the symbolic state \(\widetilde{\rho}\) after error injection. Specifically, it is defined as follows.

%For the statement \(U(\overline{q})\)

\begin{definition}[Symbolic Error Injection]
The symbolic Pauli error injection function \(\textup{EI}(q_i,\widetilde{\rho})\) is defined as
\[\textup{EI}(q_i,\widetilde{\rho})=(e_X\lor e_Z, 
  \langle (-1)^{g_1\oplus  e_X\cdot c^X_{1}\oplus e_Z\cdot c^Z_1}P_1,\ldots,(-1)^{g_n\oplus e_X\cdot c^X_n\oplus e_Z\cdot c^Z_n} P_n \rangle),\]
where \(e_X\) and \(e_Z\) are newly introduced symbols recording the inserted Pauli \(X\) and \(Z\) errors; \(c_i^X=0\) if \(P_i\) commutes with \(X_i\) and $c_i^X=1$ otherwise (similarly for $c_i^Z$).

If the symbolic Pauli error injection function is applied on a sequence of qubit variables \(\overline{q}=q_{j_1},\ldots,q_{j_a}\), then \(\textup{EI}(\overline{q},\widetilde{\rho})\) is defined as
\[\textup{EI}(\overline{q},\widetilde{\rho})=(e_1\lor \cdots \lor e_a,\widetilde{\rho}^{(a)}),\]
where \((e_{i},\widetilde{\rho}^{(i)})=\textup{EI}(q_{j_i},\widetilde{\rho}^{(i-1)})\) for \(i=1,\ldots,a\) and \(\widetilde{\rho}^{(0)}\coloneqq \widetilde{\rho}\).
\end{definition}

Note that in the definition of error injection on a sequence of qubits, we set the fault counter as $e_1\lor \cdots\lor e_a$ rather than $e_1+\cdots+e_a$. This is because a single faulty multi-qubit operation may introduce errors on multiple qubits but we count it as only one fault.

\subsection{Symbolic Faulty Transitions}
With the new error injection rule, we can define the symbolic faulty transitions, which are a symbolization of the faulty transitions in \cref{def-tree-pauli-faults-110250}.

\begin{definition}[Symbolic Faulty Transitions]\label{def-symb-faut-tran-172111}
The following are the symbolic faulty transition rules on the symbolic configurations:\vspace{-3pt}
\begin{gather*}
(\textup{SF-IN}) \quad \cqconfig{q_i:=\ket{0},\widetilde{\sigma},\widetilde{\rho},p,\varphi,\widetilde{F}}\twoheadrightarrow\cqconfig{\downarrow,\widetilde{\sigma},\widetilde{\rho}', p, \varphi,\widetilde{F}+e}\\
\textup{where}\quad (e, \widetilde{\rho}')=\textup{EI}(q_i,\textup{IN}(q_i,\widetilde{\rho}))\\
(\textup{SF-UT}) \quad \cqconfig{U(\overline{q}),\widetilde{\sigma},\widetilde{\rho},p,\varphi,\widetilde{F}}\twoheadrightarrow\cqconfig{\downarrow,\widetilde{\sigma},\widetilde{\rho}',p,\varphi,\widetilde{F}+e}\\
\textup{where}\quad (e, \widetilde{\rho}')=\textup{EI}(\overline{q},\textup{UT}(U,\overline{q},\widetilde{\rho}))\\
(\textup{SF-M}) \quad \cqconfig{x:=\meas\ q_i,\widetilde{\sigma},\widetilde{\rho},p,\varphi,\widetilde{F}}\twoheadrightarrow\cqconfig{\downarrow,\widetilde{\sigma}[s/x],\widetilde{\rho}^{(3)},p p',\varphi,\widetilde{F}+e_1 \lor e_2} \\
\textup{where}\,\,\, (e_2, \widetilde{\rho}^{(3)})=\textup{EI}(q_i,\widetilde{\rho}^{(2)}),\,\,\, (s,p',\widetilde{\rho}^{(2)})=\textup{M}(q_i,\widetilde{\rho}^{(1)}),\,\,\, (e_1, \widetilde{\rho}^{(1)})=\textup{EI}(q_i,\widetilde{\rho})\\
(\textup{SF-AS})\quad \cqconfig{x:=e,\widetilde{\sigma},\widetilde{\rho},p,\varphi,\widetilde{F}}\twoheadrightarrow\cqconfig{\downarrow,\widetilde{\sigma}[\widetilde{\sigma}(e)/x],\widetilde{\rho},p,\varphi,\widetilde{F}}\\
(\textup{SF-CO})\quad \cqconfig{y:=f(x),\widetilde{\sigma},\widetilde{\rho},p,\varphi,\widetilde{F}}\twoheadrightarrow\cqconfig{\downarrow,\widetilde{\sigma}[s/y],\widetilde{\rho},p,\varphi\land C_f(\widetilde{\sigma}(x),s),\widetilde{F}}\\
(\textup{SF-CT})\quad \langle \ifelse{b}{S_1}{S_2},\widetilde{\sigma},\widetilde{\rho},p,\varphi,\widetilde{F}\rangle\twoheadrightarrow \langle S_1,\widetilde{\sigma},\widetilde{\rho},p,\varphi\land \widetilde{\sigma}(b),\widetilde{F}\rangle\\
(\textup{SF-CF})\quad \langle \ifelse{b}{S_1}{S_2},\widetilde{\sigma},\widetilde{\rho},p,\varphi,\widetilde{F}\rangle\twoheadrightarrow \langle S_2,\widetilde{\sigma},\widetilde{\rho},p,\varphi\land \neg \widetilde{\sigma}(b),\widetilde{F}\rangle\\
\textup{(SF-RU)} \quad\langle \repeatuntil{S}{b}, \widetilde{\sigma}, \widetilde{\rho}, p,\varphi, \widetilde{F}\rangle \twoheadrightarrow \langle S; S', \widetilde{\sigma}, \widetilde{\rho}, p, \varphi, \widetilde{F}\rangle\\
\textup{where}\,\, S'=\ifelse{b}{\downarrow}{\{\repeatuntil{S}{b}\}}\vspace{-8pt}
\end{gather*}
where in rules \((\textup{SF-IN}), (\textup{SF-UT})\) and \((\textup{SF-M})\), \(e, e_1\) and \(e_2\) are newly introduced bit symbols recording the inserted faults and in rule \((\textup{SF-CO})\), \(s\) is a newly introduced symbol and \(C_f\) is a logical formula asserting the behavior of \(f\). We omit the rule \(\textup{(SF-SC)}\) here as it is simply analogous to the rule \(\textup{(SC)}\) in \cref{fig:idealtrans-qprog}.
\end{definition}

Note that in rule \(\textup{(SF-CO)}\), we do not actually run the classical oracle \(f(x)\), but instead add an assertion \(C_f\) for the output of \(f(x)\) to the path condition. 
This is beneficial when the behavior of the classical oracle can be abstracted into a logical formula, such as the decoding algorithms for stabilizer codes.
More details about the assertion of the decoding algorithm can be found in \cref{sec-assert-dec-1312156}. %\cite[\cref{sec-assert-dec-1312156}]{chen2025verifying}.

\subsection{Repeat-Until-Success}\label{sec-rep-unt-suc-1232229}

Analyzing the repeat-until-success loop, also known as the do-while loop, presents significant challenges. The symbolic transition rule \(\textup{(SF-RU)}\) shown in \cref{def-symb-faut-tran-172111} does not perform well in practice since it can lead to numerous execution paths or even render the symbolic execution non-terminable.

To circumvent this issue, we observe that loops employed in QEC gadgets typically adhere to specific patterns. By leveraging these patterns, we propose the following symbolic transition rule in place of \(\textup{(SF-RU)}\):\vspace{-5pt}
\begin{gather*}
\textup{(SF-RU}'\textup{)}
\quad\inference{\cqconfig{S,\widetilde{\sigma},\widetilde{\rho},p,\varphi,\widetilde{F}}\twoheadrightarrow^{*} \cqconfig{\downarrow,\widetilde{\sigma}',\widetilde{\rho}',p',\varphi',\widetilde{F}'}}{ \langle \repeatuntil{S}{b}, \widetilde{\sigma}, \widetilde{\rho}, p,\varphi, \widetilde{F}\rangle \twoheadrightarrow \langle \downarrow, \widetilde{\sigma}', \widetilde{\rho}', p', \varphi' \land \widetilde{\sigma}'(b), \widetilde{F}'\rangle}\vspace{-5pt}
\end{gather*}
where \(\twoheadrightarrow^{*}\) stands for the transitive closure of \(\twoheadrightarrow\).
Notably, it executes the loop body only once and then post-selects on success by adding the success condition to the path condition.
We will explain its conditional validity in the following and then conclude the overall soundness and completeness in \cref{sec-sound-complete-1232230}.

The first pattern is from the fault-tolerant cat state preparation gadget. Here, a repeat-until-success structure is used to test the prepared state, where in each iteration, all quantum and classical variables are reset before being used again. This pattern is recognized as the \textit{memory-less repeat-until-success}.
\begin{definition}[Memory-Less Repeat-Until-Success]
A repeat-until-success loop ``\(\repeatuntil{S}{b}\)'' is memory-less if all classical and quantum variables used in the loop body \(S\) are reset before being used in each iteration.
\end{definition}

For the memory-less repeat-until-success structure, it suffices only to consider the last iteration. The intuition is straightforward: suppose $t$ errors are injected into the loop. Errors injected in iteration other than the last one have no effect beyond that iteration. Therefore, the worst case is when all errors are injected into the final iteration. If the fault-tolerance property holds in the worst-case scenario, it also holds in the general case. 

The second pattern witnesses the applicability of \((\textup{SF-RU\('\)})\) in a more general scenario, where qubits are not reset across iterations. Our observation is based on Shor's error correction. Here, syndrome measurements are repeated until consecutive all-agree syndrome results of length \(\lfloor \frac{d-1}{2}\rfloor+1\) are observed. Therefore, qubits are not reset, meaning errors in one iteration can affect subsequent iterations. 
Nevertheless, through closer examination, we observe that syndrome measurements are implemented ``transversally'' upon proper cat state preparation. This means the errors will not propagate. As a result, any errors injected in an early iteration can be commuted to the last iteration without increasing the total number of errors.
Additionally, in the fault-free case, the stabilizer syndrome measurements exhibit idempotent semantics, meaning a single iteration suffices to capture all possible quantum state outputs that would arise from multiple iterations. We refer to this type of loop as a \textit{conservative} repeat-until-success. 

\begin{definition}[Conservative Repeat-Until-Success]\label{def-con-rus-210106}
A repeat-until-success loop ``\(\repeatuntil{S}{b}\)'' is \textit{conservative} if 
\begin{enumerate}
\vspace{-5pt}
    \item the classical variables in \(S\) are reset before used in each iteration,
    \item in the fault-free case, the loop always terminates after a single iteration, and the fault-free semantics of
    the loop body $S$ is non-adaptive and idempotent (see \cref{sec:proof-theorem3} for more details). %\cite[\cref{sec:proof-theorem3}]{chen2025verifying} for more details).
    \vspace{-5pt}
\end{enumerate}
\end{definition}

Consequently, for a conservative loop, the rule \((\textup{SF-RU\('\)})\) remains valid provided that an additional condition is met: errors do not propagate uncontrollably within the loop body \(S\). Fortunately, this can again be verified inside the loop body \(S\) recursively, using our symbolic execution.

\subsection{Verification of Fault-Tolerance}\label{sec-veri-ft-1200400}

Here, we present the overall pipeline of our verification tool for FT properties (i.e., Eq. \ref{eq-FT-prepare-12312218}, \ref{eq-FT-gate-12312216}, \ref{eq-FT-meas-12312216} and \ref{eq-FT-EC-12312217}). For simplicity, we only demonstrate the verification of a gate gadget for a logical gate \(\overline{U}\). Other types of gadgets can be handled similarly.

By \cref{lemma-dis-input-1272245}, it suffices to check whether \cref{eq-FT-gate-12312216} holds for the following two cases: \(\ket{\overline{\psi}}=\ket{\overline{i}}\) for all \(i\) and \(\ket{\overline{\psi}}=\sum_{i=0}^{2^k-1}\ket{\overline{i}}/\sqrt{2^{k}}\). 
Therefore, we construct two symbolic stabilizer states respectively:
\begin{equation}\label{eq-1101828}
\begin{split}
\widetilde{\rho}_Z&=\langle P_1,\ldots, P_{n-k},(-1)^{s_1}\overline{Z}_1,\ldots,(-1)^{s_k}\overline{Z}_k \rangle,\\ \widetilde{\rho}_X&=\langle P_1,\ldots, P_{n-k},\overline{X}_1,\ldots,\overline{X}_k\rangle,
\end{split}
\end{equation}
where \(P_1,\ldots,P_{n-k}\) are the Pauli stabilizers of the QECC \(\mathcal{C}\) and \(\overline{Z}_i, \overline{X}_i\) are the Pauli operators corresponding to the logical \(Z_i\) and logical \(X_i\) in \(\mathcal{C}\), respectively. 
We will only deal with the state \(\widetilde{\rho}_Z\), and the state \(\widetilde{\rho}_X\) can be handled analogously.

Let \(\widetilde{\rho}_0=\widetilde{\rho}_Z\). We apply the Pauli error injection function \(\textup{EI}\) on \(\widetilde{\rho}_0\) recursively for all physical qubits, i.e.,
\((e_i,\widetilde{\rho}_{i})=\textup{EI}(q_i,\widetilde{\rho}_{i-1})\) for \(i=1,\ldots,n\).
Then we define \(\widetilde{\rho}_{\textup{in}}\coloneqq \widetilde{\rho}_n\) and \(\widetilde{F}_{\textup{in}}\coloneqq \sum_{i=1}^n e_i\), which records the number of errors injected in the symbolic state \(\widetilde{\rho}_{\textup{in}}\).
The initial symbolic configuration is set to \(\langle S,\emptyset,\widetilde{\rho}_\textup{in},1,\textup{True}, \widetilde{F}_{\textup{in}} \rangle\), where \(S\) is the \cqprog{} implementing the gate gadget.
By applying our symbolic faulty transitions (see \cref{def-symb-faut-tran-172111}) recursively on the initial configuration, we obtain a set of output configurations \(\textup{cfgs}=\{ \textup{cfg}_i\}_{i}\), where the indices \(i\) refer to different execution paths and each \(\textup{cfg}_i\) is of the form \(\langle \downarrow,\widetilde{\sigma},\widetilde{\rho},p,\varphi,\widetilde{F}\rangle\), in which \(p>0\) is a concrete positive number.

Then, the gadget is fault-tolerant if and only if each symbolic configuration in \(\textup{cfgs}\) describes a set of noisy states that are not ``too far'' from the error-free output state. Specifically, for each \(\langle \downarrow, \widetilde{\sigma},\widetilde{\rho},p,\varphi,\widetilde{F}\rangle \in \textup{cfgs}\), let \(\{s_1,\ldots,s_m\}\) be the set of symbols involved in this configuration (i.e., all symbols occurring in \(\widetilde{\sigma},\widetilde{\rho},\varphi\) and \(\widetilde{F}\)). Then, we need to check whether the following holds
\begin{equation}\label{eq-1100520}
\forall s_1,\ldots,s_m,\,\, (\varphi \land \widetilde{F}\leq t) \rightarrow D(\widetilde{\rho},\widetilde{\rho}_{\textup{ideal}})\leq \widetilde{F},
\end{equation}
where \(\widetilde{\rho}_{\textup{ideal}}\) is the symbolic state obtained by applying the (concrete) ideal unitary transform \(\overline{U}\) on \(\widetilde{\rho}_0\), and \(D(\widetilde{\rho},\widetilde{\rho}_{\textup{ideal}})\) represents the Pauli distance between \(\widetilde{\rho}\) and \(\widetilde{\rho}_{\textup{ideal}}\), i.e., the minimal weight of a Pauli operator that can transform \(\widetilde{\rho}\) to \(\widetilde{\rho}_{\textup{ideal}}\). Note that the computation of \(D(\widetilde{\rho},\widetilde{\rho}_{\textup{ideal}})\) is not straightforward but can still be formulated in a first-order formula. Specifically, let \(P_1,\ldots,P_n\) be the unsigned stabilizers of \(\widetilde{\rho}\), which should also be the unsigned stabilizers of \(\widetilde{\rho}_{\textup{ideal}}\) (otherwise we can direct conclude that the gadget is not fault-tolerant). Suppose
\[\widetilde{\rho}=\langle (-1)^{g_1}P_1,\ldots,(-1)^{g_n}P_n\rangle,\quad \widetilde{\rho}_{\textup{ideal}}=\langle (-1)^{h_1}P_1,\ldots,(-1)^{h_n}P_n\rangle, \]
where  \(g_i\) and \(h_i\) are functions taking symbols \(s_1,\ldots,s_m\) as inputs.
Therefore, \(D(\widetilde{\rho},\widetilde{\rho}_{\textup{ideal}})\leq \widetilde{F} \) can be expressed as
\begin{equation}\label{eq-1100449}
\exists P,\, \textup{wt}(P) \leq \widetilde{F} \,\land\,\bigwedge_{i=1}^n h_i\oplus c(P,P_i)=g_i, %\textup{ for } i=1,\ldots,n,
\end{equation}
where \(P\) is a symbolic Pauli represented by a symbolic vector of length \(2n\) (c.f. the vector representation in \cref{sec-5191611}), \(\textup{wt}(P)\) is the weight of \(P\) and \(c(P,P_i)=0\) if \(P\) and \(P_i\) commutes and \(1\) otherwise. 

Note that the existence quantifier is, in fact, applied on \(2n\) variables. We can further reduce it to \(n\) variables. To see this, note that the computation of \(c(P,P_i)\) for \(i=1,\ldots,m\) involves multiplication (over GF(2)) of the matrix representing the stabilizers with the vector representing \(P\). Specifically, let \(M\) be the \(n\times 2n\) matrix, in which each row represents a stabilizer, and let \(v\) be the \(2n\)-length vector representing the symbolic Pauli \(P\). Then, \(c(P,P_i)= (M\Lambda v)_i\), where 
\(\Lambda=\inlinemat{0 &I \\ I &0}\)
is a \(2n\times 2n\) matrix with \(n\times n\) identity matrices as its off-diagonal blocks (see Sec. 10.5.1 in \cite{nielsen2010quantum}). By solving the linear system over $GF(2)$, \cref{eq-1100449} is equivalent to 
\begin{equation}\label{eq-1101822}
\exists w,\, \textup{wt}(N w \oplus p)\leq \widetilde{F},
\end{equation}
where \(w\) is an \(n\)-length vector, \(N\) is a \(2n\times n\) matrix with columns spanning the null space of \(M\Lambda\), \(p\) is a particular solution of the linear equation \(M\Lambda p=g\oplus h\), in which \(g\) and \(h\) are the \(n\)-length vectors with entries \(g_i\) and \(h_i\), respectively. As a result, we can reduce the number of quantified variables from \(2n\) to \(n\).

Interpreting \(D(\widetilde{\rho},\widetilde{\rho}_{\textup{ideal}})\leq \widetilde{F}\) by \cref{eq-1101822}, we can use the SMT solver to check the validity of \cref{eq-1100520}. The same method is also applied on \(\widetilde{\rho}_X\) (see \cref{eq-1101828}), and ``fault-tolerant'' is claimed if both checks on \(\widetilde{\rho}_Z\) and \(\widetilde{\rho}_X\) pass.

\subsection{Soundness and Completeness}\label{sec-sound-complete-1232230}

Here, we state our main theorem about the soundness and completeness of our verification tool.
\begin{definition}[Soundness and Completeness]\label{def-1310223}
Suppose \(\textup{Alg}\) is an algorithm taking a \cqprog{} \(S\) as input. Then, for the task of verifying quantum fault-tolerance, we define the soundness and completeness of \(\textup{Alg}\) as follows.
\begin{itemize}
\vspace{-5pt}
    \item \textbf{Soundness}: \(\textup{Alg}(S)\) returns ``fault-tolerant'' \(\Longrightarrow\) \(S\) is fault-tolerant.
    \item \textbf{Completeness}: \(S\) is fault-tolerant \(\Longrightarrow\) \(\textup{Alg}(S)\) returns ``fault-tolerant''.
    \vspace{-5pt}
\end{itemize}
\end{definition}

\begin{theorem}\label{thm-compl-sound-1251851}
Suppose \(S\) is a \cqprog{} implementing a quantum gadget with stabilizer quantum operations. 
\begin{itemize}
\vspace{-5pt}
\item If the repeat-until-success statements in \(S\) are memory-less, then our fault-tolerance verification is both sound and complete.
\item If the repeat-until-success statements in \(S\) are conservative, then our fault-tolerance verification is sound. 
\vspace{-5pt}
\end{itemize}
\end{theorem}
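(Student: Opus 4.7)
The plan is to proceed in three stages that mirror the structure of the verification pipeline described in \cref{sec-veri-ft-1200400}. \textbf{First}, I would invoke \cref{lemma-dis-input-1272245,lemma-disc-fautls-191645} to reduce fault-tolerance (as defined in \cref{def-ft-1310455}) to checking the corresponding criteria \eqref{eq-FT-prepare-12312218}--\eqref{eq-FT-EC-12312217} only on the discrete logical basis states $\{\ket{\overline{i}}\}$ and $\ket{\overline{+}}$, and only against Pauli-fault instantiations of \(S\). This justifies building the two symbolic initial states $\widetilde{\rho}_Z$ and $\widetilde{\rho}_X$ in \eqref{eq-1101828} together with the symbolic Pauli injection $\textup{EI}$ on input qubits, so that every choice of symbol valuations reproduces exactly one pair (input basis state, input Pauli-error pattern) from the class over which fault-tolerance must be tested.

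\textbf{Second}, I would establish a correspondence between (i) the symbolic execution paths obtained by exhaustively applying the rules of \cref{def-symb-faut-tran-172111} from the initial configuration $\langle S,\emptyset,\widetilde{\rho}_{\textup{in}},1,\textup{True},\widetilde{F}_{\textup{in}}\rangle$, and (ii) the Pauli-fault transition trees of \cref{def-tree-pauli-faults-110250} on the same input. For loop-free $S$, every valuation of the fault symbols $e_i$, the measurement outcome symbols $s$ introduced by $(\textup{SF-M})$, and the oracle symbols introduced by $(\textup{SF-CO})$ picks out a unique leaf configuration whose $(\widetilde{\sigma},\widetilde{\rho})$ under that valuation equals the concrete classical-quantum state at the matching leaf of a Pauli-fault tree; this is an induction on the program structure using the exactness of the stabilizer formalism for Clifford operations and computational-basis measurements. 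Under this correspondence, $\widetilde{F}$ counts the number of faulty transitions on the path, and the Pauli-distance encoding \eqref{eq-1100449}--\eqref{eq-1101822} is a faithful first-order rendering of the support condition $\supp(\rho')\subseteq \mathcal{S}_{\widetilde{F}}(\ket{\overline{\psi}_{\textup{ideal}}})$. Thus validity of \eqref{eq-1100520} on every output configuration is equivalent to fault-tolerance, giving both soundness and completeness in the loop-free case.

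\textbf{Third}, I would discharge the repeat-until-success rule $(\textup{SF-RU}')$ in two cases. For a \emph{memory-less} loop, all classical and quantum variables touched by $S$ are re-initialized at the start of every iteration, so the configuration entering the successful final iteration is independent of faults injected strictly earlier; any multi-iteration Pauli-fault instantiation with total fault count $r$ can therefore be simulated by a single-iteration instantiation whose fault count is at most $r$, and conversely any single-iteration instantiation extends to a (possibly longer) multi-iteration one with the same output distribution. Post-selecting the loop-body execution on $\widetilde{\sigma}'(b)$ then precisely captures the successful exits of the original loop, yielding soundness and completeness. For a \emph{conservative} loop, the fault-free body is non-adaptive and idempotent and exits after one iteration, and non-propagation inside $S$ is verified recursively; these together imply that an error injected in an earlier iteration commutes through subsequent transversal iterations without spawning new errors, so it can be relocated to the final iteration with fault count preserved. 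Hence the one-iteration unrolling over-approximates every multi-iteration instantiation, giving soundness; completeness may fail because the rewriting can produce late-iteration fault patterns not realized by any actual execution.

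The main obstacle I expect is the conservative case: precisely formalizing ``errors commute to the last iteration with the same fault count'' requires a commutation lemma stating that, for a transversal and idempotent fault-free body, applying the body after an injected Pauli fault $P$ yields a state equal to applying an ideal body followed by a Pauli $P'$ of weight at most $\textup{wt}(P)$, together with a matching classical-state equivalence for the syndrome outcome. Establishing this lemma cleanly, and then chaining it across iterations while tracking the interaction with the measurement symbols introduced by $(\textup{SF-M})$, is the delicate bookkeeping step; once it is in place, combining it with the loop-free correspondence of Step 2 and the input/fault discretizations of Step 1 closes the proof.
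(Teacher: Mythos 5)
Your proposal is correct and follows essentially the same route as the paper: discretize inputs and faults via Theorems~\ref{lemma-dis-input-1272245} and~\ref{lemma-disc-fautls-191645}, establish a two-way correspondence between symbolic execution paths and concrete Pauli-fault instantiations (which the paper packages as separate soundness and completeness lemmas for the symbolic transitions, used in crossed fashion for the verification claims), and then justify $(\textup{SF-RU}')$ by reducing multi-iteration executions to single-iteration ones --- exactly, for memory-less loops, and as an over-approximation, for conservative loops. You also correctly isolate the delicate step, the commutation of injected Pauli faults past later iterations of a conservative body, which the paper discharges by showing each fault-free leaf Kraus operator is a Hermitian stabilizer projector so that faulty leaf operators factor as a Pauli times such a projector.
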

The proof can be found in \cref{sec:proof-theorem3}. %\cite[\cref{sec:proof-theorem3}]{chen2025verifying}.
Notably, when \(S\) is non-fault-tolerant, our quantum symbolic execution will always return a fault instantiation that witnesses the failure of fault-tolerance.

\section{Verifying Magic State Distillation}\label{sec-veri-magic-state-1210027}

We have applied our verification tools on QEC gadgets with Clifford circuits. We still need one more non-Clifford QEC gadget for universal quantum computing~\cite{nebe2001invariants,sawicki2017universality}. A widely used approach for universal fault-tolerant quantum computing is using the gate teleportation~\cite{nielsen2010quantum,gottesman2024surviving} with magic state distillation~\cite{knill2004fault,bravyi2005universal} to implement non-Clifford gates. Since the teleportation circuit is already a stabilizer circuit, the remaining problem is verifying the fault-tolerance of a magic state distillation protocol.

Typically, a magic state distillation protocol employs a distillation code \(\mathcal{D}\). It performs the error correction (or error detection) process of \(\mathcal{D}\) on multiple noisy magic states to yield a high-quality magic state encoded in \(\mathcal{D}\), followed by the decoding. The code \(\mathcal{D}\) must exhibit desirable properties for magic state distillation, such as the transversality of the logical \(T\)-gate~\cite{steane1999quantum}.

\begin{wrapfigure}{r}{0.48\linewidth}
    \vspace{-2mm}
    \includegraphics[width=\linewidth]{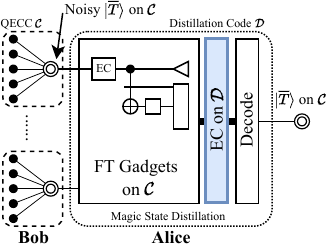}
    \caption{Two-party magic state distillation framework with a QECC $\mathcal{C}$ on Bob and a distillation code $\mathcal{D}$ on Alice.}
    \label{fig:magicstate}
    \vspace{-10pt}
\end{wrapfigure}
In the context of quantum fault-tolerance, we instead use a concatenated code with \(\mathcal{D}\) as an outer code and a QECC \(\mathcal{C}\) (used for fault-tolerance) as an inner code (i.e., each qubit of \(\mathcal{D}\) is encoded in \(\mathcal{C}\)).
To better illustrate the idea of fault-tolerant magic state distillation, we reformulate it in a \textbf{two-party framework} in \cref{fig:magicstate}.
\begin{itemize}
    \vspace{-5pt}
    \item Bob works on physical qubits and produces logical qubits encoded in \(\mathcal{C}\) equipped with fault-tolerant stabilizer operations and non-fault-tolerant magic state preparation.
    \item Alice works on the encoded qubits produced by Bob and performs a magic state distillation protocol using a distillation code \(\mathcal{D}\).
    \vspace{-5pt}
\end{itemize}
Specifically, Alice makes use of the logical operations of \(\mathcal{C}\) provided by Bob and performs magic state distillation as follows:
\begin{enumerate}
 \vspace{-5pt}
\item Call the non-FT magic preparation to produce noisy magic states on \(\mathcal{C}\).
\item Call error correction process of \(\mathcal{C}\) on each noisy magic state.
\item Prepare logical state \(\ket{\overline{+}}\) encoded in \(\mathcal{D}\).
\item Implement logical-\(T\) gate of \(\mathcal{D}\) by gate teleportation with noisy magic states.
\item Perform error correction of \(\mathcal{D}\).
 \vspace{-5pt}
\end{enumerate}
All the steps except step \(1\) contain only stabilizer operations of \(\mathcal{C}\), which can be verified for fault-tolerance. Therefore, we only need to investigate how the errors from the noisy magic states propagate through the distillation process. Intuitively, step \(2\) ensures that the physical errors (in Bob's perspective) are suppressed, and only physical errors (in Alice's perspective), which are logical errors in Bob's view, can further propagate. Fortunately, in Alice's perspective, the gate teleportation is transversal, so errors do not propagate between qubits. Therefore, it suffices to check whether Alice's error correction (i.e., step 5) can correct the errors on the magic state encoded in \(\mathcal{D}\).

Our quantum symbolic execution is not directly applicable for this case since the magic state encoded in \(\mathcal{D}\) is not a stabilizer state. Nevertheless, we can circumvent it by generalizing the input. That is, we instead verify a \textbf{stronger} statement: Alice's error correction can correct errors on an arbitrary state, not only on the magic state. Then, we can use a discretization lemma like \cref{lemma-dis-input-1272245} to reduce it back to a verification task on stabilizer states. We remark that Alice's error correction itself is not required to be fault-tolerant but only needs to be able to correct errors on the input assuming no faults occur during its execution, which we call \textbf{ideal-case correct}. Formally, 
the magic state distillation gadget is fault-tolerant if
\begin{enumerate}
\vspace{-3pt}
\item  Bob's stabilizer operations are fault-tolerant,
\item  Alice's error correction process is ideal-case correct. 
\vspace{-3pt}
\end{enumerate}

Our tool verifies the fault-tolerance of Bob's stabilizer operations. For the ideal-case correctness of Alice's EC process, we can use similar techniques where error injections are only applied on inputs and are disabled in symbolic transitions. Consequently, 
we have:
\begin{theorem}\label{thm-magic-d-210221}
\vspace{0pt}
If the magic state preparation is implemented within the two-party distillation framework and the repeat-until-success statements are either memory-less or conservative, then our fault-tolerance verification is sound.
\vspace{0pt}
\end{theorem}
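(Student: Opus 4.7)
The plan is to reduce the soundness of fault-tolerance verification for the magic state distillation gadget to two sub-tasks that our tool can already soundly discharge: (i) the fault-tolerance of Bob's stabilizer operations on the inner code $\mathcal{C}$, and (ii) the ideal-case correctness of Alice's error correction on the outer code $\mathcal{D}$. The excerpt has already argued informally that these two conditions together imply fault-tolerance of the full two-party distillation protocol; so I would first record this decomposition as a lemma, then invoke soundness results for each piece, and finally glue them together using the transversality properties of the distillation framework.

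For Bob's side, I would observe that every gadget Bob exposes to Alice is built from stabilizer operations (initialization to computational basis, Clifford gates, computational-basis measurements, and Pauli-style error correction of $\mathcal{C}$), and that by hypothesis every repeat-until-success statement inside these gadgets is either memory-less or conservative. This matches exactly the scope of \cref{thm-compl-sound-1251851}, so our verifier is sound on each of Bob's gadgets. The non-Clifford noisy magic state preparation is intentionally left outside this verification step and is treated downstream as an opaque input to Alice.

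For Alice's side, the input to her error correction is a (possibly noisy) magic state encoded in $\mathcal{D}$, which is not a stabilizer state, so our symbolic engine cannot reason about it directly. I would instead verify the strictly stronger property that Alice's error correction is ideal-case correct on an \emph{arbitrary} input carrying at most $\lfloor (d_\mathcal{D}-1)/2\rfloor$ errors. A variant of \cref{lemma-dis-input-1272245} tailored to this pure-input-with-errors setting reduces the check to the logical computational basis $\{\ket{\overline{i}}\}$ and $\ket{\overline{+}}$ of $\mathcal{D}$, which are stabilizer states. Because ideal-case correctness assumes no internal faults, error injections are disabled inside the circuit and applied only to the input block; the rest of the symbolic execution (initial symbolic stabilizer state, symbolic measurement, path conditions, final Pauli-distance check) proceeds as in \cref{sec-veri-ft-1200400} and is sound for the same reasons as in \cref{thm-compl-sound-1251851}.

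The main obstacle is the gluing step: showing that conditions (i) and (ii) together entail the fault-tolerance of the overall distillation gadget under \cref{def-ft-1310455}. The bookkeeping has to carefully separate the two levels of the concatenated code. A single fault anywhere in steps 1--4 produces, by (i), at most one additional physical error on each affected block of $\mathcal{C}$; Bob's fault-tolerant error corrections then turn surviving errors into at most one Pauli error per logical qubit of $\mathcal{C}$, i.e., at most one physical error per qubit of $\mathcal{D}$ from Alice's viewpoint. Crucially, the gate teleportation in step 4 is transversal at the $\mathcal{D}$ level, so these $\mathcal{D}$-physical errors do not multiply across blocks, and the total $\mathcal{D}$-weight after step 4 is bounded by the number of faults. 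Condition (ii) then guarantees that step 5 corrects them. The delicate point, and the one I would spend the most care on, is making the two-level weight accounting rigorous: that a fault counted once in Bob's verification never gets double-counted as multiple $\mathcal{D}$-errors on Alice's side. Once this accounting lemma is in place, soundness of the overall verifier follows by composing the soundness of the two sub-verifications.
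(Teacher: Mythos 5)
Your decomposition and both sub-verification arguments match the paper's proof: Bob's stabilizer gadgets are discharged by \cref{thm-compl-sound-1251851} exactly as you propose, and Alice's ideal-case correctness is handled by strengthening to arbitrary inputs, re-proving the input discretization of \cref{lemma-dis-input-1272245} for the zero-internal-fault setting, and re-running the soundness argument with error injection disabled inside the circuit and with the completeness lemma for symbolic faulty transitions (\cref{lemma-1310325}) specialized to \(s=0\).

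The one place you diverge is the gluing step. You treat the claim ``(i) Bob's stabilizer operations are fault-tolerant and (ii) Alice's EC is ideal-case correct \(\Rightarrow\) the distillation gadget is fault-tolerant'' as a lemma to be proved, and sketch a two-level weight-accounting argument across the concatenated code. The paper does not prove this implication at all: it cites it as an already-known result from the QEC literature (Gottesman's book) and reduces the theorem immediately to the two sub-verifications. Your route is more self-contained but also where all the remaining risk sits; your sketch conflates two distinct mechanisms (errors below the \(\mathcal{C}\)-threshold being removed by Bob's EC versus errors above it collapsing to a single logical error of \(\mathcal{C}\), i.e.\ a single \(\mathcal{D}\)-physical error), and making the no-double-counting claim rigorous would require essentially reconstructing the standard concatenated-code fault-tolerance argument. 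Since the paper outsources precisely this step, your proposal is not wrong, but it commits you to proving a substantial classical QEC result that the paper deliberately avoids; if you follow the paper's route you can simply cite it and keep the proof focused on the soundness of the two machine-checked components.
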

The proof can be found in \cref{sec-210242}. %\cite[\cref{sec-210242}]{chen2025verifying}.
\vspace{-5pt}\section{Case Study}\label{sec-case-study-1221814}

We implemented our verification tool~\footnote{The code is available at: \url{https://github.com/vftqc/vftqecc}.} based on the Julia~\cite{bezanson2017julia} package QuantumSE.jl~\cite{fang2024symbolic} and use Bitwuzla~\cite{DBLP:conf/cav/NiemetzP23} 0.7.0 as the SMT solver. Our experiments are executed on a desktop with AMD Ryzen 9 7950X and 64GB of RAM.

\vspace{-0pt}\subsection{Bug Finding Example: Cat State Preparation}\label{sec-cat-state-1212245}

\begin{figure}[htb]
    \centering
%    \vspace{-20pt}
    \begin{subfigure}[b]{0.46\linewidth}
    \centering
    \includegraphics[width=1.0\linewidth]{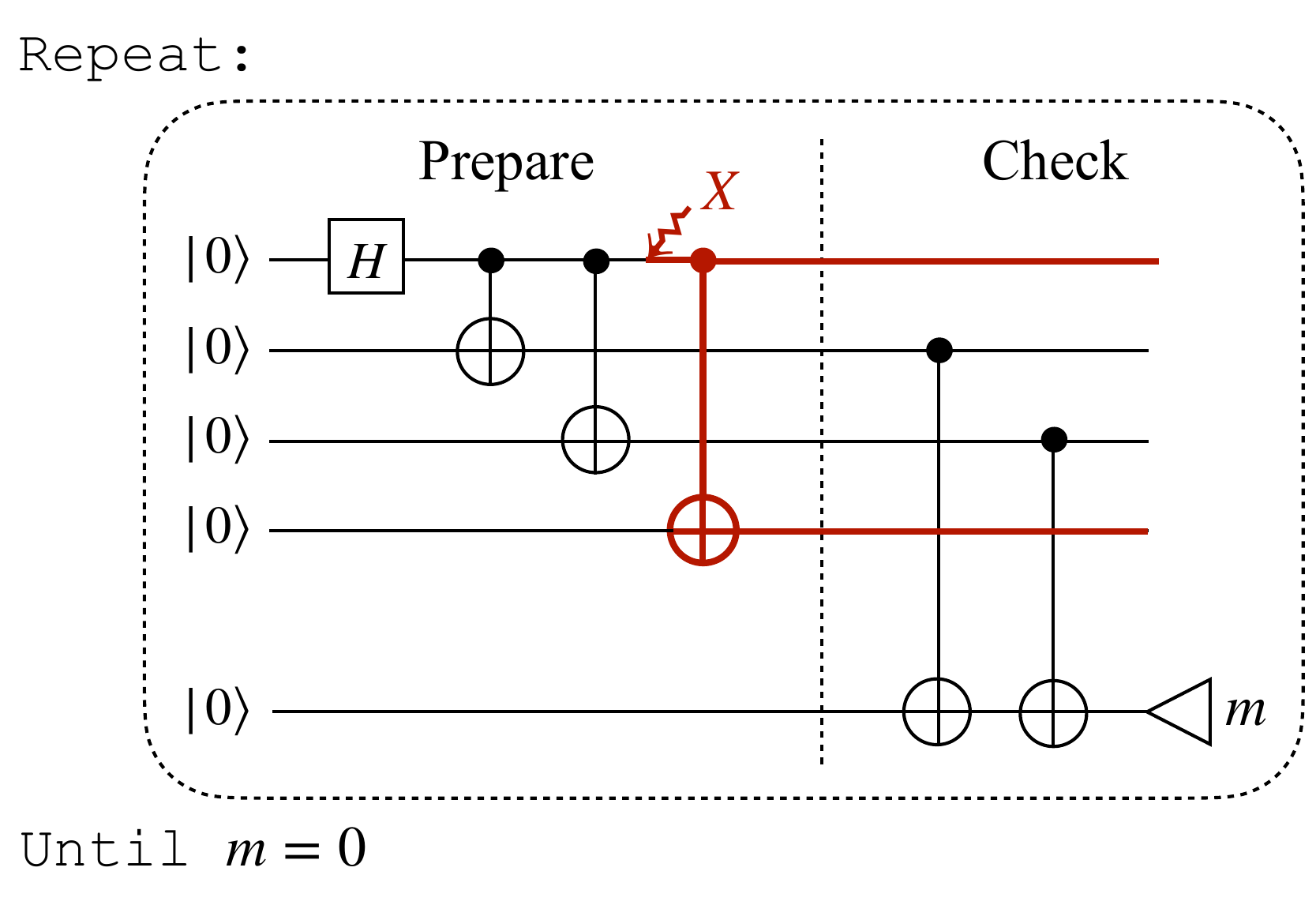}
    \vspace{-15pt}
    \caption{Non-FT cat state preparation.}
     \vspace{-0pt}
    \label{fig-non-ft-cat-prep-1182239}
    \end{subfigure}
    \hfill
    \begin{subfigure}[b]{0.46\linewidth}
    \centering
    \includegraphics[width=1.0\linewidth]{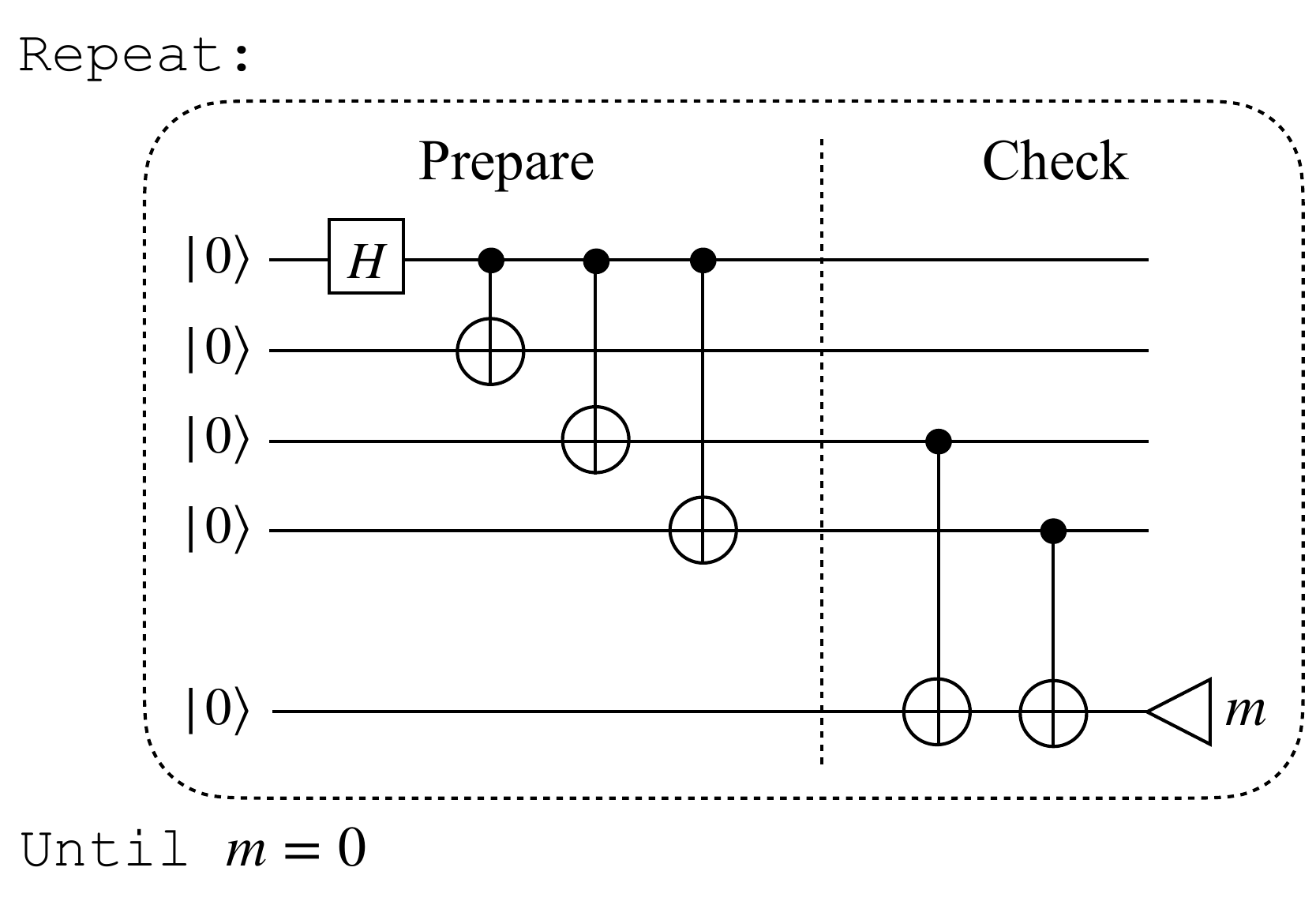}
    \vspace{-15pt}
    \caption{FT cat state preparation.}
     \vspace{-0pt}
    \label{fig-ft-cat-prepr-1182240}
    \end{subfigure}
    \vspace{-1mm}
    \caption{Comparison of \(4\)-qubit cat state preparations with different checks. By our verification tool, we found that the first is non-FT and the second is FT.}
    \vspace{-5mm}
\end{figure}

Cat state preparation is a crucial module in fault-tolerant (FT) quantum computing, extensively used in various FT gadgets such as FT Pauli measurement and Shor's error correction~\cite{shor1996fault}. 
A fault-tolerant~\footnote{Although the cat state itself is not encoded in any QECC with positive code distance, we can still define its fault-tolerance up to \(t\) faults as follows: for any \(s\leq t\), if \(s\) faults occur during the cat state preparation, the output contains at most \(s\) errors.} cat state preparation can be implemented through the following steps: \textbf{1)} Prepare the cat state non-fault-tolerantly; \textbf{2)} Perform a check on the cat state; \textbf{3)} If the check fails, discard the state and start over.
%\end{enumerate}
For the second step, a parity check on pairs of qubits of the cat state is performed. However, the pairs must be selected carefully.

For example, we found with our tool that the \(4\)-qubit cat state preparation with the check on the second and third qubits, as shown in \cref{fig-non-ft-cat-prep-1182239}, turns out to be non-FT, and the corresponding error propagation path found by our tool is marked in~\cref{fig-non-ft-cat-prep-1182239}. In contrast, our tool proves that when checking the third and fourth qubits (\cref{fig-ft-cat-prepr-1182240}), the cat state preparation is FT.

\begin{wrapfigure}{r}{0.5\linewidth}
    \vspace{-7mm}
    \includegraphics[width=\linewidth]{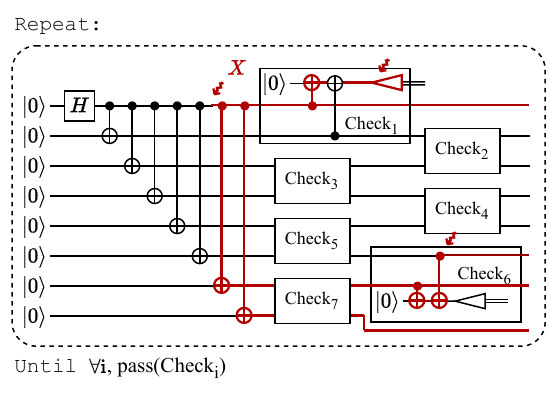}
    \vspace{-20pt}
    \caption{An implementation of \(8\)-qubit cat state preparation. By our verification tool, it is proved to be FT up to \(2\) faults but disproved to be FT for \(3\) faults.}
    \label{fig-8-qubit-cat-prep-1190101}
    \vspace{-5mm}
\end{wrapfigure}
The implementation of cat state preparation also depends on the number of faults it needs to tolerate. For example, consider the cat state preparation in \cref{fig-8-qubit-cat-prep-1190101} with checks on pairs of consecutive qubits. This implementation is only fault-tolerant up to \(2\) faults but fails to be fault-tolerant for \(3\) faults. The error pattern violating the fault-tolerance of \cref{fig-8-qubit-cat-prep-1190101} reported by our tool is marked in red (Pauli-\(X\) error, failed measurement, failed CNOT). Note that all the checks passed since the first and sixth checks failed to detect the errors. Consequently, these \(3\) faults cause \(4\) errors in the output state, which violates the fault-tolerance condition. Two additional examples of fault-tolerance bugs are provided in \cref{sec:bugfinding}. %\cite[\cref{sec:bugfinding}]{chen2025verifying}. 

\vspace{-5pt}\subsection{Verification of Fault-Tolerance}

\begin{table}[ht]
\vspace{-10pt}
\centering
\caption{Verification time of fault-tolerant gadgets over different QECCs. The quantum Reed-Muller code is used as distillation code for magic state distillation and only the error correction (EC) gadget is reported.}
\vspace{1mm}
\begin{tabular}{|c||c||cccc|}
\hline
\multirow{2}{*}{\textbf{QECC}} &
  \multirow{2}{*}{\(\bm{[[n,k,d]]}\)} &
  \multicolumn{4}{c|}{\textbf{Time (s)}} \\ \cline{3-6} 
 &
   &
  \multicolumn{1}{c|}{\textbf{Prep.}} &
  \multicolumn{1}{c|}{\textbf{CNOT}} &
  \multicolumn{1}{c|}{\textbf{Meas.}} &
  \textbf{EC} \\ \hline
\multirow{2}{*}{Color Code~\cite{bombin2006topological,rodriguez2024experimental}} &
  [[7,1,3]] &
  \multicolumn{1}{c|}{2.81} &
  \multicolumn{1}{c|}{1.36} &
  \multicolumn{1}{c|}{3.65} &
  3.15 \\ \cline{2-6} 
 &
  [[17,1,5]] &
  \multicolumn{1}{c|}{13.91} &
  \multicolumn{1}{c|}{30.29} &
  \multicolumn{1}{c|}{27.92} &
  20.98 \\ \hline
\multirow{3}{*}{Rotated Surface Code~\cite{bombin2007optimal,horsman2012surface}} &
  [[9,1,3]] &
  \multicolumn{1}{c|}{2.96} &
  \multicolumn{1}{c|}{1.27} &
  \multicolumn{1}{c|}{3.91} &
  3.10 \\ \cline{2-6} 
 &
  [[25,1,5]] &
  \multicolumn{1}{c|}{22.38} &
  \multicolumn{1}{c|}{181.72} &
  \multicolumn{1}{c|}{42.79} &
  52.15 \\ \cline{2-6}
  & [[49,1,7]] &
  \multicolumn{1}{c|}{250818} &
  \multicolumn{1}{c|}{Out of Time} &
  \multicolumn{1}{c|}{82319} &
  435011 \\ \hline
\multirow{2}{*}{Toric Code~\cite{Kitaev2003anyons}} &
  [[18,2,3]] &
  \multicolumn{1}{c|}{4.42} &
  \multicolumn{1}{c|}{2.37} &
  \multicolumn{1}{c|}{5.53} &
  4.51 \\ \cline{2-6} 
 &
  [[50,2,5]] &
  \multicolumn{1}{c|}{818.34} &
  \multicolumn{1}{c|}{12168.51} &
  \multicolumn{1}{c|}{916.51} &
  1918.95 \\ \hline
Quantum Reed-Muller Code~\cite{steane1999quantum} &
  [[15,1,3]] &
  \multicolumn{1}{c|}{N.A.} &
  \multicolumn{1}{c|}{N.A.} &
  \multicolumn{1}{c|}{N.A.} &
  4.89 \\ \hline
\end{tabular}

   \vspace{-10pt}
   
    \label{tab:veri-time-1132028}
\end{table}

We perform our verification tool on the fault-tolerant quantum gadgets over the color code~\cite{bombin2006topological,rodriguez2024experimental}, rotated surface code~\cite{bombin2007optimal,horsman2012surface}, and the toric code~\cite{Kitaev2003anyons}. The verification time is shown in \cref{tab:veri-time-1132028}. The preparation~(Prep.) gadgets prepare the logical state \(\ket{\overline{0}}\), (\(\ket{\overline{00}}\) for toric code), the measurement~(Meas.) gadgets perform the logical-\(Z\) measurement (logical-\(Z_1\) for toric code) and the error correction~(EC) gadgets implements the Shor's error correction~\cite{shor1996fault}. 
We also perform our verification on the magic state distillation protocol in \cref{sec-veri-magic-state-1210027} with quantum Reed-Muller code~\cite{steane1999quantum} as the distillation code. 
Other single-qubit Clifford gates, such as the phase gate and Hadamard gate, can be either implemented transversally or via fault-tolerant state preparation of specific ``magic'' states (see Theorem 13.2 in \cite{gottesman2024surviving}. These preparations are analogous to the preparation gadgets verified in \cref{tab:veri-time-1132028}, and the results are provided in \cref{sec-5211304}. %\cite[\cref{sec-5211304}]{chen2025verifying}. 
We remark that the time cost mostly comes from the SMT solver, which is a typical challenge faced by most symbolic execution frameworks. For example, for the $121$-qubit rotated surface code, our symbolic execution (excluding the SMT solver stage) uses only 54.72s to generate the FT constraints of the EC gadget.
\vspace{-5pt}\section{Conclusion and Future Work}
\vspace{-5pt}
In this paper, we presented a novel framework for the verification for fault-tolerance of QECCs. By extending the semantics of classical-quantum programs to model faulty executions and developing discretization theorems for both quantum states and faults, we enabled the use of quantum symbolic execution to verify fault-tolerance of quantum programs. Our approach is both sound and complete under certain structural assumptions on loops and has been implemented into a practical tool capable of verifying real-world QECC implementations across a range of codes and gadgets. 
For future work, we believe the scalability can be significantly enhanced by incorporating interactive strategies~\cite{ying2012floyd,ying2024practical,ying2023coqq} and human-assisted annotations and assertions~\cite{li2020projection}. In addition, we believe our framework could be extended to accommodate leakage errors~\cite{ghosh2013understanding,suchara2015leakage} by introducing new language primitives, such as qubit loss detection and qubit refilling.

\begin{credits}
 \subsubsection{Acknowledgement} We thank the anonymous reviewers for their careful and constructive feedback. The work was supported in part by the U.S. National Science Foundation CAREER Award No. 2338773 and the U.S. Department of Energy, Office of Science, Office of Advanced Scientific Computing Research through the Accelerated Research in Quantum Computing Program MACH-Q project. 
 WF was supported by the Engineering and Physical Sciences Research Council grant EP/X025551/1.
GL was also supported in part by the Intel Rising Star Faculty Award.
\end{credits}
\bibliographystyle{splncs04}
\bibliography{bibliography}

\begin{thebibliography}{10}
\providecommand{\url}[1]{\texttt{#1}}
\providecommand{\urlprefix}{URL }
\providecommand{\doi}[1]{https://doi.org/#1}

\bibitem{gottesman2004simulation}
Aaronson, S., Gottesman, D.: Improved simulation of stabilizer circuits. Phys. Rev. A  \textbf{70},  052328 (2004)

\bibitem{acharya2024quantumerrorcorrectionsurface}
Acharya, R., Aghababaie-Beni, L., Aleiner, I., Andersen, T.I., Ansmann, M., Arute, F., Arya, K., Asfaw, A., Astrakhantsev, N., Atalaya, J., Babbush, R., Bacon, D., Ballard, B., Bardin, J.C., Bausch, J., Bengtsson, A., Bilmes, A., Blackwell, S., Boixo, S., Bortoli, G., Bourassa, A., Bovaird, J., Brill, L., Broughton, M., Browne, D.A., Buchea, B., Buckley, B.B., Buell, D.A., Burger, T., Burkett, B., Bushnell, N., Cabrera, A., Campero, J., Chang, H.S., Chen, Y., Chen, Z., Chiaro, B., Chik, D., Chou, C., Claes, J., Cleland, A.Y., Cogan, J., Collins, R., Conner, P., Courtney, W., Crook, A.L., Curtin, B., Das, S., Davies, A., Lorenzo, L.D., Debroy, D.M., Demura, S., Devoret, M., Paolo, A.D., Donohoe, P., Drozdov, I., Dunsworth, A., Earle, C., Edlich, T., Eickbusch, A., Elbag, A.M., Elzouka, M., Erickson, C., Faoro, L., Farhi, E., Ferreira, V.S., Burgos, L.F., Forati, E., Fowler, A.G., Foxen, B., Ganjam, S., Garcia, G., Gasca, R., Élie Genois, Giang, W., Gidney, C., Gilboa, D., Gosula, R., Dau, A.G., Graumann, D.,
  Greene, A., Gross, J.A., Habegger, S., Hall, J., Hamilton, M.C., Hansen, M., Harrigan, M.P., Harrington, S.D., Heras, F.J.H., Heslin, S., Heu, P., Higgott, O., Hill, G., Hilton, J., Holland, G., Hong, S., Huang, H.Y., Huff, A., Huggins, W.J., Ioffe, L.B., Isakov, S.V., Iveland, J., Jeffrey, E., Jiang, Z., Jones, C., Jordan, S., Joshi, C., Juhas, P., Kafri, D., Kang, H., Karamlou, A.H., Kechedzhi, K., Kelly, J., Khaire, T., Khattar, T., Khezri, M., Kim, S., Klimov, P.V., Klots, A.R., Kobrin, B., Kohli, P., Korotkov, A.N., Kostritsa, F., Kothari, R., Kozlovskii, B., Kreikebaum, J.M., Kurilovich, V.D., Lacroix, N., Landhuis, D., Lange-Dei, T., Langley, B.W., Laptev, P., Lau, K.M., Guevel, L.L., Ledford, J., Lee, K., Lensky, Y.D., Leon, S., Lester, B.J., Li, W.Y., Li, Y., Lill, A.T., Liu, W., Livingston, W.P., Locharla, A., Lucero, E., Lundahl, D., Lunt, A., Madhuk, S., Malone, F.D., Maloney, A., Mandrá, S., Martin, L.S., Martin, S., Martin, O., Maxfield, C., McClean, J.R., McEwen, M., Meeks, S., Megrant, A.,
  Mi, X., Miao, K.C., Mieszala, A., Molavi, R., Molina, S., Montazeri, S., Morvan, A., Movassagh, R., Mruczkiewicz, W., Naaman, O., Neeley, M., Neill, C., Nersisyan, A., Neven, H., Newman, M., Ng, J.H., Nguyen, A., Nguyen, M., Ni, C.H., O'Brien, T.E., Oliver, W.D., Opremcak, A., Ottosson, K., Petukhov, A., Pizzuto, A., Platt, J., Potter, R., Pritchard, O., Pryadko, L.P., Quintana, C., Ramachandran, G., Reagor, M.J., Rhodes, D.M., Roberts, G., Rosenberg, E., Rosenfeld, E., Roushan, P., Rubin, N.C., Saei, N., Sank, D., Sankaragomathi, K., Satzinger, K.J., Schurkus, H.F., Schuster, C., Senior, A.W., Shearn, M.J., Shorter, A., Shutty, N., Shvarts, V., Singh, S., Sivak, V., Skruzny, J., Small, S., Smelyanskiy, V., Smith, W.C., Somma, R.D., Springer, S., Sterling, G., Strain, D., Suchard, J., Szasz, A., Sztein, A., Thor, D., Torres, A., Torunbalci, M.M., Vaishnav, A., Vargas, J., Vdovichev, S., Vidal, G., Villalonga, B., Heidweiller, C.V., Waltman, S., Wang, S.X., Ware, B., Weber, K., White, T., Wong, K., Woo,
  B.W.K., Xing, C., Yao, Z.J., Yeh, P., Ying, B., Yoo, J., Yosri, N., Young, G., Zalcman, A., Zhang, Y., Zhu, N., Zobrist, N.: Quantum error correction below the surface code threshold. arXiv preprint arXiv:2408.13687  (2024)

\bibitem{acharya2022suppressingquantumerrorsscaling}
Acharya, R., Aleiner, I., Allen, R., Andersen, T.I., Ansmann, M., Arute, F., Arya, K., Asfaw, A., Atalaya, J., Babbush, R., Bacon, D., Bardin, J.C., Basso, J., Bengtsson, A., Boixo, S., Bortoli, G., Bourassa, A., Bovaird, J., Brill, L., Broughton, M., Buckley, B.B., Buell, D.A., Burger, T., Burkett, B., Bushnell, N., Chen, Y., Chen, Z., Chiaro, B., Cogan, J., Collins, R., Conner, P., Courtney, W., Crook, A.L., Curtin, B., Debroy, D.M., Barba, A.D.T., Demura, S., Dunsworth, A., Eppens, D., Erickson, C., Faoro, L., Farhi, E., Fatemi, R., Burgos, L.F., Forati, E., Fowler, A.G., Foxen, B., Giang, W., Gidney, C., Gilboa, D., Giustina, M., Dau, A.G., Gross, J.A., Habegger, S., Hamilton, M.C., Harrigan, M.P., Harrington, S.D., Higgott, O., Hilton, J., Hoffmann, M., Hong, S., Huang, T., Huff, A., Huggins, W.J., Ioffe, L.B., Isakov, S.V., Iveland, J., Jeffrey, E., Jiang, Z., Jones, C., Juhas, P., Kafri, D., Kechedzhi, K., Kelly, J., Khattar, T., Khezri, M., Kieferová, M., Kim, S., Kitaev, A., Klimov, P.V., Klots,
  A.R., Korotkov, A.N., Kostritsa, F., Kreikebaum, J.M., Landhuis, D., Laptev, P., Lau, K.M., Laws, L., Lee, J., Lee, K., Lester, B.J., Lill, A., Liu, W., Locharla, A., Lucero, E., Malone, F.D., Marshall, J., Martin, O., McClean, J.R., Mccourt, T., McEwen, M., Megrant, A., Costa, B.M., Mi, X., Miao, K.C., Mohseni, M., Montazeri, S., Morvan, A., Mount, E., Mruczkiewicz, W., Naaman, O., Neeley, M., Neill, C., Nersisyan, A., Neven, H., Newman, M., Ng, J.H., Nguyen, A., Nguyen, M., Niu, M.Y., O'Brien, T.E., Opremcak, A., Platt, J., Petukhov, A., Potter, R., Pryadko, L.P., Quintana, C., Roushan, P., Rubin, N.C., Saei, N., Sank, D., Sankaragomathi, K., Satzinger, K.J., Schurkus, H.F., Schuster, C., Shearn, M.J., Shorter, A., Shvarts, V., Skruzny, J., Smelyanskiy, V., Smith, W.C., Sterling, G., Strain, D., Szalay, M., Torres, A., Vidal, G., Villalonga, B., Heidweiller, C.V., White, T., Xing, C., Yao, Z.J., Yeh, P., Yoo, J., Young, G., Zalcman, A., Zhang, Y., Zhu, N.: Suppressing quantum errors by scaling a surface
  code logical qubit. Nature  \textbf{614}(7949),  676--681 (2023)

\bibitem{bauer2023symqv}
Bauer-Marquart, F., Leue, S., Schilling, C.: symqv: automated symbolic verification of quantum programs. In: International Symposium on Formal Methods. pp. 181--198. Springer (2023)

\bibitem{bezanson2017julia}
Bezanson, J., Edelman, A., Karpinski, S., Shah, V.B.: Julia: A fresh approach to numerical computing. SIAM review  \textbf{59}(1),  65--98 (2017)

\bibitem{lukin2023logical}
Bluvstein, D., Evered, S.J., Geim, A.A., Li, S.H., Zhou, H., Manovitz, T., Ebadi, S., Cain, M., Kalinowski, M., Hangleiter, D., Bonilla~Ataides, J.P., Maskara, N., Cong, I., Gao, X., Sales~Rodriguez, P., Karolyshyn, T., Semeghini, G., Gullans, M.J., Greiner, M., Vuletić, V., Lukin, M.D.: Logical quantum processor based on reconfigurable atom arrays. Nature  \textbf{626}(7997),  58–65 (2023)

\bibitem{bombin2007optimal}
Bomb{\'\i}n, H., Martin-Delgado, M.A.: Optimal resources for topological two-dimensional stabilizer codes: Comparative study. Physical Review A—Atomic, Molecular, and Optical Physics  \textbf{76}(1),  012305 (2007)

\bibitem{bombin2006topological}
Bombin, H., Martin-Delgado, M.A.: Topological quantum distillation. Physical review letters  \textbf{97}(18),  180501 (2006)

\bibitem{bravyi2005universal}
Bravyi, S., Kitaev, A.: Universal quantum computation with ideal clifford gates and noisy ancillas. Physical Review A—Atomic, Molecular, and Optical Physics  \textbf{71}(2),  022316 (2005)

\bibitem{Breuckmann2021qldpc}
Breuckmann, N.P., Eberhardt, J.N.: Quantum low-density parity-check codes. PRX Quantum  \textbf{2}(4) (2021)

\bibitem{Burgholzer2021equiv}
Burgholzer, L., Wille, R.: Advanced equivalence checking for quantum circuits. IEEE Transactions on Computer-Aided Design of Integrated Circuits and Systems  \textbf{40}(9),  1810–1824 (2021)

\bibitem{Burgholzer2022nonunitary}
Burgholzer, L., Wille, R.: Handling non-unitaries in quantum circuit equivalence checking. In: Proceedings of the 59th ACM/IEEE Design Automation Conference. DAC ’22, ACM (2022)

\bibitem{chen2024automatic}
Chen, K., Ying, M.: Automatic test pattern generation for robust quantum circuit testing. ACM Transactions on Design Automation of Electronic Systems  \textbf{29}(6),  1--36 (2024)

\bibitem{choi1975completely}
Choi, M.D.: Completely positive linear maps on complex matrices. Linear algebra and its applications  \textbf{10}(3),  285--290 (1975)

\bibitem{fang2024symbolic}
Fang, W., Ying, M.: Symbolic execution for quantum error correction programs. Proceedings of the ACM on Programming Languages  \textbf{8}(PLDI),  1040--1065 (2024)

\bibitem{fang2024symphase}
Fang, W., Ying, M.: Symphase: Phase symbolization for fast simulation of stabilizer circuits. In: Proceedings of the 61st ACM/IEEE Design Automation Conference. pp.~1--6 (2024)

\bibitem{ying2021hoareclavar}
Feng, Y., Ying, M.: Quantum hoare logic with classical variables. ACM Transactions on Quantum Computing  \textbf{2}(4),  1–43 (2021)

\bibitem{ghosh2013understanding}
Ghosh, J., Fowler, A.G., Martinis, J.M., Geller, M.R.: Understanding the effects of leakage in superconducting quantum-error-detection circuits. Physical Review A  \textbf{88}(6),  062329 (2013)

\bibitem{gottesman1997stabilizer}
Gottesman, D.: Stabilizer codes and quantum error correction. California Institute of Technology (1997)

\bibitem{gottesman1998heisenberg}
Gottesman, D.: The heisenberg representation of quantum computers. arXiv preprint quant-ph/9807006  (1998)

\bibitem{gottesman2009introqec}
Gottesman, D.: An introduction to quantum error correction and fault-tolerant quantum computation. In: Quantum information science and its contributions to mathematics, Proceedings of Symposia in Applied Mathematics. vol.~68 (2010)

\bibitem{gottesman2024surviving}
Gottesman, D.: Surviving as a quantum computer in a classical world. Textbook manuscript preprint  (2024)

\bibitem{guan2024measurement}
Guan, J., Feng, Y., Turrini, A., Ying, M.: Measurement-based verification of quantum markov chains. In: International Conference on Computer Aided Verification. pp. 533--554. Springer (2024)

\bibitem{hietala2020proving}
Hietala, K., Rand, R., Hung, S.H., Li, L., Hicks, M.: Proving quantum programs correct. arXiv preprint arXiv:2010.01240  (2020)

\bibitem{hong2021approximate}
Hong, X., Ying, M., Feng, Y., Zhou, X., Li, S.: Approximate equivalence checking of noisy quantum circuits. In: 2021 58th ACM/IEEE Design Automation Conference (DAC). pp. 637--642. IEEE (2021)

\bibitem{horsman2012surface}
Horsman, D., Fowler, A.G., Devitt, S., Van~Meter, R.: Surface code quantum computing by lattice surgery. New Journal of Physics  \textbf{14}(12),  123011 (2012)

\bibitem{hung2019quantitative}
Hung, S.H., Hietala, K., Zhu, S., Ying, M., Hicks, M., Wu, X.: Quantitative robustness analysis of quantum programs. Proceedings of the ACM on Programming Languages  \textbf{3}(POPL),  1--29 (2019)

\bibitem{Hung2019OnTT}
Hung, S.H., Peng, Y., Wang, X., Zhu, S., Wu, X.: On the theory and practice of invariant-based verification of quantum programs (2019), \url{https://api.semanticscholar.org/CorpusID:208634831}

\bibitem{jamiolkowski1972linear}
Jamio{\l}kowski, A.: Linear transformations which preserve trace and positive semidefiniteness of operators. Reports on mathematical physics  \textbf{3}(4) (1972)

\bibitem{Kitaev2003anyons}
Kitaev, A.: Fault-tolerant quantum computation by anyons. Annals of Physics  \textbf{303}(1),  2–30 (2003)

\bibitem{knill2004fault}
Knill, E.: Fault-tolerant postselected quantum computation: Schemes. arXiv preprint quant-ph/0402171  (2004)

\bibitem{leverrier2022quantumtannercodes}
Leverrier, A., Z{\'e}mor, G.: Quantum tanner codes. In: 2022 IEEE 63rd Annual Symposium on Foundations of Computer Science (FOCS). pp. 872--883. IEEE (2022)

\bibitem{li2020projection}
Li, G., Zhou, L., Yu, N., Ding, Y., Ying, M., Xie, Y.: Projection-based runtime assertions for testing and debugging quantum programs. Proceedings of the ACM on Programming Languages  \textbf{4}(OOPSLA),  1--29 (2020)

\bibitem{ying2019formalverfihoare}
Liu, J., Zhan, B., Wang, S., Ying, S., Liu, T., Li, Y., Ying, M., Zhan, N.: Formal Verification of Quantum Algorithms Using Quantum Hoare Logic, pp. 187--207. Computer Aided Verification (2019)

\bibitem{nan2023quantumsymbolicexecution}
Nan, J., Zichen, W., Jian, W.: Quantum symbolic execution. Quantum Information Processing  \textbf{22}(10), ~389 (2023)

\bibitem{nebe2001invariants}
Nebe, G., Rains, E.M., Sloane, N.J.: The invariants of the clifford groups. Designs, Codes and Cryptography  \textbf{24}(1),  99--122 (2001)

\bibitem{nielsen2010quantum}
Nielsen, M.A., Chuang, I.L.: Quantum Computation and Quantum Information: 10th Anniversary Edition. Cambridge University Press (2010)

\bibitem{DBLP:conf/cav/NiemetzP23}
Niemetz, A., Preiner, M.: Bitwuzla. In: Enea, C., Lal, A. (eds.) Computer Aided Verification - 35th International Conference, {CAV} 2023, Paris, France, July 17-22, 2023, Proceedings, Part {II}. Lecture Notes in Computer Science, vol. 13965, pp. 3--17. Springer (2023). \doi{10.1007/978-3-031-37703-7\_1}, \url{https://doi.org/10.1007/978-3-031-37703-7\_1}

\bibitem{paykin2017qwire}
Paykin, J., Rand, R., Zdancewic, S.: Qwire: a core language for quantum circuits. ACM SIGPLAN Notices  \textbf{52}(1),  846--858 (2017)

\bibitem{robert2022equivalence}
Peham, T., Burgholzer, L., Wille, R.: Equivalence checking of quantum circuits with the zx-calculus. IEEE Journal on Emerging and Selected Topics in Circuits and Systems  \textbf{12}(3),  662--675 (2022)

\bibitem{robert2023equiv}
Peham, T., Burgholzer, L., Wille, R.: Equivalence checking of parameterized quantum circuits: Verifying the compilation of variational quantum algorithms. In: Proceedings of the 28th Asia and South Pacific Design Automation Conference. p. 702–708. ASPDAC '23, Association for Computing Machinery, New York, NY, USA (2023)

\bibitem{preskill2023quantumcomputing40years}
Preskill, J.: Quantum computing 40 years later. In: Feynman Lectures on Computation, pp. 193--244. CRC Press (2023)

\bibitem{reichardt2024logicalcomputationdemonstratedneutral}
Reichardt, B.W., Paetznick, A., Aasen, D., Basov, I., Bello-Rivas, J.M., Bonderson, P., Chao, R., van Dam, W., Hastings, M.B., Paz, A., da~Silva, M.P., Sundaram, A., Svore, K.M., Vaschillo, A., Wang, Z., Zanner, M., Cairncross, W.B., Chen, C.A., Crow, D., Kim, H., Kindem, J.M., King, J., McDonald, M., Norcia, M.A., Ryou, A., Stone, M., Wadleigh, L., Barnes, K., Battaglino, P., Bohdanowicz, T.C., Booth, G., Brown, A., Brown, M.O., Cassella, K., Coxe, R., Epstein, J.M., Feldkamp, M., Griger, C., Halperin, E., Heinz, A., Hummel, F., Jaffe, M., Jones, A.M.W., Kapit, E., Kotru, K., Lauigan, J., Li, M., Marjanovic, J., Megidish, E., Meredith, M., Morshead, R., Muniz, J.A., Narayanaswami, S., Nishiguchi, C., Paule, T., Pawlak, K.A., Pudenz, K.L., Pérez, D.R., Simon, J., Smull, A., Stack, D., Urbanek, M., van~de Veerdonk, R.J.M., Vendeiro, Z., Weverka, R.T., Wilkason, T., Wu, T.Y., Xie, X., Zalys-Geller, E., Zhang, X., Bloom, B.J.: Logical computation demonstrated with a neutral atom quantum processor. arXiv
  preprint arXiv:2411.11822  (2024)

\bibitem{rodriguez2024experimental}
Rodriguez, P.S., Robinson, J.M., Jepsen, P.N., He, Z., Duckering, C., Zhao, C., Wu, K.H., Campo, J., Bagnall, K., Kwon, M., et~al.: Experimental demonstration of logical magic state distillation. arXiv preprint arXiv:2412.15165  (2024)

\bibitem{sawicki2017universality}
Sawicki, A., Karnas, K.: Universality of single-qudit gates. In: Annales Henri Poincar{\'e}. vol.~18, pp. 3515--3552. Springer (2017)

\bibitem{shi2021symbolic}
Shi, W.J., Cao, Q.X., Deng, Y.X., Jiang, H.R., Feng, Y.: Symbolic reasoning about quantum circuits in {Coq}. Journal of Computer Science and Technology  \textbf{36},  1291--1306 (2021)

\bibitem{shi2020certiq}
Shi, Y., Tao, R., Li, X., Javadi-Abhari, A., Cross, A.W., Chong, F.T., Gu, R.: Certiq: a mostly-automated verification of a realistic quantum compiler. arXiv preprint arXiv:1908.08963  (2019)

\bibitem{shor1996fault}
Shor, P.W.: Fault-tolerant quantum computation. In: Proceedings of 37th conference on foundations of computer science. pp. 56--65. IEEE (1996)

\bibitem{steane1999quantum}
Steane, A.M.: Quantum reed-muller codes. IEEE Transactions on Information Theory  \textbf{45}(5),  1701--1703 (1999)

\bibitem{suchara2015leakage}
Suchara, M., Cross, A.W., Gambetta, J.M.: Leakage suppression in the toric code. In: 2015 IEEE International Symposium on Information Theory (ISIT). pp. 1119--1123. IEEE (2015)

\bibitem{tamiya2024polylog}
Tamiya, S., Koashi, M., Yamasaki, H.: Polylog-time-and constant-space-overhead fault-tolerant quantum computation with quantum low-density parity-check codes. arXiv preprint arXiv:2411.03683  (2024)

\bibitem{tao2021gleipnir}
Tao, R., Shi, Y., Yao, J., Hui, J., Chong, F.T., Gu, R.: Gleipnir: toward practical error analysis for quantum programs. In: Proceedings of the 42nd ACM SIGPLAN International Conference on Programming Language Design and Implementation. p. 48–64. PLDI 2021, Association for Computing Machinery, New York, NY, USA (2021)

\bibitem{gu2022giallar}
Tao, R., Shi, Y., Yao, J., Li, X., Javadi-Abhari, A., Cross, A.W., Chong, F.T., Gu, R.: Giallar: Push-button verification for the qiskit quantum compiler. In: Proceedings of the 43rd ACM SIGPLAN International Conference on Programming Language Design and Implementation. p. 641–656. PLDI 2022, Association for Computing Machinery, New York, NY, USA (2022)

\bibitem{ying2022equivseq}
Wang, Q., Li, R., Ying, M.: Equivalence checking of sequential quantum circuits. IEEE Transactions on Computer-Aided Design of Integrated Circuits and Systems  \textbf{41}(9),  3143--3156 (2022)

\bibitem{ying2012floyd}
Ying, M.: Floyd--hoare logic for quantum programs. ACM Transactions on Programming Languages and Systems (TOPLAS)  \textbf{33}(6),  1--49 (2012)

\bibitem{ying2024practical}
Ying, M.: A practical quantum hoare logic with classical variables, i. arXiv preprint arXiv:2412.09869  (2024)

\bibitem{ying2010flowchart}
Ying, M., Feng, Y.: A flowchart language for quantum programming. IEEE Transactions on Software Engineering  \textbf{37}(4),  466--485 (2010)

\bibitem{wu2017invariants}
Ying, M., Ying, S., Wu, X.: Invariants of quantum programs: Characterisations and generation. In: Proceedings of the 44th ACM SIGPLAN Symposium on Principles of Programming Languages. p. 818–832. POPL '17, Association for Computing Machinery, New York, NY, USA (2017)

\bibitem{ying2023coqq}
Zhou, L., Barthe, G., Strub, P.Y., Liu, J., Ying, M.: {CoqQ}: Foundational verification of quantum programs. Proc. ACM Program. Lang.  \textbf{7}(POPL) (2023)

\end{thebibliography}
%
% \begin{thebibliography}{8}
% \bibitem{ref_article1}
% Author, F.: Article title. Journal \textbf{2}(5), 99--110 (2016)

% \bibitem{ref_lncs1}
% Author, F., Author, S.: Title of a proceedings paper. In: Editor,
% F., Editor, S. (eds.) CONFERENCE 2016, LNCS, vol. 9999, pp. 1--13.
% Springer, Heidelberg (2016). \doi{10.10007/1234567890}

% \bibitem{ref_book1}
% Author, F., Author, S., Author, T.: Book title. 2nd edn. Publisher,
% Location (1999)

% \bibitem{ref_proc1}
% Author, A.-B.: Contribution title. In: 9th International Proceedings
% on Proceedings, pp. 1--2. Publisher, Location (2010)

% \bibitem{ref_url1}
% LNCS Homepage, \url{http://www.springer.com/lncs}, last accessed 2023/10/25
% \end{thebibliography}

\newpage\appendix

\section{Symbolic Stabilizer Operations}\label{sec:symbolic-rule}
\begin{definition}[Symbolic Clifford Gates]
For a Clifford unitary \(U\), the symbolic unitary transform function  \(\textup{UT}(U,\overline{q},\widetilde{\rho})\) is defined as
\[\textup{UT}(U,\overline{q},\widetilde{\rho})=\langle (-1)^{g_1(s_1,\ldots,s_m)}U_{\overline{q}} P_1 U_{\overline{q}}^\dag,\ldots,(-1)^{g_n(s_1,\ldots,s_m)}U_{\overline{q}} P_n U_{\overline{q}}^\dag \rangle.\]
\end{definition}

%For the statement \(x\coloneqq \meas\  q_i\),
\begin{definition}[Symbolic Measurement]
The symbolic measurement function \(\textup{M}(q_i,\widetilde{\rho})\) is defined separately in the following two cases:
\begin{enumerate}
    \item \(Z_{q_i}\) commutes with all Pauli operators \(P_j\) in \cref{eq-150251}. This means \(Z_{q_i}\) is expressible by the set of Pauli operators \(\{P_i\}_{i=1}^n\). More specifically, there exists a bit value \(b\) and a sequence of indices \(1\leq j_1<j_2<\cdots < j_k\) such that \(Z_{q_i}=(-1)^b P_{j_1}P_{j_2}\cdots P_{j_k}\). In this case, the measurement outcome is deterministic, and
    \[\textup{M}(q_i,\widetilde{\rho})=(b\oplus g_{j_1}(s_1,\ldots,s_m)\oplus\cdots\oplus g_{j_k}(s_1,\ldots,s_m),1,\widetilde{\rho}).\]

    \item \(Z_{q_i}\) anti-commutes with some of the Pauli operators \(P_j\) in \cref{eq-150251}. By re-choosing a set of stabilizer generators, we can assume without loss of generality that \(Z_{q_i}\) only anti-commutes with \(P_1\), and commutes with \(P_2,\ldots,P_n\). In this case, the measurement outcome is \(0\) or \(1\) with equal probability, and we introduce a new symbol \(s\) to record this measurement outcome as
    \[\textup{M}(q_i,\widetilde{\rho}) = \left(s, \frac{1}{2}, \langle (-1)^s Z_{q_i}, (-1)^{g_2(s_1,\ldots,s_m)} P_2,\ldots, (-1)^{g_n(s_1,\ldots,s_m)} P_n \rangle\right)\]
\end{enumerate}
\end{definition}

The initialization statement on \(q_i\) can be viewed as a measurement on \(q_i\) followed by a Pauli-\(X\) gate controlled by the measurement outcome.

%For the statement \(q_i \coloneqq \ket{0}\)
\begin{definition}[Symbolic State Initialization]
The symbolic initialization function \(\textup{IN}(q_i,\widetilde{\rho})\) is defined separately in the following cases:
\begin{enumerate}
\item \(Z_{q_i}\) commutes with all Pauli operators \(P_j\) in \cref{eq-150251}. There exists a bit value \(b\) and a sequence of indices \(j_1,\ldots,j_k\) such that \(Z_{q_i}=(-1)^b P_{j_1}P_{j_2}\cdots P_{j_k}\). Then,
\[\textup{IN}(q_i,\widetilde{\rho})=\langle (-1)^{g_1(s_1,\ldots,s_m)\oplus r\cdot c_1}P_1,\ldots,(-1)^{g_n(s_1,\ldots,s_m)\oplus r\cdot c_n}P_n\rangle,\]
where \(c_j\in\{0,1\}\) indicates whether \(X_{q_i}\) commutes or anti-commutes with \(P_j\) and \(r=b\oplus g_{j_1}(s_1,\ldots,s_m)\oplus \cdots \oplus g_{j_k}(s_1,\ldots,s_m)\) can be viewed as the measurement outcome of \(\textup{M}(q_i,\widetilde{\rho})\).
\item \(Z_{q_i}\) anti-commutes with some of the Pauli operators \(P_j\) in \cref{eq-150251}. Without loss of generality, we assume \(Z_{q_i}\) only anti-commutes with \(P_1\) and commutes with \(P_2,\ldots,P_n\), then
\[\textup{IN}(q_i,\widetilde{\rho})=\langle Z_{q_i}, (-1)^{g_2(s_1,\ldots,s_m)}P_2,\ldots,(-1)^{g_n(s_1,\ldots,s_m)}P_n\rangle.\]
\end{enumerate}
\end{definition}

\section{Assertion for the Decoders}\label{sec-assert-dec-1312156}
In this section, we show how to implement the assertion that is used in the transition \((\textup{SF-CO})\) for the decoding algorithm. 

Suppose our QECC \(\mathcal{C}\) has parameters \([[n,k,d]]\). Let \(t=\lfloor \frac{d-1}{2} \rfloor\) be the number of correctable errors of \(\mathcal{C}\). In our task of building a fault-tolerant error correction gadget, it suffices to let the decoding algorithm behave correctly when the detected syndromes can be reproduced by a Pauli operator of weight \(\leq t\). We will then explain how to construct the assertion.

Let \(\mathcal{G}=\langle P_1,\ldots,P_{n-k}\rangle\) be a set of generators of the stabilizer group \(\mathcal{G}\) of the stabilizer code \(\mathcal{C}\). Then, the syndrome measurements can be defined to be a sequence of multi-qubit Pauli measurements \(P_1,\ldots,P_{n-k}\). Suppose \(m_1,\ldots,m_{n-k}\in\{0,1\}\) are the corresponding measurement outcomes and are provided to the decoding algorithm.

Suppose now a Pauli error \(Q\) occurred on the data qubits. Note that if \(Q\) commutes with \(P_i\), then for any logical state \(\ket{\overline{\psi}}\in\mathcal{C}\), we have
\[Q\ket{\overline{\psi}}=QP_i\ket{\overline{\psi}}=P_iQ\ket{\overline{\psi}},\]
which means \(Q\ket{\overline{\psi}}\) is also in the \(+1\) eigenvalue space of \(P_i\). Thus the measurement outcome of \(P_i\) on \(Q\ket{\overline{\psi}}\) is \(0\) (by convention). In contrast, if \(Q\) anti-commutes with \(P_i\), then
\[Q\ket{\overline{\psi}}=QP_i\ket{\overline{\psi}}=-P_iQ\ket{\overline{\psi}},\]
which means \(Q\ket{\overline{\psi}}\) is in the \(-1\) eigenvalue space of \(P_i\). Thus the measurement outcome of \(P_i\) on \(Q\ket{\overline{\psi}}\) is \(1\). 

The commutativity of Pauli operators can be computed with linear algebra over the field \(\textup{GF}(2)\). Specifically, let \([x_1,\ldots,x_n,z_1,\ldots,z_n]\) be a vector of length \(2n\), where \(x_i,z_i\in\{0,1\}\), \(x_i\) (or \(z_i\)) represent whether a Pauli \(X\) (or \(Z\)) exists at qubit \(i\), and if both \(x_i=z_i=1\), then a Pauli \(Y\) exists at qubit \(i\). For example, the Pauli operator \(X\otimes Z \otimes Y\) corresponds to the vector \([1,0,1,0,1,1]\). Let vector \(v_1,v_2\) represent two \(n\)-qubit Pauli operators \(P_1,P_2\). Then, the value
\[c_{P_1,P_2}\coloneqq v_1^T \Lambda v_2\in\{0,1\}\]
indicate whether \(P_1\) commutes with \(P_2\), where \(\Lambda\) is
\[\Lambda=\begin{bmatrix}0&I\\I&0\end{bmatrix},\]
in which \(I\) is an identity matrix of size \(n\times n\). The weight of a Pauli operator \(P\) can also be computed by its vector representation \(v\):
\[\textup{wt}(v)=\sum_{i=1}^{n} v_i\lor v_{n+i},\]
where \(\lor\) is the logical-or and the summation is for integers but not \(\textup{GF}(2)\).

Let \(G\) be the \((n-k)\times 2n\) matrix with each row being the vector representation of the stabilizer \(P_i\), and \(q\) be the vector representation of the Pauli error \(Q\).
Then, the \(i\)-th entry in the vector
\[G\Lambda q\]
represents whether \(Q\) commutes with \(P_i\) (\(0\) for commute and \(1\) for anti-commute), which is exactly the syndromes when the Pauli error \(Q\) occurred on the data qubits. Recall that \(m=[m_1,m_2,\ldots,m_{n-k}]\) be the detected syndromes. Suppose the output of the decoding algorithm (i.e., the predicted error) is \(r=[r_1,\ldots,r_n,r_{n+1},\ldots,r_{2n}]\). Our requirement for the decoding algorithm is
\[(\exists v\in \textup{GF}(2)^{2n}, \textup{wt}(v)\leq t\land G\Lambda v=m)\rightarrow \textup{wt}(r)\leq t\land G\Lambda r=m,\]
That is, when there exists a Pauli error with weight \(\leq t\) that can reproduce the syndrome, then the output by the decoding algorithm must output a Pauli error with weight \(\leq t\) that can reproduce the syndrome.

\section{Proof of Proposition~\ref{prop-1250405}}\label{sec:proof-prop1}
\begin{proof}
%When the tree \(\mathcal{T}\) is fixed, the transitions between tree nodes are fixed. \jp{I am not sure what this first sentence means.} 
Suppose \(\mathcal{T}\) is a transition tree.
Note that each transition corresponds to a quantum channel, e.g., the \textup{(UT)} transition corresponds to a quantum channel \(\rho\mapsto U_{\overline{q}}\rho U^\dag_{\overline{q}}\).
Since the composition of quantum channels is also a quantum channel, each finite transition path in the tree corresponds to a quantum channel. For any node \(v\in \mathcal{T}\), we use \(\mathcal{E}_v\) to denote the quantum channel corresponding to the path from the root to the node \(v\).

Let \(\mathcal{T}_t\) be the tree truncated from \(\mathcal{T}\) to height \(t\).
Let \(A_t\) be the multi-set consisting of leaf nodes of \(\mathcal{T}_t\) that have terminated (i.e., the nodes of the form \(\langle \downarrow, \sigma,\rho\rangle\)), and let \(B_t\) be the multi-set consisting of leaf nodes of \(\mathcal{T}_t\) that have not terminated.
Then, it is easy to prove by induction on \(t\) that 
\[\sum_{v\in A_t} \mathcal{E}_{v}+\sum_{v\in B_t}\mathcal{E}_v\] 
is a trace preserving quantum channel. Let
\[\mathcal{E}_{t}=\sum_{v\in A_t} \mathcal{E}_v.\]
% \jp{Should the RHS of the above equation be $\sum_{v\in A_t} \mathcal{E}_v$?}
Then \(\mathcal{E}_t\) must also be a quantum channel (not necessarily trace preserving). Our goal is to prove that \(\lim_{t\rightarrow \infty}\mathcal{E}_t\) exists and is a quantum channel.

Note that \(\mathcal{E}_{t}\preceq \mathcal{E}_{t+1}\) since \(A_{t}\subseteq A_{t+1}\). Here, \(\mathcal{E}_1\preceq \mathcal{E}_2\) means \(\mathcal{E}_2-\mathcal{E}_1\) is completely positive. Let \(C_\mathcal{E}\) be the Choi-Jamio{\l}kowski state~\cite{choi1975completely,jamiolkowski1972linear} of the quantum channel \(\mathcal{E}\). It is well known that the linear map \(\mathcal{E}:\mathcal{L}(\mathcal{H}_0)\rightarrow \mathcal{L}(\mathcal{H}_1)\) is a quantum channel if and only if \(C_\mathcal{E}\) is positive semi-definite and \(\tr_{\mathcal{H}_1}(C_\mathcal{E})\sqsubseteq I_{\mathcal{H}_0}\), where \(\mathcal{L}(\mathcal{H}_0)\) and \(\mathcal{L}(\mathcal{H}_1)\) are the space of linear operators on \(\mathcal{H}_0\) and \(\mathcal{H}_1\), respectively, \(\tr_{\mathcal{H}_0}\) is the partial trace on \(\mathcal{H}_0\) and \(\sqsubseteq\) refers to the Loewner order. Then, it suffices to prove that the limit \(\lim_{t\rightarrow \infty} C_{\mathcal{E}_t}\) exists and is the Choi-Jamio{\l}kowski state of a quantum channel. Since \(\mathcal{E}_{t}\preceq \mathcal{E}_{t+1}\), we have 
\[C_{\mathcal{E}_{t}}\sqsubseteq C_{\mathcal{E}_{t+1}},\quad\quad \textup{and} \quad \quad \tr(C_{\mathcal{E}_t})\leq d_{\mathcal{H}_0},\]
which is a constant upper bound. Therefore, by standard analysis on semi-definite positive matrices, we can prove that the limit of \(\{C_{\mathcal{E}_t}\}\) exists and is also a semi-definite positive matrix. On the other hand, since the partial trace is continuous, we have 
\[\tr_{\mathcal{H}_1}(\lim_{t\rightarrow \infty}C_{\mathcal{E}_t})\sqsubseteq I_{\mathcal{H}_0},\]
which means the limit is the Choi-Jamio{\l}kowski state of a quantum channel.
\end{proof}

\section{Some Notations and Preliminary Results}
In this section, we introduce some notations and preliminary results that will be used in our proof later.

\subsection{Support of Density Operators}
The following are useful characterizations about the support of positive semi-definite operators.
\begin{proposition}\label{prop-1280005}
Suppose \(A\) and \(B\) are positive semi-definite operators, then
\[\supp(A)\subseteq \supp(B)\Longleftrightarrow \exists c>0,\,\, A\sqsubseteq cB.\]
\end{proposition}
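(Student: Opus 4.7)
\section*{Proof proposal for Proposition \ref{prop-1280005}}

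My plan is to prove the two directions separately, with the $(\Leftarrow)$ direction being essentially immediate and the $(\Rightarrow)$ direction using a spectral argument to produce a concrete constant $c$.

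For the $(\Leftarrow)$ direction, suppose $A \sqsubseteq cB$ for some $c>0$. I would pick any unit vector $\ket{\psi}$ in the orthogonal complement of $\supp(B)$, so that $\bra{\psi}B\ket{\psi}=0$. Since $0 \sqsubseteq A \sqsubseteq cB$, we get $0 \leq \bra{\psi}A\ket{\psi} \leq c\bra{\psi}B\ket{\psi} = 0$, hence $\bra{\psi}A\ket{\psi}=0$. Because $A$ is positive semi-definite, this forces $A\ket{\psi}=0$ (write $A = \sum_i \mu_i \ket{w_i}\!\bra{w_i}$ and observe that $\bra{\psi}A\ket{\psi}=\sum_i\mu_i|\braket{w_i}{\psi}|^2 = 0$ with $\mu_i>0$ implies $\braket{w_i}{\psi}=0$ for all $i$). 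Hence $\supp(B)^{\perp} \subseteq \ker(A) = \supp(A)^{\perp}$, and taking complements yields $\supp(A) \subseteq \supp(B)$.

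For the $(\Rightarrow)$ direction, assume $\supp(A) \subseteq \supp(B)$ and let $P_B$ denote the orthogonal projector onto $\supp(B)$. The spectral decomposition of $B$ reads $B = \sum_i \lambda_i \ket{v_i}\!\bra{v_i}$ with $\lambda_i>0$ and $\{\ket{v_i}\}$ an orthonormal basis of $\supp(B)$. Let $\lambda_{\min} > 0$ be the smallest such eigenvalue; then $B \sqsupseteq \lambda_{\min} P_B$. On the other hand, since $\supp(A)\subseteq \supp(B)$, we have $A = P_B A P_B$, and therefore $A \sqsubseteq \|A\|\, P_B$ where $\|A\|$ denotes the operator norm. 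Combining,
\[
A \;\sqsubseteq\; \|A\| \, P_B \;\sqsubseteq\; \frac{\|A\|}{\lambda_{\min}}\, B,
\]
so the constant $c := \|A\|/\lambda_{\min}$ works (and is strictly positive whenever $A\neq 0$; the case $A = 0$ is trivial with any $c>0$).

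I do not expect any real obstacle here. The only subtlety is making sure that in the forward direction we exploit positivity of $A$ rather than just the Loewner inequality alone (as the step from $\bra{\psi}A\ket{\psi}=0$ to $A\ket{\psi}=0$ is where positivity is used). Everything else is finite-dimensional linear algebra, so the proof is short and the constant $c$ is explicit.
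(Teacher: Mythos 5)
Your proof is correct. The paper states this proposition without proof, treating it as a standard fact about positive semi-definite operators on finite-dimensional spaces, so there is no authorial argument to compare against; your two-directional argument (using positivity of $A$ to upgrade $\bra{\psi}A\ket{\psi}=0$ to $A\ket{\psi}=0$ in one direction, and the explicit constant $c=\Abs{A}/\lambda_{\min}$ via $\lambda_{\min}P_B \sqsubseteq B$ in the other) is exactly the standard justification one would supply.
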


By doing this, we can easily get the following results:

\begin{proposition}\label{prop-1271706}
Suppose \(A_i\) are positive semi-definite operators, then
\[\sum_{i}\supp(A_i)=\supp\left(\sum_i A_i\right),\]
where the first summation is on subspaces (the span of the union of subspaces).
\end{proposition}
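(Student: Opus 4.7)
The plan is to establish both inclusions by invoking Proposition~\ref{prop-1280005} once in each direction.

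For the inclusion $\sum_i \supp(A_i) \subseteq \supp(\sum_j A_j)$, I would fix an arbitrary index $i$ and observe that since every $A_j$ is positive semi-definite, the partial sum $\sum_{j \neq i} A_j$ is also positive semi-definite, so $A_i \sqsubseteq \sum_j A_j$. Applying Proposition~\ref{prop-1280005} with $c = 1$ yields $\supp(A_i) \subseteq \supp(\sum_j A_j)$ for every $i$, and taking the span of the union over $i$ produces the desired subspace inclusion.

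For the reverse inclusion $\supp(\sum_j A_j) \subseteq \sum_i \supp(A_i)$, Proposition~\ref{prop-1280005} is not directly applicable because the right-hand side is a subspace rather than an operator. My plan is to manufacture a positive semi-definite operator whose support is precisely $V := \sum_i \supp(A_i)$ and which dominates $\sum_j A_j$ in the Loewner order. Let $P_V$ denote the orthogonal projector onto $V$. Since the nonzero eigenvectors of each $A_i$ lie in $\supp(A_i) \subseteq V$ and its eigenvalues are bounded by $\|A_i\|$, a short spectral argument shows $A_i \sqsubseteq \|A_i\|\, P_V$. Summing gives $\sum_j A_j \sqsubseteq c\, P_V$ with $c = \sum_j \|A_j\| > 0$, and Proposition~\ref{prop-1280005} then delivers $\supp(\sum_j A_j) \subseteq \supp(P_V) = V$.

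The only point requiring genuine care is the reverse inclusion: because Proposition~\ref{prop-1280005} compares two operators rather than an operator and a subspace, one must first replace the target subspace $V$ by the dominating operator $c\, P_V$, whose support is exactly $V$. This is a brief bookkeeping step, so I do not anticipate a substantial obstacle beyond this adjustment.
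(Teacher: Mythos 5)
Your proof is correct and follows exactly the route the paper intends: the paper gives no explicit proof, merely noting that the result follows easily from Proposition~\ref{prop-1280005}, and your two applications of that proposition (with the projector $P_V$ scaled by $c=\sum_j\|A_j\|$ serving as the dominating operator for the reverse inclusion) are the natural way to carry this out. The only cosmetic caveat is the degenerate case $\sum_j\|A_j\|=0$, where all $A_j$ vanish and both sides are the zero subspace, so the claim holds trivially.
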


\begin{proposition}\label{prop-1271718}
Suppose \(\mathcal{E}\) is quantum channel and \(\rho_1,\rho_2\) are positive semi-definite operators. If \(\supp(\rho_1)\supseteq \supp(\rho_2)\), then \(\supp(\mathcal{E}(\rho_1))\supseteq \supp(\mathcal{E}(\rho_2))\).
\end{proposition}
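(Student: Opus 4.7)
The plan is to chain together the two preceding propositions via the Loewner order. By Proposition~\ref{prop-1280005}, the containment $\supp(\rho_2) \subseteq \supp(\rho_1)$ is equivalent to the existence of a constant $c > 0$ such that $\rho_2 \sqsubseteq c\rho_1$. So the first step is to invoke Proposition~\ref{prop-1280005} on the hypothesis to pass from a support containment to a semidefinite inequality.

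Next, I would apply the channel $\mathcal{E}$ to both sides. Since $\mathcal{E}$ is completely positive and linear, it preserves the Loewner order on positive semi-definite operators: from $c\rho_1 - \rho_2 \sqsupseteq 0$ we get $\mathcal{E}(c\rho_1 - \rho_2) = c\,\mathcal{E}(\rho_1) - \mathcal{E}(\rho_2) \sqsupseteq 0$, i.e.\ $\mathcal{E}(\rho_2) \sqsubseteq c\,\mathcal{E}(\rho_1)$. Then I would apply Proposition~\ref{prop-1280005} again in the reverse direction, concluding $\supp(\mathcal{E}(\rho_2)) \subseteq \supp(\mathcal{E}(\rho_1))$, which is the desired statement.

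There is no real obstacle: the only subtle point is the justification that complete positivity (or even mere positivity plus linearity) of $\mathcal{E}$ suffices to preserve the inequality $\rho_2 \sqsubseteq c\rho_1$, and this follows immediately from the fact that $c\rho_1 - \rho_2$ is itself positive semi-definite, hence mapped to a positive semi-definite operator by $\mathcal{E}$. Thus the proof is essentially a two-line application of Proposition~\ref{prop-1280005} sandwiching the monotonicity of $\mathcal{E}$.
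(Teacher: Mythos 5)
Your proof is correct and matches the paper's intended argument: the paper derives this proposition (without writing out details) precisely ``by doing this,'' i.e.\ by sandwiching the positivity-preserving linearity of $\mathcal{E}$ between two applications of Proposition~\ref{prop-1280005}. Nothing is missing.
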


\begin{proposition}\label{prop-1271727}
Suppose \(\mathcal{E}:\rho\mapsto \sum_{i} E_i\rho E_i^\dag\) is a quantum channel and \(\rho\) is a quantum state. Then, for any operator \(E\) that is a linear combination of \(E_i\) (i.e., \(E=\sum_{i} a_i E_i\)), we have
\[\supp(\mathcal{E}(\rho))\supseteq \supp(E\rho E^\dag).\]
\end{proposition}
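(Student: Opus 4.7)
The plan is to use the characterizations already established in Propositions~\ref{prop-1280005} and~\ref{prop-1271706} to convert the statement about the support of $\mathcal{E}(\rho)$ into a pointwise linear-algebra claim about column spaces, and then simply expand the linear combination $E=\sum_i a_i E_i$.

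Concretely, I would first apply Proposition~\ref{prop-1271706} to rewrite
\[
\supp(\mathcal{E}(\rho)) \;=\; \supp\Bigl(\sum_i E_i\rho E_i^{\dagger}\Bigr) \;=\; \sum_i \supp(E_i\rho E_i^{\dagger}),
\]
where the sum on the right is the span of the union of subspaces. Thus it suffices to prove the inclusion $\supp(E\rho E^{\dagger})\subseteq \sum_i\supp(E_i\rho E_i^{\dagger})$. Next, introduce a square root $A\coloneqq \sqrt{\rho}$, so $\rho=AA^{\dagger}$, and set $F_i\coloneqq E_i A$. Then $E_i\rho E_i^{\dagger}=F_iF_i^{\dagger}$ and
\[
E\rho E^{\dagger} \;=\; \Bigl(\sum_i a_i F_i\Bigr)\Bigl(\sum_i a_i F_i\Bigr)^{\dagger}.
\]
For any positive operator of the form $BB^{\dagger}$, the support equals the column space of $B$, so $\supp(E_i\rho E_i^{\dagger})=\mathrm{col}(F_i)$ and $\supp(E\rho E^{\dagger})=\mathrm{col}(\sum_i a_i F_i)$.

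The inclusion now becomes transparent: every vector in $\mathrm{col}(\sum_i a_i F_i)$ has the form $\sum_i a_i F_i v$ for some $v$, and each summand $a_i F_i v$ lies in $\mathrm{col}(F_i)=\supp(E_i\rho E_i^{\dagger})$. Hence the whole sum lies in $\sum_i \supp(E_i\rho E_i^{\dagger})=\supp(\mathcal{E}(\rho))$, which is exactly what we needed.

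I do not anticipate a substantial obstacle here; the statement is essentially a linear-algebraic identity about column spaces, and the only mild subtlety is making sure to convert $\supp$ of a positive operator into a column-space statement (justifying the equality $\supp(BB^\dagger)=\mathrm{col}(B)$ from a singular-value decomposition, or equivalently applying Proposition~\ref{prop-1280005} in both directions). If one preferred to avoid introducing $A=\sqrt{\rho}$, one could instead use the spectral decomposition $\rho=\sum_k \lambda_k \lvert\psi_k\rangle\langle\psi_k\rvert$ with $\lambda_k>0$, note that $\supp(E_i\rho E_i^{\dagger})=\spanspace\{E_i\lvert\psi_k\rangle\}_k$, and observe that $E\lvert\psi_k\rangle=\sum_i a_i E_i\lvert\psi_k\rangle$ already lies in $\sum_i\supp(E_i\rho E_i^{\dagger})$; this yields an essentially identical argument.
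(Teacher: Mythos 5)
Your argument is correct. The paper states Proposition~\ref{prop-1271727} without proof (it is listed among the preliminary results whose verification is left to the reader), so there is no in-paper argument to compare against; your derivation is exactly the standard one that was presumably intended. The chain $\supp(\mathcal{E}(\rho))=\sum_i\supp(E_i\rho E_i^{\dagger})$ via Proposition~\ref{prop-1271706}, the identification $\supp(BB^{\dagger})=\mathrm{col}(B)$ (justified by $\ker(BB^{\dagger})=\ker(B^{\dagger})$ or an SVD), and the observation that $\mathrm{col}\bigl(\sum_i a_iF_i\bigr)\subseteq\sum_i\mathrm{col}(F_i)$ are all sound, and the spectral-decomposition variant you sketch at the end is an equally valid shortcut.
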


\subsection{\texorpdfstring{\(r\)-Error Corrector}{r-Error Corrector}}
For a positive integer \(r\leq t\), we define the \(\bm{r}\)\textbf{-error corrector} as the quantum recovery channel \(\mathcal{F}_r\) such that it can correct and can only correct up to \(r\) errors. In fact, we can explicitly construct the \(r\)-error corrector \(\mathcal{F}_r\) by modifying the full corrector (i.e., the \(t\)-error corrector):
\[\mathcal{F}_r:\rho\mapsto \sum_i R_i\mathcal{C}_i\rho\mathcal{C}_i^\dag R_i^\dag,\]
% \jp{Should one of those $\mathcal{C}_i$ above be $\mathcal{C}_i^\dagger$?}
where \(\mathcal{C}_i\) are orthogonal projectors of pairwise orthogonal syndrome subspaces containing at most \(r\) errors, and \(R_i\) is the recovery operator corresponding to the syndrome subspace \(\mathcal{C}_i\) (for more details about the recovery channel, see, e.g., \cite{nielsen2010quantum}). Note that if there are more than \(r\) errors, then \(\mathcal{F}_r\) is strictly trace-decreasing. Specifically, we have the following result:
\begin{proposition}\label{prop-1272320}
For and quantum state \(\rho\) and any logical state \(\ket{\overline{\psi}}\in\mathcal{C}\), we have
\[\mathcal{F}_r(\rho)=\ketbra{\overline{\psi}}{\overline{\psi}}\Longleftrightarrow \supp(\rho)\subseteq \mathcal{S}_r(\ket{\overline{\psi}}),\]
where \(\mathcal{S}_r(\ket{\overline{\psi}})\) is the \(r\)-error space surrounding \(\ket{\overline{\psi}}\) as defined in \cref{def-1311407}.
%\jp{Should the ``for any logical state'' condition be inside the iff?}
\end{proposition}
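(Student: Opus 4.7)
The plan is to establish the two directions of the biconditional separately, using the explicit Kraus form of the $r$-error corrector together with the defining property that $R_i F_i\ket{\overline{\phi}}=\ket{\overline{\phi}}$ for every logical state $\ket{\overline{\phi}}\in\mathcal{C}$ and every representative Pauli $F_i$ of weight at most $r$ producing the $i$-th syndrome (this is what it means for $\mathcal{F}_r$ to correct all errors of weight $\leq r$, via the Knill--Laflamme condition). Throughout, I would exploit the fact that the images $F_i\mathcal{C}$ form pairwise orthogonal syndrome subspaces whose span is the full $r$-error space $\mathcal{S}_r$ of the code.

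For the direction $\Leftarrow$, assume $\supp(\rho)\subseteq\mathcal{S}_r(\ket{\overline{\psi}})$ and take a spectral decomposition $\rho=\sum_j p_j\ketbra{\phi_j}{\phi_j}$ with each $\ket{\phi_j}\in\mathcal{S}_r(\ket{\overline{\psi}})$. By definition each $\ket{\phi_j}=\sum_a c_{aj}E_a\ket{\overline{\psi}}$ for Paulis $E_a$ of weight $\leq r$; grouping the $E_a$ by syndrome and folding stabilizer factors into the coefficients yields $\ket{\phi_j}=\sum_i\alpha_{ij}F_i\ket{\overline{\psi}}$. Because $F_i\ket{\overline{\psi}}$ lies in the $i$-th syndrome subspace, $\mathcal{C}_i\ket{\phi_j}=\alpha_{ij}F_i\ket{\overline{\psi}}$ and $R_i\mathcal{C}_i\ket{\phi_j}=\alpha_{ij}\ket{\overline{\psi}}$. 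Then $\mathcal{F}_r(\ketbra{\phi_j}{\phi_j})=\sum_i|\alpha_{ij}|^2\ketbra{\overline{\psi}}{\overline{\psi}}=\ketbra{\overline{\psi}}{\overline{\psi}}$ by normalization of $\ket{\phi_j}$ and orthogonality of the $F_i\ket{\overline{\psi}}$, and summing with weights $p_j$ gives the desired equality.

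For the direction $\Rightarrow$, suppose $\mathcal{F}_r(\rho)=\ketbra{\overline{\psi}}{\overline{\psi}}$. Taking traces, $1=\sum_i\tr(\mathcal{C}_i\rho)$, and since the $\mathcal{C}_i$ project onto pairwise orthogonal subspaces whose union spans $\mathcal{S}_r$, Proposition~\ref{prop-1280005} forces $\supp(\rho)\subseteq\mathcal{S}_r$. Writing $\rho=\sum_j p_j\ketbra{\phi_j}{\phi_j}$ spectrally, each $\ket{\phi_j}$ decomposes uniquely as $\ket{\phi_j}=\sum_i\alpha_{ij}F_i\ket{\overline{\chi}_{ij}}$ with $\ket{\overline{\chi}_{ij}}\in\mathcal{C}$ a unit vector, so that $\mathcal{F}_r(\ketbra{\phi_j}{\phi_j})=\sum_i|\alpha_{ij}|^2\ketbra{\overline{\chi}_{ij}}{\overline{\chi}_{ij}}$. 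The resulting identity $\sum_{i,j}p_j|\alpha_{ij}|^2\ketbra{\overline{\chi}_{ij}}{\overline{\chi}_{ij}}=\ketbra{\overline{\psi}}{\overline{\psi}}$ expresses a rank-one projector as a non-negative combination of rank-one PSD operators, which forces every $\ket{\overline{\chi}_{ij}}$ with $p_j|\alpha_{ij}|^2>0$ to be collinear with $\ket{\overline{\psi}}$. Absorbing the phases into the $\alpha_{ij}$ then gives $\ket{\phi_j}\in\mathcal{S}_r(\ket{\overline{\psi}})$, and hence $\supp(\rho)\subseteq\mathcal{S}_r(\ket{\overline{\psi}})$.

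The main obstacle I anticipate is the last inferential step of the $\Rightarrow$ direction: one must carefully justify that a convex combination of rank-one PSD operators equal to a rank-one projector forces each constituent vector to be parallel to that projector's vector, and simultaneously handle the non-uniqueness in the representation $\ket{\phi_j}=\sum_i\alpha_{ij}F_i\ket{\overline{\chi}_{ij}}$ (since any stabilizer can be shifted between $F_i$ and $\ket{\overline{\chi}_{ij}}$). Both issues are resolved by combining the orthogonal direct-sum decomposition $\mathcal{S}_r=\bigoplus_i F_i\mathcal{C}$ with the observation that $\ket{\overline{\phi}}\mapsto F_i\ket{\overline{\phi}}$ is an isometry from $\mathcal{C}$ onto the $i$-th syndrome subspace, which makes the coefficients $\alpha_{ij}$ and the logical parts $\ket{\overline{\chi}_{ij}}$ well-defined up to a common phase and lets the rank-one constraint propagate cleanly.
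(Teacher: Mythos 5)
Your proof is correct. For comparison purposes, note that the paper does not actually prove \cref{prop-1272320}: it is stated as a preliminary fact about the recovery channel (with a pointer to the standard QEC literature) and used as a black box in the proof of \cref{lemma-dis-input-1272245}. Your argument supplies exactly the proof one would expect the authors to have in mind: decompose the $r$-error space as the orthogonal direct sum $\bigoplus_i F_i\mathcal{C}$ of syndrome subspaces, use $R_i\mathcal{C}_i\ket{\phi}=\alpha_i\ket{\overline{\psi}}$ componentwise for the $\Leftarrow$ direction, and for $\Rightarrow$ combine trace preservation (to place $\supp(\rho)$ inside $\bigoplus_i F_i\mathcal{C}$) with the rank-one rigidity fact --- which the paper itself records as its final unnamed proposition in the preliminaries --- to force each logical component onto $\ket{\overline{\psi}}$. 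Two cosmetic points: the step ``$\tr\bigl(\sum_i\mathcal{C}_i\rho\bigr)=\tr(\rho)$ implies $\supp(\rho)\subseteq\operatorname{range}\bigl(\sum_i\mathcal{C}_i\bigr)$'' is more directly the elementary fact about projectors than an application of \cref{prop-1280005}, and the whole argument silently uses that for a stabilizer code with $r\le t$ every weight-$\le r$ Pauli acts on $\mathcal{C}$ as some $F_i$ up to phase (so that $\mathcal{S}_r(\ket{\overline{\psi}})=\spanspace\{F_i\ket{\overline{\psi}}\}_i$); both are standard and neither constitutes a gap.
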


\subsection{Weight of Quantum Channels}

The \textbf{weight of a quantum channel} \(\mathcal{E}:\rho\mapsto \sum_i E_i\rho E_i^\dag\) is the smallest integer \(r\) such that there exists a set of Kraus operators $\{E_i\}$ for \(\mathcal{E}\), where each \(E_i\) is a linear combination of Pauli operators with weight \(\leq r\) (note that the Pauli operators form a linear basis of the matrix space). Then we have the following result
\begin{proposition}\label{prop-1312051}
Suppose \(\mathcal{E}\) is a quantum channel of weight \(r\), then we have
\[\supp(\mathcal{E}(\ketbra{\psi}{\psi}))\subseteq \mathcal{S}_r(\ket{\psi}).\]
\end{proposition}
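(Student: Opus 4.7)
The plan is to unpack the definition of ``weight $r$'' for the channel $\mathcal{E}$ and then push the action of each Kraus operator through $\ket{\psi}$ using linearity, before assembling the pieces via \cref{prop-1271706}.

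First, I would fix a Kraus representation witnessing the weight: by hypothesis there exists $\{E_i\}$ with $\mathcal{E}(\rho)=\sum_i E_i \rho E_i^\dagger$ such that each $E_i$ can be expanded as
\begin{equation*}
E_i \;=\; \sum_j a_{ij}\, P_{ij},
\end{equation*}
where each $P_{ij}$ is a Pauli operator of weight at most $r$ and $a_{ij}\in\mathbb{C}$. Then for every $i$, applying $E_i$ to $\ket{\psi}$ gives
\begin{equation*}
E_i\ket{\psi} \;=\; \sum_j a_{ij}\, P_{ij}\ket{\psi} \;\in\; \spanspace\!\bigl(\{P_{ij}\ket{\psi}\}_j\bigr) \;\subseteq\; \mathcal{S}_r(\ket{\psi}),
\end{equation*}
where the last inclusion is the definition of $\mathcal{S}_r(\ket{\psi})$ in \cref{def-1311407}, since every $P_{ij}$ has weight at most $r$.

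Next I would observe that $\supp\bigl(E_i \ketbra{\psi}{\psi} E_i^\dagger\bigr) = \spanspace(\{E_i\ket{\psi}\})$ (this is at most one-dimensional, and is $\{0\}$ when $E_i\ket{\psi}=0$). By the previous display, this support lies inside $\mathcal{S}_r(\ket{\psi})$ for every $i$. Finally I would sum over $i$: by \cref{prop-1271706} applied to the positive semi-definite operators $E_i\ketbra{\psi}{\psi}E_i^\dagger$,
\begin{equation*}
\supp\bigl(\mathcal{E}(\ketbra{\psi}{\psi})\bigr) \;=\; \sum_i \supp\bigl(E_i\ketbra{\psi}{\psi}E_i^\dagger\bigr) \;\subseteq\; \mathcal{S}_r(\ket{\psi}),
\end{equation*}
which is the desired conclusion.

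There is no real obstacle here: the statement is essentially an immediate consequence of linearity together with the definition of weight. The only subtle point worth flagging is that the decomposition $E_i=\sum_j a_{ij}P_{ij}$ is not required to be a convex combination or even to consist of orthogonal terms; plain complex linear combinations suffice because $\mathcal{S}_r(\ket{\psi})$ is defined as a linear span and is therefore closed under arbitrary linear combinations. Alternatively, one could cite \cref{prop-1271727} directly to each Pauli term $P_{ij}$ — but one would first have to exhibit a channel whose Kraus operators include all the $P_{ij}$, which adds bookkeeping without simplifying the argument, so the direct route above is preferable.
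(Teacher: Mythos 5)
Your proof is correct: it unpacks the paper's definition of channel weight (existence of a Kraus representation in which each $E_i$ is a linear combination of Pauli operators of weight at most $r$), uses linearity and the fact that $\mathcal{S}_r(\ket{\psi})$ is a span, and assembles the supports via \cref{prop-1271706}. The paper states this proposition without proof, treating it as immediate from the definitions, and your argument is exactly the intended one.
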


There is also a similar result in another direction.
\begin{proposition}\label{prop-1312110}
There is a quantum channel \(\mathcal{E}\) of weight \(r\) such that
\[\supp(\mathcal{E}(\ketbra{\psi}{\psi}))\supseteq \mathcal{S}_r(\ket{\psi}).\]
\end{proposition}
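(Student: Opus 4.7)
\medskip

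\noindent\textbf{Proof proposal.} The plan is to exhibit an explicit channel, namely the uniform Pauli mixture over all low-weight Paulis, and then read off both its weight and its image support directly. Let $\{P_1,\ldots,P_N\}$ be an enumeration of all $n$-qubit unsigned Pauli operators of weight at most $r$, and define
\[
\mathcal{E}(\rho) \coloneqq \frac{1}{N}\sum_{i=1}^{N} P_i\,\rho\,P_i^\dagger .
\]
Since every $P_i$ is unitary, $\sum_i \tfrac{1}{N} P_i^\dagger P_i = I$, so $\mathcal{E}$ is a trace-preserving quantum channel with Kraus operators $\tfrac{1}{\sqrt{N}} P_i$. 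Each such Kraus operator is (trivially) a linear combination of Pauli operators of weight at most $r$, so $\mathcal{E}$ has weight at most $r$, as required.

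For the support, applying Proposition~\ref{prop-1271706} (support of a sum of PSD operators equals the sum of supports) to $\mathcal{E}(\ketbra{\psi}{\psi})$ gives
\[
\supp\bigl(\mathcal{E}(\ketbra{\psi}{\psi})\bigr)
= \sum_{i=1}^{N} \supp\bigl(P_i\ketbra{\psi}{\psi}P_i^\dagger\bigr)
= \spanspace\bigl(\{P_i\ket{\psi} : 1\le i\le N\}\bigr),
\]
and by construction of the enumeration this is exactly $\mathcal{S}_r(\ket{\psi})$. Thus $\supp(\mathcal{E}(\ketbra{\psi}{\psi}))\supseteq \mathcal{S}_r(\ket{\psi})$ (in fact with equality), which concludes the proof.

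\medskip

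\noindent\textbf{Remarks.} There is essentially no hard step here: the only thing to double-check is that enumerating \emph{unsigned} Paulis of weight $\le r$ is enough to span the full error space in Definition~\ref{def-1311407}, which holds because any global phase $c\in\{1,-1,i,-i\}$ is absorbed into the span. One could shrink the Kraus set to a minimal spanning subset if a tight weight bound were desired, but this is not needed for the statement as written.
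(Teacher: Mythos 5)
Your proposal is correct and is exactly the construction the paper uses (the uniform mixture of all Pauli conjugations of weight at most $r$); the paper simply asserts the conclusion is "easy to see," whereas you fill in the support computation via Proposition~\ref{prop-1271706}. No issues.
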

\begin{proof}
Define 
\[\mathcal{E}:\rho\mapsto \frac{1}{N} \sum_{i: \textup{wt}(P_i)\leq r} P_i \rho P_i^\dag,\]
where the summation is over all Pauli operators \(P_i\) with weight \(\leq r\), and \(N\) is the number of such Pauli operators. Then it is easy to see \(\mathcal{E}\) satisfies the required condition.
\end{proof}

We will also use the following result.
\begin{proposition}
Suppose \(\mathcal{E}:\rho\mapsto \sum_i E_i\rho E_i^\dag\) is a quantum channel. If \(\mathcal{E}(\ketbra{\psi}{\psi})=\ketbra{\phi}{\phi}\), then for each \(i\)
\[E_i \ket{\psi} = c_i \ket{\phi},\]
for some complex number \(c_i\).
\end{proposition}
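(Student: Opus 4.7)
The plan is to leverage the positivity of each Kraus summand together with the rank-one nature of the right-hand side. First, I would set $\ket{\psi_i} \coloneqq E_i\ket{\psi}$ so that the hypothesis becomes
\[
\sum_i \ketbra{\psi_i}{\psi_i} \;=\; \ketbra{\phi}{\phi}.
\]
Each summand on the left is positive semi-definite (indeed of rank at most one), so from the identity above we get the operator inequality
\[
\ketbra{\psi_i}{\psi_i} \;\sqsubseteq\; \sum_j \ketbra{\psi_j}{\psi_j} \;=\; \ketbra{\phi}{\phi}
\]
for every $i$, since dropping other nonnegative terms can only decrease the sum in the Loewner order.

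Next, I would invoke Proposition~\ref{prop-1280005} (the characterization $\supp(A)\subseteq\supp(B)\Leftrightarrow \exists c>0,\,A\sqsubseteq cB$) in the direction from operator inequality to support containment. Applied to the inequality above with $c=1$, it yields
\[
\supp(\ketbra{\psi_i}{\psi_i}) \;\subseteq\; \supp(\ketbra{\phi}{\phi}) \;=\; \spanspace(\{\ket{\phi}\}).
\]
For the nonzero case, $\supp(\ketbra{\psi_i}{\psi_i})$ is exactly the one-dimensional subspace spanned by $\ket{\psi_i}$, so the containment forces $\ket{\psi_i}$ to be a scalar multiple of $\ket{\phi}$, i.e.\ $E_i\ket{\psi} = c_i\ket{\phi}$ for some $c_i \in \mathbb{C}$. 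The case $E_i\ket{\psi}=0$ is handled trivially by taking $c_i = 0$.

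There is essentially no serious obstacle here; the whole argument reduces to one application of the support characterization already proved in Proposition~\ref{prop-1280005}. The only point worth being careful about is that one should not attempt to normalize $\ket{\psi_i}$ to a unit vector (which could fail if $E_i\ket{\psi}=0$); the statement asks only for the existence of complex scalars $c_i$, not unit coefficients, so the zero case causes no issue. No completeness condition $\sum_i E_i^\dagger E_i = I$ is needed beyond what is already implicit in calling $\mathcal{E}$ a channel, and in fact the argument works for any completely positive map whose action on $\ketbra{\psi}{\psi}$ happens to be a rank-one projector.
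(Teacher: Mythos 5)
Your proof is correct and follows essentially the same route as the paper, which simply cites the fact that each positive semi-definite summand of a sum equal to $\ketbra{\phi}{\phi}$ must be proportional to $\ketbra{\phi}{\phi}$; you merely make this explicit by passing through the Loewner-order domination $\ketbra{\psi_i}{\psi_i}\sqsubseteq\ketbra{\phi}{\phi}$ and the support characterization of Proposition~\ref{prop-1280005}. Your handling of the $E_i\ket{\psi}=0$ case is also fine.
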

\begin{proof}
Using the fact that if the sum of positive semi-definite operators is \(\ketbra{\phi}{\phi}\), then each positive semi-definite operator must be (positively) proportional to \(\ketbra{\phi}{\phi}\).
\end{proof}

\section{Discretization of Input Space (Proof of Theorem~\ref{lemma-dis-input-1272245})}\label{sec:proof-theorem1}
In this section, we provide the proof of our result on the discretization of input space (see \cref{lemma-dis-input-1272245}).

Suppose \(\mathcal{C}\) is a QECC with parameters \([[n,k,d]]\). Let \(t=\lfloor\frac{d-1}{2} \rfloor\) be the number of correctable errors.

\subsection{Proof of Theorem~\ref{lemma-dis-input-1272245}, First Part}\label{sec-210338}
\begin{proof}
The ``only if'' direction is trivial. We only focus on the ``if'' direction. 

First, suppose \(S\) is a \cqprog{}\ implementing an error correction gadget.
Let \(\ket{\overline{\psi}}\) be an arbitrary logical quantum state, and \(\mathcal{E}_r\) denote the quantum channel defined in \cref{prop-1312110} with weight \(r\), and \(\mathcal{T}\) be an arbitrary transition tree of \(S\) with \(s\) faults, such that \(s+r\leq t\). Let \(\mathcal{E}_\mathcal{T}\) be the quantum channel corresponding to the tree \(\mathcal{T}\). We will first prove
\begin{equation*}
\mathcal{F}_{r+s}\circ \mathcal{E}_\mathcal{T}\circ \mathcal{E}_r (\ketbra{\overline{\psi}}{\overline{\psi}})=\ketbra{\overline{\psi}}{\overline{\psi}},
\end{equation*}
where \(\mathcal{F}_{r+s}\) is the \((r+s)\)-error corrector defined in \cref{prop-1272320}.

By \cref{prop-1272320}, \cref{prop-1312051} and the assumption, we know that
\begin{equation}\label{eq-1280042}
\mathcal{F}_{r+s}\circ \mathcal{E}_\mathcal{T}\circ \mathcal{E}_r(\ketbra{\overline{i}}{\overline{i}})=\ketbra{\overline{i}}{\overline{i}},
\end{equation}
for any logical computational state \(\ket{\overline{i}}\), and
\begin{equation}\label{eq-1280043}
\mathcal{F}_{r+s}\circ \mathcal{E}_\mathcal{T}\circ \mathcal{E}_r(\ketbra{\overline{+}}{\overline{+}})=\ketbra{\overline{+}}{\overline{+}},
\end{equation}
for \(\ket{\overline{+}}=\sum_{i=0}^{2^k-1} \ket{\overline{i}}/\sqrt{2^k}\).

Now, let \(\{E_i\}\) be a set of Kraus operators representing the quantum channel \(\mathcal{F}_{r+s}\circ\mathcal{E}_{\mathcal{T}}\circ \mathcal{E}_r\). Then, by \cref{eq-1280042}, it must hold that, for all \(i,j\),
\begin{equation}\label{eq-1280229}
E_i\ket{\overline{j}}= \alpha_{i,j}\ket{\overline{j}},
\end{equation}
for some complex number \(\alpha_{i,j}\). Similarly, by \cref{eq-1280043}, we have
\begin{equation}\label{eq-1280230}
E_i\ket{\overline{+}}=\beta_i \ket{\overline{+}},
\end{equation}
and \(\sum_i |\beta_i|^2 =1\).
Combining \cref{eq-1280229} and \cref{eq-1280230}, we have
\[\sum_{j=0}^{2^k-1} \alpha_{i,j}\ket{\overline{j}}=\beta_i \sum_{j=0}^{2^k-1} \ket{\overline{j}}.\]
Therefore, \(\alpha_{i,j}=\beta_i\) for any \(j\).

Suppose \(\ket{\overline{\psi}}=\sum_{j} x_j \ket{\overline{j}}\). Then
\[E_i\ket{\overline{\psi}}=\beta_i\sum_j  x_j\ket{\overline{j}}=\beta_i \ket{\overline{\psi}}.\]
Since \(\sum_i |\beta_i|^2 =1\), we can conclude that
\[\mathcal{F}_{r+s}\circ \mathcal{E}_\mathcal{T}\circ \mathcal{E}_r(\ketbra{\overline{\psi}}{\overline{\psi}})=\ketbra{\overline{\psi}}{\overline{\psi}}.\]

Now, let \(\rho\) be an arbitrary quantum state containing \(r\) errors w.r.t. \(\ket{\overline{\psi}}\) (i.e., \(\supp(\rho)\subseteq \mathcal{S}_r(\overline{\psi})\)). Then, by \cref{prop-1312110}, we have
\[\supp(\rho)\subseteq \supp(\mathcal{E}_r(\ketbra{\overline{\psi}}{\overline{\psi}})).\]
This means 
\[\supp(\mathcal{F}_{r+s}\circ\mathcal{E}_\mathcal{T}(\rho))\subseteq \supp(\mathcal{F}_{r+s}\circ\mathcal{E}_{\mathcal{T}}\circ \mathcal{E}_{r}(\ketbra{\overline{\psi}}{\overline{\psi}}))=\supp(\ketbra{\overline{\psi}}{\overline{\psi}}),\]
and \(\mathcal{F}_{r+s}\circ\mathcal{E}_{\mathcal{T}}\) is trace-preserving on \(\rho\) (since it is trace-preserving on \(\mathcal{E}_r(\ketbra{\psi}{\psi})\)). 
Then we can conclude that
\[\mathcal{F}_{r+s}\circ\mathcal{E}_{\mathcal{T}}(\rho)=\ketbra{\overline{\psi}}{\overline{\psi}}.\]

The case that \cqprog{}\ \(S\) is a gate gadget can be proved similarly. 
\end{proof}

\subsection{Proof of Theorem~\ref{lemma-dis-input-1272245}, Second Part}
\begin{proof}
The ``only if'' direction is trivial. We only focus on the ``if'' direction.

Let \(S\) be a \cqprog{}\ implementing a logical-\(Z\) basis measurement gadget.
Let \(\ket{\overline{\psi}}\) be a logical quantum state, \(\mathcal{E}_r\) be the quantum channel of weight \(r\) given in \cref{prop-1312110}, and \(\mathcal{T}\) be a transition tree of \(S\) with \(s\) faults, such that \(s+r\leq t\). Let \(L_i\) be the set of leaf nodes of \(\mathcal{T}\) such that the measurement outcome is \(i\). We define \(\mathcal{E}_{\mathcal{T},i}\) to be
\[\mathcal{E}_{\mathcal{T},i} = \sum_{w\in L_i} \mathcal{E}_{w},\]
where \(\mathcal{E}_w\) is the quantum channel corresponding to the transition path from root node to \(w\). 

By assumption, we know that
\begin{equation}\label{eq-1280311}
\tr(\mathcal{E}_{\mathcal{T},i}\circ \mathcal{E}_r\left(\ketbra{\overline{j}}{\overline{j}})\right)=\delta_{i,j}.
\end{equation}
Let \(\{E_{i,j}\}_j\) be a set of Kraus operators of the quantum channel \(\mathcal{E}_{\mathcal{T},i}\circ\mathcal{E}_r\).
This means for any \(l\neq i\), we have
\begin{equation}\label{eq-1280326}
E_{i,j}\ket{\overline{l}}=0.
\end{equation}

Assume \(\ket{\overline{\psi}}=\sum_{i} a_i \ket{\overline{i}}\). Then by \cref{eq-1280326}, we have
\[E_{i,j}\ket{\overline{\psi}}=a_i E_{i,j}\ket{\overline{i}},\]
which means
\begin{align}
\tr(\mathcal{E}_{\mathcal{T},i}\circ\mathcal{E}_r(\ketbra{\overline{\psi}}{\overline{\psi}}))&=|a_i|^2 \tr\left(\sum_{j} E_{i,j}\ketbra{\overline{i}}{\overline{i}}E_{i,j}^\dag\right) \nonumber \\
&=|a_i|^2 \tr(\mathcal{E}_{\mathcal{T},i}\circ\mathcal{E}_r(\ketbra{\overline{i}}{\overline{i}})) \nonumber \\
&=|a_i|^2, \label{eq-1280348}
\end{align}
where \cref{eq-1280348} is by \cref{eq-1280311}. Therefore, we can see that the distribution of the measurement outcome is correct for an arbitrary input \(\ket{\overline{\psi}}\).
\end{proof}

\section{Discretization of Faults (Proof of Theorem~\ref{lemma-disc-fautls-191645})}\label{sec:proof-theorem2}
In this section, we provide the proof of our result on the discretization of faults (see \cref{lemma-disc-fautls-191645}).
First, we introduce some notations and auxiliary lemmas that will be used in our proof.

\subsection{Tree Schema}
Suppose \(S\) is a fixed \cqprog{}. Note that any faulty transition tree starting from \(S\) has a fixed structure. That is, the classical control flow is fixed, though the faulty transitions are not specified and can be arbitrary. We call this tree structure as the \textbf{tree schema} of \(S\). Then, any faulty transition tree of \(S\) can be viewed as an instance of the tree schema of \(S\). Suppose \(v\) is a node in the tree schema, and \(\mathcal{T}\) is a faulty transition tree, then we will use \(\bm{v_\mathcal{T}}\) to denote the node in tree \(\mathcal{T}\) corresponding to \(v\). We will use \(\bm{q(v_\mathcal{T})}\) to denote the quantum state in the classical-quantum configuration at node \(v_\mathcal{T}\) in tree \(\mathcal{T}\). 

To further explain the tree schema, suppose \(v\) is a node in the tree schema of \(S\), \(\mathcal{T}_1\) and \(\mathcal{T}_2\) are faulty transition trees of \(S\). Let the classical-quantum configuration at \(v_{\mathcal{T}_1}\) be \(\langle S_1,\sigma_1,\rho_1\rangle\) and that at \(v_{\mathcal{T}_2}\) be \(\langle S_2,\sigma_2,\rho_2\rangle\). It is easy to see that \(S_1=S_2\) and \(\sigma_1=\sigma_2\). That is, the \cqprog{}\  and classical state in a configuration do not depend on a specific choice of the faulty transition tree and are fixed once the initial \cqprog\ \(S\) (and thus the tree schema) is fixed.

\subsection{Non-Terminating Probability}
%\jp{Terminable means that something \emph{can} terminate, while terminating means that something \emph{will} terminate. Therefore I would argue for just using the word terminating, e.g. a node is called terminating if its non-terminating probability is zero.}

Suppose \(\mathcal{T}\) is a transition tree, and \(\rho\) is a quantum state. The \textbf{non-terminating probability} of a node \(v\) in the tree \(\mathcal{T}(\rho)\) is defined as
\[\tr(q(v))-\sum_{w} \tr(q(w)),\]
where the summation is over all leaf nodes of the sub-tree starting at the node \(v\).
A node \(v\) in the tree \(\mathcal{T}(\rho)\) is called \textbf{terminating} if the non-terminating probability is zero.
Intuitively, a node is terminating if, starting from this node, we can always get an output (i.e., with probability \(1\)). A node is called \textbf{non-terminating} if it is not terminating. %\jp{Why do you say ``almost surely'' here? If the non-terminating probability is exactly zero, shouldn't it be the case that we DEFINITELY get an output?}

Then, we can easily see the following results.
\begin{proposition}\label{lemma-1262122}
For a transition tree, if the root node is non-terminating, then a sub-tree containing all non-terminating nodes exists with (faulty-) transitions between them. This sub-tree is an infinite tree and does not have leaf nodes.
\end{proposition}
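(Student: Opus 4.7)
The plan is to show that the set of non-terminating nodes in $\mathcal{T}(\rho)$ forms a connected sub-tree which has no leaves, and then deduce infiniteness immediately. The central technical ingredient I will need is \textbf{trace preservation at each branching}: for any non-leaf node $v$ with children $c_1,\ldots,c_k$, we have $\sum_j \tr(q(c_j)) = \tr(q(v))$. For the single-child transitions in Fig.~\ref{fig:idealtrans-qprog} and Def.~\ref{def:faulttrans-qprog} this is immediate since each map is trace-preserving. For the two-child case (either (M0)/(M1) or (F-M0)/(F-M1)), this is guaranteed by condition (3) of Def.~\ref{def-faultytree}, which requires the sum of the two measurement branches to be trace-preserving.

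With this in hand, I would first prove a \textbf{local dichotomy}: a node $v$ is terminating if and only if all of its children are terminating. Indeed, unfolding the definition of non-terminating probability and telescoping over the subtree rooted at $v$, one obtains
\begin{equation*}
\tr(q(v)) - \sum_{\text{leaves } w \preceq v} \tr(q(w)) \;=\; \sum_{c \text{ child of } v} \Bigl(\tr(q(c)) - \sum_{\text{leaves } w \preceq c} \tr(q(w))\Bigr),
\end{equation*}
using the trace-preservation identity above. Hence the non-terminating probability of $v$ equals the sum of the non-terminating probabilities of its children, so $v$ is terminating iff every child is. Two corollaries: (i) every leaf node of $\mathcal{T}(\rho)$ is terminating (its non-terminating probability is $0$ by inspection), and (ii) every non-terminating node has at least one non-terminating child.

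I would then define $\ptree$ as the induced subgraph of $\mathcal{T}(\rho)$ on the non-terminating nodes, retaining the (ideal or faulty) transitions between them. To verify $\ptree$ is a subtree rooted at the root of $\mathcal{T}(\rho)$, I need \textbf{upward closure}: if $v$ is non-terminating with parent $p$, then $p$ is non-terminating. This is the contrapositive of the local dichotomy: if $p$ were terminating, all its children would be terminating, contradicting $v$'s status. Upward closure guarantees connectedness, and (ii) guarantees that every node of $\ptree$ has at least one child in $\ptree$, i.e.\ $\ptree$ is leafless. Since the root is non-terminating by assumption, it lies in $\ptree$, and repeatedly applying (ii) produces an infinite descending path, so $\ptree$ is infinite.

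The main obstacle is the bookkeeping for trace preservation at branching nodes: the measurement rules are slightly delicate because neither (F-M0) nor (F-M1) is individually trace-preserving, and one must invoke the pairing condition of Def.~\ref{def-faultytree}(3) to combine them. Once that identity is cleanly packaged, the rest of the argument is a straightforward König-style extraction of an infinite leafless subtree from a local ``non-terminating propagates to a child'' property.
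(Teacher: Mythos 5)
Your proposal is correct and follows essentially the same route as the paper: the paper's entire argument rests on Proposition~\ref{lemma-1270338} (the non-terminating probability of a node equals the sum over its children, ``proved by checking all the (faulty)-transition rules''), which is exactly your local dichotomy, and the leafless infinite subtree is then extracted the same way. You simply spell out the trace-preservation bookkeeping (including the pairing condition of Def.~\ref{def-faultytree}(3) for the measurement branches) that the paper leaves implicit.
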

We will call this sub-tree the \textbf{non-terminating sub-tree}. The proof of \cref{lemma-1262122} is obvious if we notice the following property.
\begin{proposition}\label{lemma-1270338}
For each node, its non-terminating probability equals the sum of the non-terminating probabilities of its children.
\end{proposition}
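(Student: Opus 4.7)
The plan is to prove this by a direct case analysis on the type of transition rule at node $v$, after first establishing a local trace-preservation property: the trace of the quantum component at $v$ equals the sum of traces over all its children. Once this is in hand, the proposition follows by a short bookkeeping argument, since the leaves of the subtree rooted at $v$ partition as the disjoint union of the leaves of the subtrees rooted at its children.

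First I would dispatch the easy case. If $v$ is itself a leaf (of the form $\langle \downarrow,\sigma,\rho\rangle$), then by definition it has no children and the subtree rooted at $v$ has only $v$ as a leaf, so the non-terminating probability equals $\tr(q(v)) - \tr(q(v)) = 0$, and the empty sum over children is also $0$. Next, if $v$ is not a measurement node, then by Def.~\ref{def-faultytree}(4) it has a unique child $v'$, produced by one of the rules (IN), (UT), (AS), (CO), (SC), (CT), (CF), (RU), or their faulty counterparts (F-IN), (F-UT). Each of these rules either leaves the quantum state unchanged or applies a trace-preserving quantum channel on $\rho$, so $\tr(q(v')) = \tr(q(v))$. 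The required local identity $\tr(q(v)) = \tr(q(v'))$ is then immediate.

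The main case, and the only one that requires genuine verification, is when $v$ is a measurement node of the form $\langle x:=\meas\,q;S,\sigma,\rho\rangle$, with two children $v_0, v_1$. Here I would split on whether the two transitions are the ideal pair $(\textup{M0},\textup{M1})$ or the faulty pair $(\textup{F-M0},\textup{F-M1})$. In the ideal case, $q(v_0) = \ket{0}_q\bra{0}\rho\ket{0}_q\bra{0}$ and $q(v_1) = \ket{1}_q\bra{1}\rho\ket{1}_q\bra{1}$, so $\tr(q(v_0)) + \tr(q(v_1)) = \tr(\rho) = \tr(q(v))$ by completeness of the computational-basis projectors. In the faulty case, Def.~\ref{def-faultytree}(3) explicitly requires that $\mathcal{E}_{0,q} + \mathcal{E}_{1,q}$ be trace-preserving on the qubit $q$, which gives exactly $\tr(q(v_0)) + \tr(q(v_1)) = \tr(q(v))$. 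Thus in all cases the local trace-preservation property holds.

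To conclude, let $L(v)$ denote the multiset of leaf nodes of the subtree rooted at $v$, and write $n(v) \coloneqq \tr(q(v)) - \sum_{w \in L(v)} \tr(q(w))$. If $v$ has children $c_1,\ldots,c_k$ (with $k \in \{1,2\}$ per the cases above), then $L(v) = \bigsqcup_i L(c_i)$, so
\begin{align*}
\sum_i n(c_i) &= \sum_i \tr(q(c_i)) - \sum_i \sum_{w\in L(c_i)}\tr(q(w)) \\
&= \tr(q(v)) - \sum_{w\in L(v)} \tr(q(w)) = n(v),
\end{align*}
where the first equality uses the definition of $n$ and the partition of $L(v)$, and the second uses the local trace-preservation property just established. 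I do not anticipate a serious obstacle here: the only subtlety is recognizing that the condition ``$\mathcal{E}_{0,q}+\mathcal{E}_{1,q}$ is trace-preserving'' in the definition of a faulty transition tree is precisely what is needed to make the measurement-node case work, and this is built into Def.~\ref{def-faultytree} by design.
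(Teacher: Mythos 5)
Your proof is correct and takes essentially the same approach as the paper: the paper's entire proof is the one-line remark that the claim ``can be proved by checking all the (faulty)-transition rules,'' and your case analysis establishing local trace preservation at each node (leaf, single-child, ideal measurement pair, faulty measurement pair with $\mathcal{E}_{0,q}+\mathcal{E}_{1,q}$ trace-preserving), followed by the telescoping over the partition $L(v)=\bigsqcup_i L(c_i)$, is exactly the detailed version of that check. The rearrangement of the possibly infinite leaf sums is harmless since all terms are nonnegative, so there is no gap.
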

\begin{proof}
This can be proved by checking all the (faulty)-transition rules.
\end{proof}

\subsection{An Auxiliary Lemma}
\begin{lemma}\label{lemma-1261556}
For any node \(v\) in the tree schema of \(S\), any transition tree \(\mathcal{T}\) with \(s\) faults, and any quantum state \(\rho\), we have
\begin{equation}\label{eq-1270407}
\supp(q(v_{\mathcal{T}(\rho)})) \subseteq \sum_{\mathcal{T}'} \supp(q(v_{\mathcal{T}'(\rho)})),
\end{equation}
where the summation is over all \(s\)-Pauli-fault transition trees \(\mathcal{T}'\) such that the locations of Pauli-fault transitions in \(\mathcal{T}'\) are the same as the locations of fault transitions in \(\mathcal{T}\).
\end{lemma}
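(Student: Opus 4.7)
The plan is to prove this by induction on the depth of $v$ in the tree schema of $S$. The base case is the root, where $q(v_{\mathcal{T}(\rho)})=\rho=q(v_{\mathcal{T}'(\rho)})$ for every $\mathcal{T}'$, making the inclusion trivial. For the inductive step I would let $u$ be the parent of $v$ and split on whether the transition $u\to v$ is ideal (rules (IN), (UT), (M0), (M1)) or faulty.

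The ideal case is the easy half: the transition is a fixed completely-positive map $\mathcal{L}\colon\sigma\mapsto\sum_j K_j\sigma K_j^\dagger$ that acts identically in $\mathcal{T}$ and in every Pauli-fault tree $\mathcal{T}'$, so $\mathcal{L}(q(u_{\mathcal{T}'(\rho)}))=q(v_{\mathcal{T}'(\rho)})$. The standard fact that if $\supp(A)\subseteq\sum_i\supp(B_i)$ for PSD operators then $\supp(\mathcal{L}(A))\subseteq\sum_i\supp(\mathcal{L}(B_i))$, a direct consequence of \cref{prop-1271718} and \cref{prop-1271706}, lets me propagate the inductive hypothesis at $u$ directly to $v$.

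The faulty case is the heart of the argument. Here the transition $u\to v$ in $\mathcal{T}$ is an arbitrary channel (or non-trace-preserving operation, for (F-M0)/(F-M1)) $\mathcal{F}(\sigma)=\sum_j F_j\sigma F_j^\dagger$, while in $\mathcal{T}'$ it is one member of a Pauli-fault family. My key algebraic observation is that the Pauli-fault Kraus operators span the entire operator algebra on the affected qubits: $\{PU_{\overline{q}}\}_P$ spans because Paulis form a matrix basis and $U_{\overline{q}}$ is invertible; $\{P|0\rangle\langle j|\}_{P,j}$ and $\{P_{i,q}|i\rangle\langle i|Q_q\}_{P_{i,q},Q_q}$ both realize every single-qubit outer product $|k\rangle\langle l|$ and so span the full single-qubit matrix algebra. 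Consequently each $F_j$ admits an expansion $F_j=\sum_\alpha c_{j,\alpha}K_\alpha$ in Pauli-fault Kraus operators $K_\alpha$, giving $F_j|\phi\rangle\in\spanspace\{K_\alpha|\phi\rangle\}$ and hence $\supp(F_j\sigma F_j^\dagger)\subseteq\sum_\alpha\supp(K_\alpha\sigma K_\alpha^\dagger)$ for every PSD $\sigma$. Summing over $j$, composing with the inductive hypothesis at $u$ (propagated through each map $\sigma\mapsto K_\alpha\sigma K_\alpha^\dagger$ exactly as in the ideal case), and re-indexing the resulting double sum as a single sum over Pauli-fault trees $\mathcal{T}''$ extending each $\mathcal{T}'$ by the choice of $K_\alpha$ at the $u\to v$ step then yields the desired inclusion at $v$.

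The main obstacle I anticipate is the shared pre-measurement Pauli $Q_q$ in (PF-M0)/(PF-M1): \cref{def-tree-pauli-faults-110250} forces the two children of a measurement node to use the same $Q_q$, which at first glance looks like it could restrict which Pauli-fault trees are available for the inductive extension. The resolution is that $v$ lies below only one child of each measurement along the root-to-$v$ path, so the sibling subtree has no bearing on $q(v_{\mathcal{T}'(\rho)})$; as $\mathcal{T}'$ ranges freely over all Pauli-fault trees with the prescribed fault locations, both the shared $Q_q$ and the outcome-side Pauli on $v$'s branch range independently over all Paulis, which is exactly what the spanning argument requires. A small bookkeeping point is checking that the re-indexing $(\mathcal{T}',K_\alpha)\mapsto\mathcal{T}''$ is a well-defined surjection onto the set of Pauli-fault trees matching $\mathcal{T}$ in fault locations, which is immediate from the tree-schema formulation.
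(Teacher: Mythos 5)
Your proposal is correct and follows essentially the same route as the paper's proof: induction on the level of $v$, trivial propagation through ideal transitions via \cref{prop-1271706} and \cref{prop-1271718}, and, for faulty transitions, the observation that the Pauli-fault Kraus operators (e.g., $P_{\overline{q}}U_{\overline{q}}$ for unitaries and $P_q\ket{0}_q\bra{0}Q_q$ for measurements) span all operators supported on the affected qubits, so each Kraus operator of an arbitrary faulty channel is a linear combination of them (\cref{prop-1271727}). Your explicit treatment of the shared pre-measurement Pauli $Q_q$ is a nice clarification of a point the paper leaves implicit, but it does not change the substance of the argument.
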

\begin{proof}
We fix the quantum state \(\rho\) and the transition tree \(\mathcal{T}\) and prove that \cref{eq-1270407} holds for any node \(v\) using the induction on the level of \(v\).

We define the level of a node as the length of the path from it to the root node. \cref{eq-1270407} holds trivially for the level-\(0\) node (the root note). Now assume that \cref{eq-1270407} holds for all nodes of level less than or equal to \(n\). Now suppose \(v\) is a node of level \(n+1\) and the parent node of \(v\) is \(u\). Then, the induction hypothesis guarantees that 
\begin{equation}\label{eq-1271712}
\supp(q(u_{\mathcal{T}(\rho)}))\subseteq \sum_{\mathcal{T}'}\supp(q(u_{\mathcal{T}'(\rho)})).
\end{equation}
%\jp{Above, do you mean to say that the equation holds for all nodes less than or equal to $n$, rather than strictly less than?}

If \(u\rightarrow v\) (ideal transition), then it is quite straightforward that \cref{eq-1270407} holds for \(v\). 
Let \(\mathcal{E}\) be the quantum channel corresponding to the transition of \(u\rightarrow v\). We have
\begin{align}
\supp(q(v_{\mathcal{T}(\rho)}))&=\supp(\mathcal{E}(q(u_{\mathcal{T}(\rho)}))) \nonumber\\
&\subseteq \sum_{\mathcal{T}'}\supp(\mathcal{E}(q(u_{\mathcal{T}'(\rho)}))) \label{eq-1270523}\\
&=\sum_{\mathcal{T}'}\supp(q(v_{\mathcal{T}'(\rho)})), \label{eq-1270531}
\end{align}
where \cref{eq-1270523} is by combining \cref{eq-1271712} with \cref{prop-1271706} and \cref{prop-1271718}, \cref{eq-1270531} is because \(\mathcal{E}\) is also the quantum channel corresponding to the transition \(u\rightarrow v\) in Pauli-fault transition trees \(\mathcal{T}'\).

If \(u\leadsto v\), let 
\[\mathcal{E}:\rho\mapsto \sum_{i} E_i\rho E_i^\dag\] 
be the quantum channel corresponding to the transition \(u\leadsto v\) in the tree \(\mathcal{T}\). First, assume that the program statement of this transition is the unitary transform statement \(U(\overline{q})\). Then we can see that the quantum channel of \(u\leadsto v\) in a Pauli fault transition tree has the form 
\[\rho\mapsto P_{\overline{q}} U_{\overline{q}}\rho U_{\overline{q}}^\dag P_{\overline{q}},\] 
and \(P_{\overline{q}}\) can be arbitrary Pauli operator acting non-trivially on \(\overline{q}\). Therefore, we have
\begin{align}
\supp(q(v_{\mathcal{T}(\rho)}))&= \supp(\mathcal{E}(q(u_{\mathcal{T}(\rho)}))) \nonumber\\
&= \sum_i \supp(E_iq(u_{\mathcal{T}(\rho)})E_i^\dag) \label{eq-1271705}\\
&\subseteq \supp\left(\sum_{P_{\overline{q}}}P_{\overline{q}}U_{\overline{q}}q(u_{\mathcal{T}(\rho)})U_{\overline{q}}^\dag P_{\overline{q}}\right) \label{eq-1271726}\\
&=\sum_{\mathcal{T}'}\supp(q(v_{\mathcal{T}'(\rho)})), \label{eq-1271729}
\end{align}
where \cref{eq-1271705} is by \cref{prop-1271706}, \cref{eq-1271726} is by combining \cref{prop-1271727} with the fact that each \(E_i\) acts non-trivially only on \(\overline{q}\) and thus can be written as a linear combination of the operators \(P_{\overline{q}}U_{\overline{q}}\), \cref{eq-1271729} is again by \cref{prop-1271706}. Other types of program statements can be handled similarly. For example, if the program statement is the measurement statement \(\meas\  q\), then we only need to note the fact that the operators \(P_q\ket{0}_q\bra{0}Q_q\) (or \(P_q\ket{0}_q\bra{0}Q_q\)), where \(P_q\) and \(Q_q\) can be arbitrary Pauli operators acting non-trivially on \(q\), span all the operators that act non-trivially only on \(q\).

Therefore, the induction hypothesis also holds for all nodes of level \(n+1\). We can conclude that \cref{eq-1270407} holds in general.

\end{proof}

\subsection{Proof of Theorem~\ref{lemma-disc-fautls-191645}}
Now, we are ready to prove \cref{lemma-disc-fautls-191645}.
\begin{proof}
The ``only if'' direction is trivial. We only focus on the ``if'' direction. Without loss of generality, we assume the gadget is an error correction gadget. Other gadget types can be handled similarly.

Suppose \(\ket{\overline{\psi}}\) is a logical quantum state.
Let \(\rho\) be a density operator such that \(\supp(\rho)\subseteq \mathcal{S}_r(\ket{\overline{\psi}})\), which can be viewed as a noisy version of \(\ket{\overline{\psi}}\) with \(r\) errors.
Further, suppose \(\mathcal{T}\) is a transition tree of \(S\) with \(s\) faults such that \(r+s\leq t\), and \(v\) is a leaf node of the tree schema of \(S\).

We will separate the proof into two parts: the first part proves that the quantum state at any leaf node of \(\mathcal{T}(\rho)\) is close to \(\ket{\overline{\psi}}\); the second part proves that the tree \(\mathcal{T}(\rho)\) is almost surely terminating (the root node is terminating).

\textbf{First part.} We will first prove that
\[\supp(q(v_{\mathcal{T}(\rho)}))\subseteq \mathcal{S}_s(\ket{\overline{\psi}}),\]
where \(\mathcal{S}_s(\ket{\overline{\psi}})\) is the \(s\)-error space surrounding \(\ket{\overline{\psi}}\).
This can be proved by \cref{lemma-1261556}. Specifically, by the assumption we have that the \cqprog{}\  \(S\) is fault-tolerant under Pauli faults, then for any \(s\)-Pauli-fault transition tree \(\mathcal{T}'\), its leaf node \(v_{\mathcal{T}'(\rho)}\) satisfies
\[\supp(q(v_{\mathcal{T}'(\rho)})) \subseteq \mathcal{S}_s(\ket{\overline{\psi}}).\]
Then, by \cref{lemma-1261556}, we have
\[\supp(q(v_{\mathcal{T}(\rho)}))\subseteq \sum_{\mathcal{T}'}\supp(q(v_{\mathcal{T}'(\rho)}))\subseteq \mathcal{S}_s(\ket{\overline{\psi}}),\]
where the summation is over all \(s\)-Pauli-fault transition tree \(\mathcal{T}'\). Therefore, we can see that each output quantum state (the quantum state at a leaf node) of the tree \(\mathcal{T}(\rho)\) contains at most \(s\) errors. 

\textbf{Second part.}
Then, it remains to show that the probability of getting an output state is \(1\), i.e., the root node of \(\mathcal{T}(\rho)\) is terminating. Now, suppose the root node is not terminating. Then by \cref{lemma-1262122}, there exists an infinite sub-tree of \(\mathcal{T}(\rho)\), called the non-terminating tree, such that all nodes in the sub-tree are not terminating. Since the number of faults on each path is finite (i.e., \(\leq s\)), there exists a node \(v_{\mathcal{T}(\rho)}\) in the non-terminating sub-tree such that the sub-tree of the non-terminating sub-tree below \(v_{\mathcal{T}(\rho)}\) does not contain any faulty transitions. Let \(p=\tr(q(v_{\mathcal{T}(\rho)}))\) and \(p^{\textup{nt}}\) be the non-terminating probability of \(v_{\mathcal{T}(\rho)}\). Note that \(p\geq p^{\textup{nt}}>0\). More generally, for any integer \(j\geq 1\), we define
\[\mathcal{E}_j=\sum_{w\in N_j} \mathcal{E}_{v:w},\]
where \(N_j\) is the set of non-terminating nodes that are the \(j\)th-level descendants of \(v_{\mathcal{T}(\rho)}\) (e.g. the first-level descendants are the children of \(v_{\mathcal{T}(\rho)}\)), and \(\mathcal{E}_{v:w}\) is the quantum channel corresponding to the transition path from \(v\) to \(w\). Then, we have
\begin{equation}\label{eq-1270307}
\begin{split}
\tr\left(\mathcal{E}_j(q(v_{\mathcal{T}(\rho)}))\right)&=\tr\left(\sum_{w\in N_j} q(w)\right) \\
&\geq \sum_{w\in N_j} p_w^{\textup{nt}}\\
&=p^{\textup{nt}},
\end{split}
\end{equation}
where \(p_w^{nt}\) is the non-terminating probability of \(w\) and the last equality is by \cref{lemma-1270338}.
On the other hand, by \cref{lemma-1261556}, we can find a finite set of \(s\)-Pauli-fault transition trees \(\{\mathcal{T}'_1,\mathcal{T}'_2,\ldots,\mathcal{T}'_l\}\) such that
\[\supp(q(v_{\mathcal{T}(\rho)}))\subseteq \sum_{i=1}^l \supp(q(v_{\mathcal{T}'_i(\rho)})).\]
Then, by \cref{prop-1280005}, there exists a positive number \(\alpha\) such that
\[q(v_{\mathcal{T}(\rho)}) \sqsubseteq \alpha \sum_{i=1}^l q(v_{\mathcal{T}'_i(\rho)}).\]
Then, we have
\[\tr(\mathcal{E}_j(q(v_{\mathcal{T}(\rho)}))) \leq \alpha \sum_{i=1}^l \tr(\mathcal{E}_j(q(v_{\mathcal{T}'_i(\rho)}))),\]
for any \(j>0\).
Note that the non-terminating sub-tree below \(v_{\mathcal{T}(\rho)}\) is fault-free. Thus, its transitions are exactly the same as those in the Pauli-fault transition trees \(\mathcal{T}'_i\). 
Furthermore, by assumption (i.e., \(S\) is fault-tolerant with \(s\) Pauli faults and thus is terminating), any node in any \(s\)-Pauli-fault transition tree of \(S\) has zero non-terminating probability, which means any infinite sub-tree without leaf nodes must have vanishing probability. Therefore, together with \cref{eq-1270307}, we have
\[0<p^{\textup{nt}}\leq \lim_{j\rightarrow \infty} \tr(\mathcal{E}_j(q(v_{\mathcal{T}(\rho)}))) \leq \lim_{j\rightarrow\infty} \alpha \sum_{i=1}^l \tr(\mathcal{E}_j(q(v_{\mathcal{T}'_i(\rho)})))=0,\]
which is a contradiction. Therefore, the root node of \(\mathcal{T}(\rho)\) is terminating.
\end{proof}

\section{Soundness and Completeness (Proof of Theorem~\ref{thm-compl-sound-1251851})}\label{sec:proof-theorem3}
In this section, we provide the proof of our result on the soundness and completeness of our fault-tolerance verification (see \cref{thm-compl-sound-1251851}). We will make the following adjustment.
\begin{remark}\label{remark-1311516}
We replace each of the initialization statements in a \cqprog{}\ \(S\) with a \(Z\)-basis measurement followed by a classically controlled \(X\) gate. This adjustment does not affect the overall semantics and fault-tolerance property and can simplify our proof (since the ideal transitions without initialization preserve the purity of quantum states). Furthermore, note that \(S\) remains a \cqprog{} (though without initialization statements), our discretization theorems still apply.
\end{remark}

%The first step is, by our discretization lemmas (see \cref{lemma-dis-input-1272245} and \cref{lemma-disc-fautls-191645}), we reduce the verification of the fault-tolerance properties (i.e., Equations \eqref{eq-FT-prepare-12312218}, \eqref{eq-FT-gate-12312216}, \eqref{eq-FT-meas-12312216}, \eqref{eq-FT-EC-12312217}) on arbitrary inputs and faults to that on inputs in \(Z\)-basis (or \(X\)-basis) and Pauli faults. 

Then, we introduce some notations and auxiliary lemmas that will be used in our proof.

\subsection{Explanation of Conservative Repeat-Until-Success}\label{sec-cons-loop-210103}
Here, we explain the definition of conservative repeat-until-success (see \cref{def-con-rus-210106}). 

Let \(\{\repeatuntil{S}{b}\}\) be a conservative repeat-until-success, with sub-program \(S\) as its loop body. First note that, the classical variables of \(S\) are reset at the beginning of each iteration. Therefore, in each iteration, the execution of \(S\) is on the same transition tree (i.e., with the same classical control flow), though the initial quantum state could be different.

Let \(\mathcal{T}\) be the unique ideal (fault-free) transition tree of \(S\). For a node \(v\) in the transition tree, there is a unique path from the root node to \(v\). We simply call it the path to \(v\). We use \(\mathcal{E}_v\) to denote the quantum channel corresponding to this transition path (the composition of quantum channels for every transitions on this path). Since we do not include the initialization statement (see \cref{remark-1311516}), $\mathcal{E}_v$ has the following form: $\mathcal{E}_v=E_v (\cdot) E_v^\dag$, where $E_v$ is a product of projectors $\ketbra{0}{0}$ or $\ketbra{1}{1}$ (from (\textup{M0}) or (\textup{M1})) and unitaries (from (\textup{UT})).

%Note that any branching in the transition tree is caused by an execution of measurement with rules \((\textup{M0})\) and \((\textup{M1})\) (c.f. \cref{fig:idealtrans-qprog}). Therefore, we can label and identify each leaf node using the measurement outcomes at all the branching points on the path to that leaf node. 

%In principle, all this measurement outcomes can be stored in the classical state \(\sigma\) within the classical-quantum configuration \(\langle S,\sigma,\rho\rangle\). However, users can choose to erase the information of some measurement outcomes. Therefore, their may exist two different leaf nodes that have the same classical state \(\sigma\). If this happened, we group this two leaf nodes and the paths to them. More specifically, let \(\sigma\) be a classical state. We use \(L_{\sigma}\) to denote the set of all leaf nodes that end in classical state \(\sigma\), and further define 
%\[\mathcal{E}_\sigma=\sum_{v\in L_{\sigma}}\mathcal{E}_v,\]
%which is the quantum channel corresponding to the \textbf{transition paths grouped by the classical output state} \(\bm{\sigma}\). 

Now, we explain our definition of ``\textbf{the fault-free semantics of $S$ is non-adaptive and idempotent}'' in details. 
Since the quantum program only includes Clifford unitaries and Pauli projectors (i.e., $\frac{I+Z}{2}$ or $\frac{I-Z}{2}$ from measurements), $E_v$ has the following form:
\[E_v = U_{m+1} \cdot \frac{I+P_m}{2} \cdot U_l \cdots \frac{I+P_2}{2} \cdot U_2 \cdot \frac{I+P_1}{2} \cdot U_1,\]
where $P_i$ are Pauli operators and $U_i$ are Clifford unitaries. 

The \textbf{non-adaptivity} means that the operation to be performed does not depend on the previous measurement outcomes. It thus ensures that all leaf nodes share the same unitaries $U_1,\ldots,U_m$ and only the phases preceding $P_i$ are different. More specifically, there exists a set of Pauli operators $P_1,\ldots,P_m$ and a set of Clifford unitaries $U_1,\ldots,U_{m+1}$ such that for any leaf node $v$, we have
\begin{equation}\label{eq-5232121}
E_v = U_{m+1} \cdot \frac{I+\beta_m^v P_m}{2} \cdot U_m \cdots \frac{I+\beta_2^v P_2}{2} \cdot U_2 \cdot \frac{I+\beta_1^v P_1}{2} \cdot U_1,
\end{equation}
where $\beta_i^v\in\{-1,1\}$. Moreover, since $(\beta_1^v,\ldots,\beta_m^v)$ encode all measurement outcomes in the path to $v$, the leaf nodes $v$ one-to-one corresponds to the sequences $(\beta_1^v,\ldots,\beta_m^v)$.

The \textbf{idempotency} of the semantics of $S$ means that each $E_v$ is a projector, i.e., $E_vE_v= E_v$. By this, we can further show that $E_v$ must be a Hermitian projector.
\begin{proposition}
$E_v$ is a Hermitian projector.
\end{proposition}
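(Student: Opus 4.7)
The plan is to deduce Hermiticity from a standard functional-analytic fact: a bounded idempotent $T$ on a Hilbert space is an orthogonal (self-adjoint) projector if and only if $\|T\|_{\mathrm{op}} \le 1$. Since the idempotency $E_v E_v = E_v$ is already in hand, the entire task reduces to verifying the norm bound $\|E_v\|_{\mathrm{op}} \le 1$.

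To obtain this bound, I would inspect each factor appearing in the decomposition (\ref{eq-5232121}) individually. Each Clifford $U_i$ is unitary, so $\|U_i\|_{\mathrm{op}} = 1$. Each factor $\frac{I+\beta_i^v P_i}{2}$ is an orthogonal projector onto the $\beta_i^v$-eigenspace of the Hermitian involution $P_i$ (using $P_i^\dag = P_i$ and $P_i^2 = I$), and therefore also has operator norm equal to one (or zero in the degenerate case $P_i = -\beta_i^v I$, which still satisfies the bound). Submultiplicativity of the operator norm then gives $\|E_v\|_{\mathrm{op}} \le 1$ immediately.

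Combining $E_v^2 = E_v$ with $\|E_v\|_{\mathrm{op}} \le 1$, the cited characterization forces $E_v^\dag = E_v$, so $E_v$ is a Hermitian projector. I do not anticipate significant obstacles here: the only nontrivial ingredient beyond routine bookkeeping is the functional-analytic characterization itself, which admits a short elementary argument (if the range and kernel of an idempotent $T$ failed to be mutually orthogonal, one could pick $y \in \mathrm{range}(T)$ and $x \in \ker(T)$ with $\langle x,y\rangle \neq 0$ and then exhibit $w = y + \lambda x$ with $\|Tw\| = \|y\| > \|w\|$ for a suitable scalar $\lambda$, contradicting contractivity). Depending on the paper's style, this can be either cited from a standard functional-analysis reference or reproduced in one or two lines inline.
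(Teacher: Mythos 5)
Your proof is correct, and it takes a genuinely different (and more abstract) route than the paper's. The paper first commutes all the Clifford unitaries in \cref{eq-5232121} to the leftmost position, writing $E_v=(U_{m+1}\cdots U_1)\cdot\frac{I+P_m'}{2}\cdots\frac{I+P_1'}{2}$; it then argues that every $\ket{\psi}\in\mathrm{im}(E_v)$ is fixed by each factor, deduces from this that the $P_i'$ must pairwise commute (an anticommuting pair would force $\ket{\psi}=-\ket{\psi}$), and concludes that $E_v$ equals the Hermitian stabilizer projector $\Pi=\frac{I+P_1'}{2}\cdots\frac{I+P_m'}{2}$ with the residual unitary acting as the identity on $\mathrm{im}(\Pi)$. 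You instead observe that $E_v$ is a product of unitaries and orthogonal projectors, hence a contraction, and invoke the standard fact that a contractive idempotent on a Hilbert space is self-adjoint. The underlying mechanism is the same in both arguments --- contractivity of each factor combined with idempotency --- but you package it as a single cited lemma, which makes the proof shorter and independent of the Clifford/Pauli structure of the factors, whereas the paper's concrete computation buys extra structural information (the explicit form $E_v=\Pi$ and the pairwise commutativity of the Paulis) that is reused in \cref{prop-5230556} and \cref{lemma-5232211}. Since the paper essentially re-derives that structure in the proof of \cref{prop-5230556} anyway, your argument fully suffices for the proposition as stated; the only small caveat is to note (as you do implicitly) that the factors $\frac{I+\beta_i^vP_i}{2}$ are orthogonal projectors because the $P_i$ arising from computational-basis measurements are Hermitian involutions.
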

\begin{proof}
We can commute all Clifford unitaries $U_i$ in \cref{eq-5232121} to the leftmost position:
\[E_v=(U_{m+1}\cdots U_1)\cdot \frac{I+P'_m}{2}\cdots \frac{I+P'_1}{2},\]
for some Pauli operators $P'_1,\ldots, P'_m$.
Suppose $E_v\neq 0$ (otherwise it is already Hermitian). Since $E_v$ is a projector, for any $\ket{\psi}$ in the image subspace $\textup{im}(E_v)$ of $E_v$, we have $E_v\ket{\psi}=\ket{\psi}$. Note that $U_{m+1}\cdots U_1$ is a unitary and $\frac{I+ P'_i}{2}$ are projectors, which means it must be the case that $\frac{I+ P'_i}{2}\ket{\psi}=\ket{\psi}$ and $U_{m+1}\cdots U_1\ket{\psi}=\ket{\psi}$.
This further means $\ket{\psi}$ is in the $+1$ eigenspaces of every $P_i'$. If there exists $i\neq j$ such that $P_i', P_j'$ anti-commute, we have 
\[\ket{\psi}=P'_iP'_j\ket{\psi}=-P'_jP'_i\ket{\psi}=-\ket{\psi},\]
which is impossible. Therefore, $P_1',\ldots,P'_m$ pairwise commute.
This means $\Pi\coloneqq \frac{I+P'_1}{2}\cdots \frac{I+P'_m}{2}$ is a Hermitian projector and $U_{m+1}\cdots U_1$ acts as identity on the image subspace of $\textup{im}(\Pi)$. Therefore we have
\[E_v = U_{m+1}\cdots U_1 \cdot \Pi= \Pi,\]
which is a Hermitian projector.
\end{proof}

Moreover, note that $S$ always terminates, it is trace-preserving. We can see that
\begin{align}
I&=\sum_v E^\dag_v E_v=\sum_v E_v=\sum_{(\beta_1^v,\ldots,\beta_m^v)\in\{-1,1\}^m} E_v\nonumber\\
&=U_{m+1}\cdot U_m\cdots U_2\cdot U_1.\label{eq-5232201}
\end{align}
Then, in \cref{eq-5232121}, commuting all Clifford unitaries $U_i$ to the rightmost position, we can see the following fact:
\begin{proposition}\label{prop-5230556}
For any conservative loop body $S$, there exists a set of commuting Pauli operators $\hat{P}_1,\ldots, \hat{P}_m$, such that for any leaf node $v$, $E_v$ is a stabilizer projector of the following form 
$$E_v=\frac{I+\beta^v_1\hat{P}_1}{2} \cdots \frac{I+\beta^v_m \hat{P}_m}{2},$$
where $\beta_{i}^v\in\{-1,1\}$. 
\end{proposition}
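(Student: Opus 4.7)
The plan is to explicitly commute every Clifford unitary in the expression \cref{eq-5232121} past the projectors to its right, exploit the telescoping identity $U_{m+1}U_m\cdots U_1=I$ from \cref{eq-5232201}, and finally deduce pairwise commutativity of the resulting Pauli operators from the fact that each $E_v$ is a Hermitian projector (the preceding proposition) together with $\sum_v E_v=I$.

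For the commutation step I will use the identity $U\cdot\frac{I+\beta P}{2}=\frac{I+\beta\,UPU^\dagger}{2}\cdot U$, valid for any Clifford $U$ and Pauli $P$, with $UPU^\dagger$ again Pauli. Iteratively sliding $U_{m+1}$ rightward past $\frac{I+\beta_m^v P_m}{2}$, then $U_{m+1}U_m$ past $\frac{I+\beta_{m-1}^v P_{m-1}}{2}$, and so on until every unitary sits to the right of every projector, produces
\[E_v=\Bigl[\prod_{i=m}^{1}\frac{I+\beta_i^v\hat{P}_i}{2}\Bigr]\cdot(U_{m+1}U_m\cdots U_1),\]
where $W_i:=U_{m+1}U_m\cdots U_{i+1}$ and $\hat{P}_i:=W_iP_iW_i^\dagger$ is a Pauli operator that depends only on $S$, not on $v$ (since $W_i$ is Clifford, so it maps Paulis to Paulis). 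By \cref{eq-5232201} the trailing factor equals $I$, and thus $E_v$ reduces to a product of $m$ Pauli projectors whose only dependence on $v$ is through the signs $\beta_i^v$.

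To show the $\hat{P}_i$ pairwise commute, suppose for contradiction that $\hat{P}_i$ and $\hat{P}_j$ anti-commute for some $i\neq j$. Fix any leaf $v$ and any $\ket{\psi}\in\operatorname{im}(E_v)$, so $E_v\ket{\psi}=\ket{\psi}$. Each projector factor $\frac{I+\beta_k^v\hat{P}_k}{2}$ is a contraction, so the chain $\|\ket{\psi}\|=\|E_v\ket{\psi}\|\leq\cdots\leq\|\tfrac{I+\beta_1^v\hat{P}_1}{2}\ket{\psi}\|\leq\|\ket{\psi}\|$ must saturate at every step; saturation of a Hermitian projector on a unit vector forces that vector to lie in the projector's $+1$ eigenspace, and iterating this from right to left gives $\hat{P}_k\ket{\psi}=\beta_k^v\ket{\psi}$ for every $k$. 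Then $\hat{P}_i\hat{P}_j\ket{\psi}=\beta_i^v\beta_j^v\ket{\psi}$, while anti-commutativity of $\hat P_i,\hat P_j$ gives the opposite sign, forcing $\ket{\psi}=0$. Hence $E_v=0$ for every leaf $v$, contradicting $\sum_v E_v=I$ from \cref{eq-5232201}. Once commutativity is established, the factors of $E_v$ can be reordered freely to match the form in the statement of the proposition.

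The only nontrivial ingredient is the contraction-saturation step used to pin $\ket{\psi}$ down as a simultaneous eigenvector of all the $\hat{P}_k$; this is essentially identical to the argument already deployed in the proof of the preceding Hermitian-projector proposition, so I do not expect any further obstacle beyond that routine verification.
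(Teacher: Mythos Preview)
Your proof is correct. The Clifford-commutation step and the use of \cref{eq-5232201} to kill the trailing unitary are identical to the paper's approach. For the commutativity of the $\hat P_i$, the paper takes a slightly different tack: it fixes a pair $\hat P_1,\hat P_2$ assumed to anti-commute, sums $E_v$ over all leaves with $\beta_1^v=\beta_2^v=1$ (the remaining $\beta$'s telescope to the identity), obtains $\tfrac{I+\hat P_1}{2}\cdot\tfrac{I+\hat P_2}{2}$, and observes this must be Hermitian (each $E_v$ is, by the preceding proposition) yet cannot be when $\hat P_1\hat P_2=-\hat P_2\hat P_1$. Your route instead reuses the saturation-of-contractions argument from the preceding proposition to show that anti-commutation of any pair forces every $E_v$ to vanish, and then invokes $\sum_v E_v=I$ for the contradiction. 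The paper even remarks in passing that the $E_v\neq 0$ case can be handled ``similar to the previous proof,'' which is exactly what you do; you simply notice that $\sum_v E_v=I$ guarantees some $E_v\neq 0$, making the paper's separate ``general case'' argument unnecessary. Both arguments are short and rely on the same preliminary results.
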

\begin{proof}
In \cref{eq-5232121}, we commute all Clifford unitary to the rightmost position, and we can obtain the desired form. It remains to show that $\hat{P}_1\cdots \hat{P}_m$ pairwise commute.

If $E_v\neq 0$, we can prove $\hat{P}_1,\ldots,\hat{P}_m$ pairwise commute similar to the previous proof. 

For the general case, suppose without loss of generality that $\hat{P}_1$ anti-commutes with $\hat{P}_2$. Then consider the sum:
\begin{align}
\sum_{v:\beta^v_1=\beta^v_2=1}E_v&=\sum_{\beta_3,\ldots,\beta_m \in\{-1,1\}^{m-2} } \frac{I+\hat{P}_1}{2}\frac{I+\hat{P}_2}{2}\frac{I+\beta_3 \hat{P}_3}{2}\cdots \frac{I+\beta_m \hat{P}_m}{2}\\
&=\frac{I+\hat{P}_1}{2}\frac{I+\hat{P}_2}{2}.
\end{align}
This means $\frac{I+\hat{P}_1}{2}\frac{I+\hat{P}_2}{2}$ is also Hermitian. However, this is not possible since
\[((I+\hat{P}_1)(I+\hat{P}_2))^\dag-(I+\hat{P}_1)(I+\hat{P}_2)=\hat{P}_2\hat{P}_1-\hat{P}_1\hat{P}_2=2\hat{P}_2\hat{P}_1\neq 0.\]
\end{proof}

\begin{remark}
Note that the error correction program is not strictly conservative, this is because the cat state preparation used in syndrome measurement employs an adaptive strategy (parity check and repetition). Nevertheless, we can independently verify the cat state preparation as shown in \cref{sec-cat-state-1212245} and treat the cat state measurement as a black box, where one fault causes at most one error on data qubit (possibly with bit-flip on measurement outcome). With this abstraction, the overall error correction program becomes effectively non-adaptive.
\end{remark}

\subsection{Explanation of Error Propagation}
The requirement that the loop body $S$ \textbf{does not propagate errors} means that: 
for any state $\ket{\psi}$, any execution of $S$ with $s\leq t$ faults on $\ket{\psi}$ will produce a state in the subspace 
$$\mathcal{S}_s(E_v\ket{\psi})=\spanspace(\{Q E_v \ket{\psi}\,\,|\,\, Q \textup{ is a Pauli operator of weight}  \leq s\}),$$
for some leaf node $v$.

\subsection{Technical Lemma}
We will also use the following result:
\begin{lemma}\label{lemma-5232211}
For any conservative loop body $S$, let $\hat{P}_1,\ldots,\hat{P}_m$ be those Pauli operators in \cref{prop-5230556}. Then, for an arbitrary leaf node $v$, let $E'_v$ be the Kraus operator of an arbitrary faulty transition from the root node to node $v$. We have that
\[E'_v = Q \cdot \left(\frac{I+(-1)^{\gamma_1}\hat{P}_1}{2}\right)\cdots \left(\frac{I+(-1)^{\gamma_m}\hat{P}_m}{2}\right),\]
for some Pauli operator $Q$ and $\gamma_i\in\{0,1\}$.
\end{lemma}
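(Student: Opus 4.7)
The plan is to show that every Pauli error introduced by a faulty transition can be commuted past all downstream Clifford unitaries and Pauli projectors to the leftmost position, leaving behind a stabilizer projector of the claimed form. First, by the discretization of faults (Theorem~\ref{lemma-disc-fautls-191645}), it suffices to consider Pauli-fault instantiations, so every faulty transition on the path to $v$ contributes a Kraus operator of the form $P \cdot U$ (for a unitary transition) or $P \cdot \ketbra{b}{b} \cdot Q$ (for a measurement transition), where $P, Q$ are Pauli operators. By non-adaptivity, the ideal Clifford unitaries $U_1, \ldots, U_{m+1}$ and Pauli measurements $P_1, \ldots, P_m$ appearing in the path to $v$ are fixed and independent of the particular leaf node chosen (see \cref{sec-cons-loop-210103}). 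Hence $E'_v$ factorizes as a product interleaving these fixed Cliffords, the fixed Pauli projectors $\frac{I + \beta_i^v P_i}{2}$, and additional Pauli operators coming from the faults.

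Second, I would push all Pauli factors to the leftmost position using two commutation identities: for any Pauli $P$ and Clifford unitary $U$, $P \cdot U = U \cdot (U^\dag P U)$ where $U^\dag P U$ is again a Pauli; and for any Paulis $P, Q$, $P \cdot \frac{I + \alpha Q}{2} = \frac{I + \alpha' Q}{2} \cdot P$ with $\alpha' = \alpha$ if $PQ = QP$ and $\alpha' = -\alpha$ otherwise. Iterating these rules collects all Pauli errors into a single leading Pauli operator while possibly flipping the signs of some projectors, yielding
\[
E'_v = Q_0 \cdot U_{m+1} \cdot \frac{I + \delta_m P_m}{2} \cdot U_m \cdots \frac{I + \delta_1 P_1}{2} \cdot U_1,
\]
for some Pauli $Q_0$ and some $\delta_i \in \{-1, 1\}$.

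Third, I would repeat the manipulation used in the proof of \cref{prop-5230556}: commute every $U_j$ past the projectors to its right so that each $P_i$ is conjugated into $\hat{P}_i$ up to a sign, and all Cliffords collect into the leading product $U_{m+1} \cdots U_1$. Using $U_{m+1} \cdots U_1 = I$ from \cref{eq-5232201} then absorbs the Clifford factor and yields the claimed form $E'_v = Q \cdot \prod_i \frac{I + (-1)^{\gamma_i} \hat{P}_i}{2}$ for some Pauli $Q$ and bits $\gamma_i$. The main obstacle will be the careful sign bookkeeping, tracking how each inserted Pauli error, upon being commuted past each downstream $U_j$ and projector, contributes to the $\gamma_i$ and to the final $Q$; once the two commutation rules are in place, however, this reduces to a mechanical if tedious calculation.
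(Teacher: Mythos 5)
Your proof is correct and takes essentially the same route as the paper's: write $E'_v$ as an interleaved product of fault Paulis, the fixed Cliffords $U_j$, and the Pauli projectors, then use Pauli--Clifford conjugation and Pauli--projector sign-flip identities to collect all error Paulis into a single leading Pauli operator times a product of $\hat{P}_i$-projectors, absorbing the Cliffords via $U_{m+1}\cdots U_1 = I$. The only cosmetic difference is the order of the two commutation passes (you move the fault Paulis left before collecting the Cliffords, whereas the paper commutes the Cliffords to the rightmost position first and then moves the Paulis left); both orders yield the claimed form.
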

\begin{proof}
Since the execution only contains Pauli faults, we can write $E'_v$ based on \cref{eq-5232121}, as the following interleaved product:
\[E'_v = Q_{2m+1}U_{m+1}Q_{2m} \left(\frac{I+\beta_m^v P_m}{2}\right) Q_{2m-1} U_m  \cdots Q_3 U_2 Q_2 \left(\frac{I+\beta_1^v P_1}{2}\right) Q_1 U_1,\]
where $Q_i$ are Pauli operators. Then, we commute all Clifford unitaries $U_i$ to the rightmost position:
\begin{align}
E'_v &=Q_{2m+1}Q'_{2m} \left(\frac{I+\beta_m^v \hat{P}_m}{2}\right) Q'_{2m-1} \cdots Q_2'\left(\frac{I+\beta_1^v \hat{P}_1}{2}\right) Q'_1 \cdot (U_{m+1}\cdots U_1)\nonumber \\
&=Q_{2m+1}Q'_{2m} \left(\frac{I+\beta_m^v \hat{P}_m}{2}\right) Q'_{2m-1} \cdots Q_2'\left(\frac{I+\beta_1^v \hat{P}_1}{2}\right) Q'_1,
\end{align}
where the second equality is because \cref{eq-5232201}.
Then, we commute all Pauli operators $Q'_i$ to the leftmost position:
\[E'_v=(Q_{l+1}Q'_l\cdots Q'_1)\cdot \left(\frac{I+\beta_m^{'v} \hat{P}_m}{2}\right)\cdots \left(\frac{I+\beta_1^{'v} \hat{P}_1}{2}\right).\]
This completes the proof.
\end{proof}

%We call the transition paths grouped by classical output states:
%\begin{enumerate}
%    \item \textbf{Idempotent}: if for any classical output state \(\sigma\),
%\[\mathcal{E}_{\sigma}\circ\mathcal{E}_{\sigma}=\mathcal{E}_{\sigma}.\]
%    \item \textbf{Pairwise orthogonal}: if for any two distinct classical output state \(\sigma\neq \sigma'\),
%    \[\mathcal{E}_{\sigma}\circ\mathcal{E}_{\sigma'}=0.\]
%\end{enumerate}
%Here, \(\circ\) stands for composition of quantum channels.

\subsection{Valuation of Symbols}
Let \(\textup{Symb}\) be the alphabet set containing all possible symbols that can be used in the symbolic execution.
We define the \textbf{valuation} \(V\) as a function from the set \textup{Symb} to concrete values in \(\{0,1\}\). 
%We use \(\textup{dom}(V)\) to denote the \textbf{domain} of the valuation \(V\). For two valuations \(V_1,V_2\), if \(\textup{dom}(V_1)\subseteq \textup{dom}(V_2)\) and for any symbol \(s\in\textup{dom}(V_1)\), we have \(V_1(s)=V_2(s)\), then we say \(V_2\) is an \textbf{extension} of \(V_1\) and write \(V_1\leq V_2\). 

Suppose \(\langle S, \widetilde{\sigma},\widetilde{\rho},p,\varphi,\widetilde{F}\rangle\) is a symbolic configuration, \(\{s_1,\ldots,s_m\}\) is the set containing all symbols occurring in \(\widetilde{\sigma},\widetilde{\rho},\varphi,\widetilde{F}\). Let \(V\) be a valuation on symbols \(\{s_1,\ldots,s_m\}\). Then, we define \(V(\widetilde{\rho})\) as the concrete quantum state by replacing each symbols \(s_i\) in the phase of stabilizers of \(\widetilde{\rho}\) (see \cref{def-1281938}). Similarly, we define the \(V(\widetilde{\sigma})\), \(V(\varphi)\) and \(V(\widetilde{F})\) as the concrete classical state, truth value and integer by replacing all symbols by their valuations. We say concrete quantum states \(\rho\) is \textbf{representable} by a symbolic states \(\widetilde{\rho}\) if there exists a valuation \(V\) such that \(V(\widetilde{\rho})\sqsupseteq \rho\).

\subsection{Monotonicity}
We show some \textbf{monotonicity} properties of our symbolic faulty transitions, which can be easily checked. Suppose 
\[\langle S,\widetilde{\sigma},\widetilde{\rho},p,\varphi,\widetilde{F}\rangle \twoheadrightarrow \langle S',\widetilde{\sigma}',\widetilde{\rho}',p',\varphi',\widetilde{F}'\rangle,\]
and \(V\) is any valuation, then we have the following results.
\begin{enumerate}
    \item If \(V(\varphi')=\textup{True}\), then \(V(\varphi)=\textup{True}\).
    \item \(V(\widetilde{F}')\geq V(\widetilde{F})\).
    \item \(S'\) is a sub-program of \(S\).
    \item \(p\geq p'>0\)
\end{enumerate}

\subsection{Soundness and Completeness of Symbolic Faulty Transitions}
In this subsection, we prove an auxiliary result: the soundness and completeness of our symbolic execution with symbolic faulty transitions (note that this is NOT equivalent to the soundness and completeness of our fault-tolerance verification), provided that the loops are either 
\textbf{1)} memory-less, \textbf{2)} conservative and the loop body does not propagate errors. 

If \(\langle S,\sigma,\rho\rangle\) and \(\langle S',\sigma',\rho'\rangle\) are two concrete classical-quantum configurations, we introduce the following notation:
\[\langle S,\sigma,\rho\rangle \Rightarrow^s \langle S', \sigma',\rho' \rangle.\]
where \(\Rightarrow^s\) refers to the transition path consisting of arbitrary number of \(\rightarrow\) and at most \(s\) \(\leadsto\). We call \(\Rightarrow\) the \textbf{mixed transition}.

Then, we introduce the concepts of soundness and completeness of quantum symbolic faulty execution.
\begin{definition}[Soundness and Completeness of Symbolic Faulty Transitions~\footnote{Note that this is NOT equivalent to the soundness and completeness of our fault-tolerance verification given in \cref{def-1310223}.}]\label{def-1310351}
\begin{itemize}
    \item \textbf{Soundness}: If \(\langle S,\widetilde{\sigma},\widetilde{\rho},p,\varphi,\widetilde{F}\rangle \twoheadrightarrow^{*} \langle \downarrow,\widetilde{\sigma}',\widetilde{\rho}',p',\varphi',\widetilde{F}'\rangle\), then for any valuation \(V\) such that \(V(\varphi')=\textup{True}\), we have
    \[\langle S,V(\widetilde{\sigma}),pV(\widetilde{\rho})\rangle \Rightarrow^s \langle \downarrow, V(\widetilde{\sigma}'),p' V(\widetilde{\rho}') \rangle,\]
    where \(s\leq V(\widetilde{F}')-V(\widetilde{F})\).
    \item \textbf{Completeness}: Suppose \(s\) is a positive integer and
    \[\langle S,\sigma,\rho \rangle \Rightarrow^{s} \langle \downarrow, \sigma',\rho' \rangle.\]
    If \(\rho\) is representable by \(\widetilde{\rho}\), then, there exists a set of symbolic faulty transition paths
    \[\{\langle S,\widetilde{\sigma},\widetilde{\rho},1,\textup{True},0\rangle \twoheadrightarrow^{*} \langle \downarrow,\widetilde{\sigma}'_i,\widetilde{\rho}'_i,p'_i,\varphi'_i,\widetilde{F}'_i\rangle\}_i,\]
    where \(\widetilde{\sigma}=\sigma\) (i.e., there are no symbols in \(\widetilde{\sigma}\)), \(p_i'>0\) and there exists a set of valuations \(\{V_i\}_i\) such that \(V_i(\widetilde{\rho})=\rho\), \(V_i(\widetilde{\sigma}_i')=\sigma'\), \(V_i(\widetilde{F}_i')\leq s\), \(V_i(\varphi'_i)=\textup{True}\) and 
    \[\sum_i\supp(V_i(\widetilde{\rho}_i'))\supseteq \supp(\rho').\]
\end{itemize}
\end{definition}

\subsubsection{Soundness}
\begin{lemma}\label{lemma-1310323}
Our symbolic faulty transitions with rule \((\textup{SF-RU}')\) are sound (see \cref{def-1310351}).
\end{lemma}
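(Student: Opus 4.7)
The plan is to proceed by strong induction on the length of the symbolic transition sequence $\twoheadrightarrow^{*}$. The base case of zero steps is immediate: all components coincide and $s = 0$ works. For the inductive step, I split off the first symbolic transition $\cqconfig{S,\widetilde{\sigma},\widetilde{\rho},p,\varphi,\widetilde{F}} \twoheadrightarrow \cqconfig{S_1,\widetilde{\sigma}_1,\widetilde{\rho}_1,p_1,\varphi_1,\widetilde{F}_1}$ and apply the induction hypothesis to the tail. This is legitimate because the monotonicity observations just before the lemma give that, for a fixed valuation $V$ with $V(\varphi')=\textup{True}$, also $V(\varphi_i)=\textup{True}$ for every intermediate path condition, and the intermediate fault counters are sandwiched between $V(\widetilde{F})$ and $V(\widetilde{F}')$. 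Composing a concrete mixed transition that realizes the first symbolic step with the concrete execution produced by the inductive hypothesis then yields the required concrete execution, with fault budgets adding correctly.

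The core of the proof is thus the one-step case: for each symbolic rule, I exhibit a concrete mixed transition that implements it under $V$. For the classical rules (SF-AS), (SF-CT), (SF-CF), the matching concrete rules (AS), (CT), (CF) apply with zero faults, and the updated path condition guarantees the branch guard evaluates correctly. For (SF-CO), the symbol $s$ introduced by the rule is constrained by $C_f$ so that under $V$ it equals $f(V(\widetilde{\sigma})(x))$, matching (CO). For the quantum rules (SF-IN), (SF-UT), (SF-M), the symbolic Pauli-error tag $e$ (respectively $e_1 \lor e_2$) is dichotomous under $V$: when it evaluates to $0$ no error is injected and the ideal rules (IN), (UT), (M0)/(M1) apply; when it evaluates to $1$ the concrete faulty rules (F-IN), (F-UT), (F-M0)/(F-M1) apply with the specific Pauli channel determined by the valuation of the symbols $e_X, e_Z$ (and $Q_q, P_{0,q}, P_{1,q}$ for measurement), consuming exactly the increment $V(\widetilde{F}_1)-V(\widetilde{F})\in\{0,1\}$ from the fault budget.

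The only non-routine case is (SF-RU'), where a single symbolic step collapses an entire loop iteration. Its premise is already a complete symbolic execution of the loop body $S$, so the inductive hypothesis (applied to that inner derivation) gives a concrete execution of $S$ taking $\cqconfig{S, V(\widetilde{\sigma}), p V(\widetilde{\rho})}$ to $\cqconfig{\downarrow, V(\widetilde{\sigma}_1), p_1 V(\widetilde{\rho}_1)}$ within the correct fault budget. Combining this with the concrete unfolding rule (RU) and then the conditional rule (CT) -- which fires because the updated path condition $\varphi_1 \land \widetilde{\sigma}_1(b)$ forces $V(\widetilde{\sigma}_1(b))=\textup{True}$ -- produces a concrete execution of $\repeatuntil{S}{b}$ that terminates after a single successful iteration. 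I expect the main obstacle to be the bookkeeping around this case, namely tracking the fault counter through the inner derivation, lining it up with the counter recorded by the outer one-step transition, and checking that the (SC) rule properly lifts the inner concrete derivation into the context $S; \ifelse{b}{\downarrow}{\{\repeatuntil{S}{b}\}}$. Conceptually the argument is light because soundness only requires exhibiting \emph{some} concrete execution matching the symbolic one; in particular, the memory-less/conservative hypotheses on loops are not invoked for soundness, and will only be needed when the corresponding completeness result (which must cover \emph{all} concrete executions, including those that iterate multiple times) is proved.
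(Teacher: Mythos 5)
Your proof follows essentially the same route as the paper's: peel off the first symbolic step, use the monotonicity observations to propagate the valuation's truth of the path condition backwards, reduce to a one-step simulation claim, and discharge each rule by exhibiting the matching concrete ideal or faulty transition, with the dichotomy on $V(e)$ deciding between the ideal rule and the Pauli-dressed faulty rule. Your remark that the memory-less/conservative hypotheses are not needed for soundness (only for completeness) is also correct and consistent with how the paper states and proves this lemma.

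The one place where your argument as written does not go through is the induction measure. You induct on the length of the outer symbolic transition sequence, but in the $(\textup{SF-RU}')$ case the object to which you need to apply the induction hypothesis is the \emph{premise} derivation $\cqconfig{S,\widetilde{\sigma},\widetilde{\rho},p,\varphi,\widetilde{F}}\twoheadrightarrow^{*}\cqconfig{\downarrow,\widetilde{\sigma}',\widetilde{\rho}',p',\varphi',\widetilde{F}'}$ of the loop body. That derivation is not a suffix of the outer sequence and may contain arbitrarily many steps, while the outer sequence charges only one step for the entire $(\textup{SF-RU}')$ transition; so the hypothesis, as you have set it up, simply does not cover it. The paper sidesteps this by inducting on the \emph{structure of the program} $S$: the loop body $S_1$ is a proper sub-program of $\repeatuntil{S_1}{b}$, so the structural hypothesis applies to the premise directly. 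Your proof is repaired by the same move, or equivalently by inducting on the total size of the derivation tree including the steps hidden inside premises of $(\textup{SF-RU}')$; with that change everything else you wrote stands.
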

\begin{proof}
We use induction on the structure of program \(S\). If \(S=\downarrow\), then the soundness holds trivially. Now assume the soundness holds for all sub-programs in \(S\).

%Then
%\[\langle S,\widetilde{\sigma},\widetilde{\rho},p,\varphi,\widetilde{F}\rangle \twoheadrightarrow^{*} \langle \downarrow,\widetilde{\sigma}',\widetilde{\rho}',p',\varphi',\widetilde{F}'\rangle.\]
%If the length is \(0\), then it is obvious that the soundness holds. Now assume that the soundness holds for all paths of length \(n\), and the path we are dealing with is of length \(n+1\). Specifically, 
Suppose
\begin{equation}\label{eq-1290218}
\langle S,\widetilde{\sigma},\widetilde{\rho},p,\varphi,\widetilde{F}\rangle \twoheadrightarrow
\langle S',\widetilde{\sigma}',\widetilde{\rho}',p',\varphi',\widetilde{F}'\rangle\twoheadrightarrow^{*}
\langle \downarrow,\widetilde{\sigma}'',\widetilde{\rho}'',p'',\varphi'',\widetilde{F}''\rangle.
\end{equation}
Due to the monotonicity, \(S'\) is a sub-program of \(S\). Therefore, by induction hypothesis, for any valuations \(V\) such that \(V(\varphi'')=\textup{True}\), we have
\[\langle S',V(\widetilde{\sigma}'),p'V(\widetilde{\rho}')\rangle \Rightarrow^{s'} \langle \downarrow,V(\widetilde{\sigma}''),p''V(\widetilde{\rho}'')\rangle,\]
where \(s'=V(\widetilde{F}'')-V(\widetilde{F}')\). Then it suffices to prove the following holds true
\begin{equation}\label{eq-1290259}
\langle S,V(\widetilde{\sigma}),pV(\widetilde{\rho})\rangle \Rightarrow^{s} \langle S',V(\widetilde{\sigma}'),p'V(\widetilde{\rho}')\rangle,
\end{equation}
where \(s=V(\widetilde{F}')-V(\widetilde{F})\).

Then, we check all the possible types of the first symbolic transition in \cref{eq-1290218}. For simplicity, here we check the unitary transform statement \(U(\overline{q})\) with transition \((\textup{SF-UT})\) and the repeat-until-success statement \(\repeatuntil{S}{b}\) with transition \((\textup{SF-RU}')\), where the others can be treated similarly.

\textbf{1)} Assume the last symbolic transition is \((\textup{SF-UT})\) corresponding to the statement \(U(\overline{q})\). Then we have \(S=U(\overline{q});S'\), \(\widetilde{\sigma}'=\widetilde{\sigma}\), \(p=p'\), \(\varphi=\varphi'\) and \(\widetilde{F}+e=\widetilde{F}'\), where
\[(e,\widetilde{\rho}')=\textup{EI}(\overline{q},\textup{UT}(U,\overline{q},\widetilde{\rho})).\]

Now, if \(V(e)=0\), we know that \(s=0\). Thus we have
\[V(\widetilde{\rho}')=V(\textup{UT}(U,\overline{q},\widetilde{\rho}))=U_{\overline{q}}V(\widetilde{\rho})U^\dag_{\overline{q}},\]%=U_{\overline{q}}V(\widetilde{\rho})U^\dag_{\overline{q}},\]
where the first equality is by the definition of \(\textup{EI}\) and the second equality is by the definition of \(\textup{UT}\). Obviously, we have
\[\langle U(\overline{q});S', V(\widetilde{\sigma}),pV(\widetilde{\rho})\rangle\rightarrow \langle  S',V(\widetilde{\sigma}), pU_{\overline{q}}V(\widetilde{\rho})U_{\overline{q}}^\dag\rangle =\langle  S',V(\widetilde{\sigma}'), p'V(\widetilde{\rho}') \rangle ,\]
therefore \cref{eq-1290259} follows immediately.

If \(V(e)=1\), we know that \(s=1\). Thus we have
\[V(\widetilde{\rho}')=P_{\overline{q}}V(\textup{UT}(U,\overline{q},\widetilde{\rho}))P_{\overline{q}}=P_{\overline{q}}U_{\overline{q}}V(\widetilde{\rho})U^\dag_{\overline{q}}P_{\overline{q}},\]
where \(P_{\overline{q}}\) is some Pauli operators acting on \(\overline{q}\), the first equality is by the definition of \(\textup{EI}\) and the second equality is by the definition of \(\textup{UT}\). Obviously, we have
\[\langle U(\overline{q});S', V(\widetilde{\sigma}),pV(\widetilde{\rho})\rangle\leadsto \langle  S',V(\widetilde{\sigma}), pP_{\overline{q}}U_{\overline{q}}V(\widetilde{\rho})U_{\overline{q}}^\dag P_{\overline{q}}\rangle=\langle  S',V(\widetilde{\sigma}'), p'V(\widetilde{\rho}') \rangle ,\]
therefore \cref{eq-1290259} follows immediately.

\textbf{2)} Assume the first symbolic transition is \((\textup{SF-RU}')\) corresponding to the statement \(\repeatuntil{S_1}{b}\) where \(S_1\) is a sub-program of \(S\). Then, 
\[S=\repeatuntil{S_1}{b};S'.\]
Suppose we have
\begin{equation}\label{eq-1290429}
\langle S_1, \widetilde{\sigma},\widetilde{\rho},p,\varphi,\widetilde{F}\rangle\twoheadrightarrow^{*} \langle \downarrow, \widetilde{\sigma}_1,\widetilde{\rho}_1,p_1,\varphi_1,\widetilde{F}_1\rangle.
\end{equation}
Then, by the rule \((\textup{SF-RU}')\), we have
\begin{equation*}
\langle S,\widetilde{\sigma},\widetilde{\rho},p,\varphi,\widetilde{F}\rangle\twoheadrightarrow \langle S',\widetilde{\sigma}_1,\widetilde{\rho}_1,p_1,\varphi_1\land  \widetilde{\sigma}_1(b),\widetilde{F}_1\rangle.
\end{equation*}
Therefore, we know that \(\widetilde{\sigma}'=\widetilde{\sigma}_1\), \(\widetilde{\rho}'=\widetilde{\rho}_1\), \(p'=p_1\), \(\varphi'=\varphi_1\land \widetilde{\sigma}_1(b)\) and \(\widetilde{F}'=\widetilde{F}_1\). Note that \(V(\varphi')=\textup{True}\). 
By the induction hypothesis applied on \cref{eq-1290429} (using the structural induction hypothesis on the sub-program \(S_1\)), and the fact that \(V(\varphi_1)=\textup{True}\), we have
\begin{equation}\label{eq-1290512}
\langle S_1,V(\widetilde{\sigma}),pV(\widetilde{\rho})\rangle\Rightarrow^{s}\langle \downarrow , V(\widetilde{\sigma}_1), p_1V(\widetilde{\rho}_1)\rangle=\langle \downarrow, V(\widetilde{\sigma}'), p'V(\widetilde{\rho}')\rangle
\end{equation}
where \(s=V(\widetilde{F}_1)-V(\widetilde{F})=V(\widetilde{F}')-V(\widetilde{F})\). Therefore, by the rule \(\textup{(RU)}\), we have
\begin{align}
&\quad \,\,\, \langle \repeatuntil{S_1}{b};S',V(\widetilde{\sigma}),pV(\widetilde{\rho})\rangle \nonumber\\ 
&\rightarrow \langle S_1;\ifelse{b}{\downarrow}{\{\repeatuntil{S_1}{b}\}};S',V(\widetilde{\sigma}),pV(\widetilde{\rho}) \rangle \label{eq-1290509}\\
&\Rightarrow^s  \langle \ifelse{b}{\downarrow}{\{\repeatuntil{S_1}{b}\}}; S',V(\widetilde{\sigma}'),p'V(\widetilde{\rho}') \rangle\label{eq-1290510}\\
& \rightarrow \langle S',V(\widetilde{\sigma}'),p'V(\widetilde{\rho}')\rangle \label{eq-1290511}
\end{align}
where \cref{eq-1290509} is by the rule \((\textup{RU})\), \cref{eq-1290510} is by \cref{eq-1290512}, \cref{eq-1290511} is by the rule \((\textup{CT})\) and the fact that \(V(\widetilde{\sigma}')\models b\) since \(V(\varphi')=V(\varphi_1\land \widetilde{\sigma}_1(b))=\textup{True}\). This is exactly what we want (i.e., \cref{eq-1290259}).

Therefore, by the structural induction, we conclude that the soundness holds for any program \(S\).
\end{proof}

\subsubsection{Completeness.}
Now, we prove the completeness.

A mixed transition path is called \textbf{non-repeated} (and denoted by \(\Rightarrow^{*}_{\textup{nr}}\)) if any \((\textup{RU})\) rule is followed by an execution of loop body and then a \((\textup{CT})\) rule, i.e.,
\begin{align}
\langle \repeatuntil{S}{b},\sigma,\rho\rangle &\rightarrow \langle S;\ifelse{b}{\downarrow}{\{\repeatuntil{S}{b}\}},\sigma,\rho\rangle \nonumber \\
&\Rightarrow^{*}_{\textup{nr}} \langle \ifelse{b}{\downarrow}{\{\repeatuntil{S}{b}\}},\sigma',\rho'\rangle \label{eq-1290600} \\
&\rightarrow \langle \downarrow, \sigma',\rho'\rangle,\nonumber
\end{align}
where the transition path in \cref{eq-1290600} is also non-repeated.
Formally, this can be defined inductively on the structure of the starting program of the path.

We will use the following lemma.
\begin{lemma}\label{eq-1302251}
Suppose the loops are either \textbf{1)} memory-less or \textbf{2)} conservative, and the loop body does not propagate errors. If
\begin{equation}\label{eq-1291351}
\langle S,\sigma,\rho\rangle \Rightarrow^s \langle \downarrow,\sigma',\rho'\rangle,
\end{equation}
then there exists a collection of non-repeated transition paths
\[\{\langle S,\sigma,\rho\rangle \Rightarrow^s_{\textup{nr}} \langle \downarrow,\sigma',\rho'_i\rangle\}_i,\]
such that \(\sum_i \supp(\rho'_i)\supseteq \supp(\rho')\).
\end{lemma}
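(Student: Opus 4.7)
The plan is to proceed by structural induction on $S$. For all statements that do not involve loops (assignment, classical oracle, initialization via measurement per \cref{remark-1311516}, unitary, measurement, sequential composition, and both branches of conditionals), any mixed transition path is already non-repeated in the relevant sub-program, so the induction hypothesis applies directly and the collection of paths produced is a singleton. For sequential composition $S_1;S_2$, I would split an arbitrary $\Rightarrow^s$ path at the boundary into an $s_1$-faulty path for $S_1$ and an $s_2$-faulty path for $S_2$ with $s_1+s_2\le s$, invoke the induction hypothesis on each to get families of non-repeated paths, and then combine them pairwise. Using the fact that support is preserved under sums (\cref{prop-1271706}) and under channels (\cref{prop-1271718}), the resulting collection covers $\supp(\rho')$.

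The only non-trivial case is $S=\repeatuntil{S_0}{b};S'$. I would first reduce to the situation where $S'=\downarrow$, so the execution consists of finitely many iterations of $S_0$ followed by a successful check of $b$. Suppose the actual execution runs $N$ iterations, with $s_i$ faults in the $i$-th iteration, and $\sum_{i=1}^N s_i \le s$. The goal is to produce one (or several) non-repeated paths that execute a single iteration of $S_0$ with at most $s$ faults and whose final supports cover $\supp(\rho')$.

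For the \textbf{memory-less} case, the situation is straightforward. By definition, every classical and quantum variable touched by $S_0$ is re-initialized at the start of each iteration, so the final quantum state depends only on the last iteration's execution and on whatever qubits $S_0$ does not touch at all (which are untouched by the loop entirely). Thus I can discard the first $N-1$ iterations entirely and relocate all $\sum_i s_i \le s$ faults into a single execution of the last iteration; using the induction hypothesis on $S_0$ to turn that last iteration's $\Rightarrow^{s}$ path into non-repeated paths, I obtain the required collection. Soundness of re-locating faults follows because faulty transitions permit arbitrary quantum operations on the affected qubits, and concentrating them into one iteration only enlarges the set of reachable Kraus operators.

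For the \textbf{conservative} case, the analysis is more delicate and will be the main obstacle. By \cref{prop-5230556} the fault-free Kraus operators of $S_0$ have the form $E_v=\prod_j \tfrac{I+\beta_j^v \hat P_j}{2}$ for a fixed commuting family $\hat P_1,\ldots,\hat P_m$, and by \cref{lemma-5232211} any faulty-path Kraus operator has the form $E'_v = Q\cdot \prod_j \tfrac{I+(-1)^{\gamma_j}\hat P_j}{2}$ for some Pauli $Q$. I would compose the operators across the $N$ iterations: each intermediate projector after the first iteration acts as the identity on the image of the preceding projector (in the fault-free case this is the idempotency $E_vE_v=E_v$; with faults, the operators collapse up to the commuted Pauli prefix because the $\hat P_j$'s commute and each $Q$ is a Pauli). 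Using the error-non-propagation hypothesis, each $Q$ has weight at most the number of faults that produced it, so the cumulative Pauli prefix across all iterations has weight at most $\sum_i s_i\le s$. This lets me rewrite the $N$-iteration trajectory as a single-iteration trajectory with at most $s$ faults, possibly with a choice of sign pattern $(\gamma_j)$ that depends on the branch, yielding a finite collection of non-repeated single-iteration paths whose union of supports contains $\supp(\rho')$. Finally, appending the successful check of $b$ and applying the induction hypothesis to the single iteration's inner structure completes the inductive step.
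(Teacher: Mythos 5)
Your proposal is correct and follows essentially the same route as the paper's proof: structural induction on $S$, reduction of the loop case to a single iteration, with the memory-less case handled via the resets and the conservative case handled by commuting the Pauli fault prefixes through the stabilizer projectors of \cref{prop-5230556} and \cref{lemma-5232211} so that the product of iterations collapses to one faulty iteration preceded by a Pauli of weight at most $s$. The only cosmetic difference is that in the memory-less case the paper simply retains the last iteration's path with its $s_l\leq s$ faults rather than relocating all faults into it.
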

\begin{proof}
We use induction on the structure of \(S\). For the case that \(S=\downarrow\), it holds trivially. Now suppose it holds for all sub-programs of \(S\).

If the first transition of \cref{eq-1291351} is not the \((\textup{RU})\) rule, we get a sub-program after the first transition, and the existence of non-repeated path follows immediately by the induction hypothesis on this sub-program. Now, suppose \cref{eq-1291351} is of the form
\begin{align}\label{eq-1291413}
\langle \repeatuntil{S_1}{b};S_2,\sigma,\rho\rangle \Rightarrow^{s}&\langle \downarrow,\sigma',\rho'\rangle.
\end{align}

First suppose \(S_2\) is not \(\downarrow\).
Then let 
\[\langle \repeatuntil{S_1}{b};S_2,\sigma,\rho\rangle \Rightarrow^{s_1} \langle S_2,\sigma'',\rho'' \rangle \Rightarrow^{s-s_1} \langle \downarrow,\sigma',\rho'\rangle.\]
Note that \(S=\repeatuntil{S_1}{b};S_2\), which is a concatenation of \(\repeatuntil{S_1}{b}\) and \(S_2\). Thus \(\repeatuntil{S_1}{b}\) is also a sub-program, which, by the induction hypothesis, 
%\jp{I don't understand what you are saying here;  \(\repeatuntil{S_1}{b}\) is not a sub-program of itself. Maybe you mean to be doing induction not on the structure of the program, but on the structure of the step relation derivation instead?}
has a collection of non-repeated paths
\begin{equation}\label{eq-1301249}
\{\langle \repeatuntil{S_1}{b},\sigma,\rho\rangle \Rightarrow^{s_1}_{\textup{nr}} \langle \downarrow,\sigma'',\rho_i''\rangle\}_i,
\end{equation}
such that \(\sum_i \supp(\rho_i'')\supseteq \supp(\rho'')\)
and similarly \(S_2\) also has a collection of non-repeated paths
\begin{equation}\label{eq-1301250}
\{\langle S_2,\sigma'',\rho''\rangle\Rightarrow^{s-s_1}_{\textup{nr}}\langle \downarrow,\sigma',\rho_i'\rangle\},
\end{equation}
such that \(\sum_i\supp(\rho_i')\supseteq\supp(\rho')\).
Then, obviously, the collection of transition path
\begin{equation}\label{eq-1301326}
\{\langle \repeatuntil{S_1}{b};S_2,\sigma,\rho\rangle \Rightarrow^{s_1}_{\textup{nr}} \langle S_2,\sigma'',\rho''_i\rangle\Rightarrow^{s-s_1}_{\textup{nr}} \langle \downarrow,\sigma',\rho_{i,j}'\rangle\}_{i,j}
\end{equation}
satisfies \(\sum_{i,j}\supp(\rho_{i,j})\supseteq \supp(\rho')\), where \(\rho'_{i,j}\) refers to the quantum state corresponding to the output of the transition path by concatenating the \(i\)-th path in \cref{eq-1301249} and \(j\)-th path in \cref{eq-1301250}.

Now suppose \(S_2=\downarrow\), and thus we consider the following case
\begin{align}\label{eq-1291432}
\langle \repeatuntil{S_1}{b},\sigma,\rho\rangle \Rightarrow^{s} \langle \downarrow,\sigma',\rho'\rangle.
\end{align}
%First we use the rule \((\textup{RU})\)
%\[\langle \repeatuntil{S_1}{b},\sigma,\rho\rangle
%\rightarrow \langle S_1;\ifelse{b}{\downarrow}{\repeatuntil{S_1}{b}},\sigma,\rho\rangle.\]
%Since \(S_1\) is a sub-program, it always has a non-repeated path.
Assume that in \cref{eq-1291432}, the loop body \(S_1\) is repeated for \(l\) times. This means
\begin{equation}\label{eq-1301315}
\langle \underbrace{S_1;S_1;\cdots;S_1}_{l}, \sigma_0,\rho_0\rangle\Rightarrow^{s_1}\langle \underbrace{S_1;\cdots;S_1}_{l-1}, \sigma_1,\rho_1\rangle \Rightarrow^{s_2} \cdots\Rightarrow^{s_{l}} \langle \downarrow, \sigma_l,\rho_l\rangle,
\end{equation}
where \(\sigma_0=\sigma\), \(\rho_0=\rho\), \(\sigma_l=\sigma'\), \(\rho_l=\rho'\) and \(s_1+\cdots+s_l=s\).
Note that \(S_1\) is a sub-program of \(S\). By the induction hypothesis, we know that each transition path in \cref{eq-1301315} can be covered by a collection of non-repeated paths. Then, we can concatenate these paths (similar to that in \cref{eq-1301326}) to obtain
\begin{equation}\label{eq-1301433}
\{\langle \underbrace{S_1;S_1;\cdots;S_1}_{l}, \sigma_0,\rho_0\rangle\Rightarrow^{s_1}_{\textup{nr}}\langle \underbrace{S_1;\cdots;S_1}_{l-1}, \sigma_1,\rho'_1\rangle \Rightarrow^{s_2}_{\textup{nr}} \cdots\Rightarrow^{s_{l}}_{\textup{nr}} \langle \downarrow, \sigma_l,\rho'_l\rangle\}.
\end{equation}
such that \(\sum \supp(\rho'_l)\supseteq \supp(\rho_l)\).
%Specifically, we assume
%\begin{align}
%\langle \underbrace{S_1;S_1;\cdots;S_1}_{l-1};S_1,\sigma,\rho\rangle &\Rightarrow^{s'}_{\textup{nr}}\langle S_1,\sigma_1,\rho_1\rangle\nonumber \\
%&\Rightarrow_{\textup{nr}}^{s-s'}\langle \downarrow,\sigma',\rho'\rangle.\label{eq-1291529}
%\end{align}

\textbf{1)} If the loop body \(S_1\) is memory-less, the execution of \(S_1\) does not depend on the input states. Combining with \cref{eq-1301433}, we immediately have
\[\{\langle S_1,\sigma_0,\rho_0\rangle \Rightarrow_{\textup{nr}}^{s_l}\langle \downarrow,\sigma_l,\rho'_l\rangle\},\]
is a set of non-repeated paths such that \(\sum \supp(\rho'_l)\supseteq \supp(\rho_l)\).
Note that \(s_l\leq s\), we have
\[\{\langle \repeatuntil{S_1}{b},\sigma,\rho\rangle \Rightarrow^{s}_{\textup{nr}} \langle \downarrow,\sigma_l,\rho_l'\rangle\},\]
such that \(\sum \supp(\rho_l')\supseteq \supp(\rho_l)\), where \(S_1\) is executed only once (note that the Boolean expression \(b\) can be satisfied by \(\sigma_l\) due to \cref{eq-1291432}) and the fact that \(\sigma_l=\sigma'\).

\textbf{2)} Suppose the loop body \(S_1\) is conservative and does not propagate errors. Let 
\begin{equation}\label{eq-1301552}
\langle \underbrace{S_1;S_1;\cdots;S_1}_{l}, \sigma_0,\rho_0\rangle\Rightarrow^{s_1}_{\textup{nr}}\langle \underbrace{S_1;\cdots;S_1}_{l-1}, \sigma_1,\rho'_1\rangle \Rightarrow^{s_2}_{\textup{nr}} \cdots\Rightarrow^{s_{l}}_{\textup{nr}} \langle \downarrow, \sigma_l,\rho'_l\rangle
\end{equation}
be any transition path in \cref{eq-1301433}. It suffices to prove that the support of \(\rho'_l\) in \cref{eq-1301552} can be covered by a collection of non-repeated paths such that each of these paths
\textbf{1)} ends with classical state \(\sigma_l\);
\textbf{2)} executes \(S_1\) only once;
\textbf{3)} contains at most \(s=s_1+\cdots+s_l\) faults. 

Since the classical variables in \(S_1\) are reset in each iteration, the transition paths starting with \(S_1\) do not depend on the classical state. Suppose \(N_L\) is the set of leaf nodes of the non-repeated transition paths starting with \(S_1\). Let 
\[\{\mathcal{E}_v\,|\, v\in N_L\}\]
be the set of quantum channels such that each \(\mathcal{E}_v\) corresponds to the ideal transition path to leaf node \(v\). Therefore, the quantum channel corresponding to the transition path in \cref{eq-1301552} is of the form
\[\mathcal{E}^{s_l}_{v_l}\circ\cdots\circ\mathcal{E}^{s_2}_{v_2}\circ\mathcal{E}^{s_1}_{v_1},\]
where \(\mathcal{E}_{v_i}^{s_i}\) is a quantum channel corresponding to a faulty transition path with at most \(s_i\) Pauli faults starting with \(S_1\) and ending at the leaf node \(v_i\).

Let $\rho_0=\ketbra{\psi_0}{\psi_0}$, $\rho'_i=\ketbra{\psi'_i}{\psi'_i}$, $\mathcal{E}_{v_i}^{s_i}=E_{v_i}^{s_i}(\cdot) E_{v_i}^{s_i\dag}$. We can see that 
\[E_{v_1}^{s_1}\ket{\psi_0}=\ket{\psi'_1},\quad  E_{v_2}^{s_2}\ket{\psi'_1}=\ket{\psi'_2},\quad \ldots ,\quad E_{v_l}^{s_l}\ket{\psi'_{l-1}}=\ket{\psi'_l}. \]
Then, since errors do not propagate within the loop body, we have
\[\ket{\psi_{i}'}=E_{v_i}^{s_i}\ket{\psi_{i-1}'}=Q_i E_{v_i'}\ket{\psi_{i-1}'},\]
where $v'_i$ is some leaf node and $Q_i$ is some Pauli operator of weight at most $s_i$. This means
\begin{align}
\ket{\psi_l'}&=E_{v_l}^{s_l} \ket{\psi_{l-1}'}= E_{v_l}^{s_l} \cdot \left[(Q_{l-1}E_{v'_{l-1}})\cdots (Q_1 E_{v'_{1}})\right]\ket{\psi_0}\nonumber \\
&=E_{v_l}^{s_l}\cdot (F_{l-1}\cdots F_1)\cdot  (Q_{l-1}\cdots Q_1)\ket{\psi_0},\label{eq-5230558}
\end{align}
where $F_i$ is a stabilizer projector obtained from $E_{v_{i}'}$ by Pauli conjugation. By \cref{prop-5230556}, we can see that operators $\{F_i\}_i$ is also of the form
\begin{equation}\label{eq-5232213}
F_i=\left(\frac{I+(-1)^{\gamma^i_1}\hat{P}_1}{2}\right)\cdots \left(\frac{I+(-1)^{\gamma^i_m}\hat{P}_m}{2}\right).
\end{equation}
Then, by \cref{lemma-5232211}, we can see that
\[E_{v_l}^{s_l}= Q_l F_l,\]
for some Pauli operator $Q_l$ and some stabilizer projector $F_l$ that is also of the form \cref{eq-5232213}, sharing the same Pauli stabilizers $\hat{P}_1,\ldots,\hat{P}_m$ while with (possibly) different phases.
This means $F_l F_i$ is either $0$ or $F_l$. Therefore, \[E_{v_l}^{s_l}F_i=Q_l F_l F_i= c_i Q_l F_l=c_i E_{v_l}^{s_l},\]
for some $c_i\in\{0,1\}$.
Therefore, \cref{eq-5230558} becomes
\[\ket{\psi_l'}=c_1\cdots c_{l-1}\cdot  E_{v_l}^{s_l}\cdot  (Q_{l-1}\cdots Q_1)\ket{\psi_0}.\]
Since the weight of $Q_{l-1}\cdots Q_{1}$ is at most $s_1+\cdots +s_{l-1}$, we can see that the faulty transition that first inserts $Q_{l-1}\cdots Q_1$ to $\ket{\psi_0}$ and then applies the faulty transition $\mathcal{E}_{v_{l}}^{s_l}$ suffices to produce a non-zero state that is proportional to $\ket{\psi'_l}$ (provided that $\ket{\psi'_l}$ is non-zero). Moreover, \textbf{1)} since the transition ends in the leaf node $v_l$, it ends with the classical state $\sigma_l$, \textbf{2)} the loop body \(S_1\) is executed only once, \textbf{3)} the number of faults is at most $s_1+\cdots+s_l$. Recalling \cref{eq-1301552} and the texts there, we can conclude that for the case that the loop body \(S_1\) is conservative and does not propagate errors, the induction hypothesis also holds.
\end{proof}

Now, we prove the completeness of our symbolic execution.

\begin{lemma}\label{lemma-1310325}
If loops in a \cqprog{}\  are either \textbf{1)} memory-less or \textbf{2)} conservative and the loop body does not propagate errors, our symbolic faulty transitions with rule \((\textup{SF-RU}')\) are complete (see \cref{def-1310351}).
\end{lemma}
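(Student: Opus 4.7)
The plan is to prove completeness by structural induction on $S$, using Lemma~\ref{eq-1302251} as the crucial preprocessing step that converts an arbitrary concrete mixed transition path into a family of \emph{non-repeated} paths whose output supports together cover $\supp(\rho')$. This reduction is essential because the symbolic rule $(\textup{SF-RU}')$ executes the loop body exactly once, which matches the structure of a non-repeated path but not that of a general multi-iteration one. Since the required covering condition $\sum_i \supp(V_i(\widetilde{\rho}_i')) \supseteq \supp(\rho')$ is closed under unions over $i$, it suffices to establish the result for a single non-repeated concrete path and then aggregate the resulting symbolic families across the Lemma~\ref{eq-1302251} decomposition.

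For the induction itself, the base case $S = \downarrow$ is immediate. For the inductive step I would case-split on the first (mixed) transition of the non-repeated concrete path. For each non-loop construct I would match the concrete transition with the corresponding symbolic transition from Definition~\ref{def-symb-faut-tran-172111}, constructing a valuation that sets the newly introduced symbols (the fault-indicator bits $e$, $e_1$, $e_2$, the symbolic measurement outcome, and the classical oracle result) to the values actually realized in the concrete execution; the induction hypothesis applied to the residual sub-program then supplies the rest of the symbolic path, and its valuation is extended consistently. The repeat-until-success case is the one that requires the non-repeatedness hypothesis: in a non-repeated concrete execution of $\repeatuntil{S_1}{b}$ the loop body $S_1$ runs exactly once before $b$ evaluates to true, so applying the induction hypothesis to $S_1$ and then firing $(\textup{SF-RU}')$ produces a symbolic path whose extra conjunct $\widetilde{\sigma}'(b)$ is made true by the same valuation, since the concrete execution did exit the loop.

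The main obstacle will be the bookkeeping for the faulty measurement rule and for valuations witnessing the inclusion of supports. In $(\textup{PF-M0})/(\textup{PF-M1})$ the concrete step applies a pre-measurement Pauli $Q_q$ and a post-measurement Pauli $P_{i,q}$; the symbolic rule $(\textup{SF-M})$ introduces two independent indicator bits $e_1, e_2$ together with a fresh outcome symbol, and we must choose valuations so that $V(e_1 \lor e_2) \le 1$ whenever the concrete step counted as at most one fault, while reproducing both injected Paulis. Because $\rho$ is only assumed representable by $\widetilde{\rho}$ (i.e., $V(\widetilde{\rho}) \sqsupseteq \rho$ for some $V$, not equality), a single symbolic path may not suffice to cover the support of the transitioned concrete state, so the proof must enumerate finitely many valuations of the same symbolic path—or equivalently follow several sibling symbolic branches—to achieve $\sum_i \supp(V_i(\widetilde{\rho}_i')) \supseteq \supp(\rho')$. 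This is tractable because every intermediate symbolic state remains a stabilizer state and the relevant operations preserve the ``covering by finitely many valuations'' invariant, but it is the part of the argument requiring the most care.
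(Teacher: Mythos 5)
Your plan coincides with the paper's proof: it first invokes Lemma~\ref{eq-1302251} to reduce an arbitrary mixed transition path to a covering family of non-repeated paths, then argues by structural induction on $S$, matching each concrete step with the corresponding symbolic rule via an explicitly constructed valuation, with the repeat-until-success case resolved exactly as you describe (the loop body runs once by non-repeatedness, the induction hypothesis applies to $S_1$, and $(\textup{SF-RU}')$ adds a conjunct $\widetilde{\sigma}'(b)$ that the same valuation satisfies because the concrete run exited the loop). The bookkeeping issues you flag for the measurement rule and for covering supports by finitely many valuations are precisely the parts the paper dismisses as "more straightforward and handled similarly," so your proposal is, if anything, more candid about where the remaining care is needed.
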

\begin{proof}
Suppose \(s\) is a positive integer and
    \[\langle S,\sigma,\rho \rangle \Rightarrow^{s} \langle \downarrow, \sigma',\rho' \rangle.\]
    where \(\rho\) is representable by \(\widetilde{\rho}\).

Using \cref{eq-1302251}, we can always find a set of collection of non-repeated transition paths
\begin{equation*}%\label{eq-1310142}
    \{\langle S,\sigma,\rho\rangle \Rightarrow_{\textup{nr}}^{s}\langle \downarrow,\sigma',\rho_i' \rangle\}_i,
\end{equation*}
such that \(\sum_i \supp(\rho_i')\supseteq \supp(\rho')\). 

Therefore, it suffices to show that for each non-repeated faulty transition path
\begin{equation}\label{eq-1310225}
    \langle S,\sigma,\rho\rangle \Rightarrow_{\textup{nr}}^{s}\langle \downarrow,\sigma',\rho' \rangle,
\end{equation}
if \(\rho\) is representable by \(\widetilde{\rho}\), then there exists a symbolic faulty transition path
\[\langle S,\widetilde{\sigma},\widetilde{\rho},1,\textup{True},0\rangle \twoheadrightarrow^{*} \langle \downarrow,\widetilde{\sigma}',\widetilde{\rho}',p',\varphi',\widetilde{F}'\rangle,\]
where \(\widetilde{\sigma}=\sigma\) (i.e., there are no symbols in \(\widetilde{\sigma}\)), such that \(p'>0\) and there exists a valuation \(V\) such that \(V(\widetilde{\rho})=\rho\), \(V(\widetilde{\sigma}')=\sigma'\), \(\supp(V(\widetilde{\rho}'))\supseteq \supp(\rho')\), \(V(\varphi')=\textup{True}\) and \(V(\widetilde{F}')\leq s\).

We use induction on the structure of the \cqprog{}\ \(S\). When \(S=\downarrow\), it holds trivially. Now assume it holds for all the sub-programs of \(S\).

Then, we investigate the first transition in \cref{eq-1310225}. Here we only demonstrate the case where the first transition is the rule \((\textup{RU})\) for the repeat-until-success statement. This is the most non-trivial case and will make use of the non-repeatedness of the transition path in \cref{eq-1310225}. Other cases are more straightforward and can be handled similarly.

Suppose \(S=\repeatuntil{S_1}{b};S_2\). If \(S_2\neq \downarrow\), then both \(\repeatuntil{S_1}{b}\) and \(S_2\) are sub-programs. We can directly use the induction hypothesis on them to obtain two symbolic transition paths. Then by concatenating these two paths (with minor modifications), we obtain the desired symbolic transition path.

Therefore, suppose \(S=\repeatuntil{S_1}{b}\). We have
\[\langle \repeatuntil{S_1}{b},\sigma,\rho\rangle \rightarrow \langle S_1;\ifelse{b}{\downarrow}{\{\repeatuntil{S_1}{b}\}},\sigma,\rho\rangle.\]
Since the transition path in \cref{eq-1310225} is non-repeated, the \(S_1\) is only executed once. Therefore, we can know that
\[\langle S_1,\sigma,\rho\rangle \Rightarrow^{s}_{\textup{nr}}\langle \downarrow,\sigma',\rho'\rangle,\]
and \(\sigma'\models b\). Since \(S_1\) is a sub-program of \(S\), by induction hypothesis, there exists a symbolic transition path
\[\langle S_1,\widetilde{\sigma},\widetilde{\rho},1,\textup{True},0\rangle\twoheadrightarrow^{*}\langle \downarrow, \widetilde{\sigma}',\widetilde{\rho}',p',\varphi',\widetilde{F}' \rangle,\]
where \(\widetilde{\sigma}=\sigma\), such that \(p'>0\) and there exists a valuation \(V\) such that \(V(\widetilde{\rho})=\rho\), \(V(\widetilde{\sigma}')=\sigma'\), \(\supp(V(\widetilde{\rho}'))\supseteq \supp(\rho')\), \(V(\varphi')=\textup{True}\) and \(V(\widetilde{F}')\leq s\). Using this, we can obtain the following symbolic transition path using the rule \((\textup{SF-RU}')\):
\[\langle \repeatuntil{S_1}{b},\widetilde{\sigma},\widetilde{\rho},1,\textup{True},0\rangle\twoheadrightarrow\langle \downarrow, \widetilde{\sigma}',\widetilde{\rho}',p',\varphi'\land \widetilde{\sigma}'(b),\widetilde{F}' \rangle.\]
Note that \(V(\widetilde{\sigma}')=\sigma'\) and \(\sigma'\models b\), which means \(V(\varphi'\land\widetilde{\sigma}'(b))=\textup{True}\). Therefore, the induction hypothesis also holds for \(S\).
\end{proof}

\subsection{Soundness and Completeness of Fault-Tolerance Verification (Proof of Theorem~\ref{thm-compl-sound-1251851})}\label{sec-210412}
Now, we are able to prove the soundness and completeness of our fault-tolerance verification. 
We will first use our discretization theorems (i.e. \cref{lemma-dis-input-1272245} and \cref{lemma-disc-fautls-191645}) to reduce the verification task on continuous spaces to the verification task on discrete sets.
Then, roughly speaking, the soundness of fault-tolerance verification will use the completeness of symbolic faulty transitions (i.e., \cref{lemma-1310325}) and the completeness of fault-tolerance verification will use the soundness of symbolic faulty transitions (i.e., \cref{lemma-1310323}).

\begin{proof}
By our discretization theorems \cref{lemma-dis-input-1272245} and \cref{lemma-disc-fautls-191645}, the soundness and completeness of fault-tolerance verification is equivalent to that with stabilizer state inputs and Pauli faults. As demonstrated in \cref{sec-veri-ft-1200400}, our verification process uses symbolic stabilizer states to simulate all possible stabilizer state inputs that are needed for proving fault-tolerance. Then we use symbolic faulty transitions to (partially) simulate Pauli faulty transitions on the symbolic inputs.
We claim the fault-tolerance based on whether the output  symbolic states satisfy the fault-tolerance condition (i.e., \cref{eq-1100520}). More specifically, we assume the loops are either memory-less or conservative, where for conservative loops, we additionally perform a sub-verification for the error propagation property on its loop body using our symbolic execution (this is performed recursively if loops are nested) and claim ``fault-tolerance'' if both \cref{eq-1100520} holds and the sub-verification on the loop body passes.

\textbf{Soundness.}
First, we use induction on the structure of the \cqprog{}\  to prove that our sub-verification for the error propagation properties of the conservative loops is sound. This can be easily check by calling \cref{lemma-1310325}. Since if the errors propagate and increase in the loop body \(S\), then there must exists a faulty transition path witnessing the increasing of errors:
\begin{equation}\label{eq-1311520}
\langle S,\sigma,\rho\rangle \Rightarrow^{s} \langle \downarrow,\sigma',\rho'_{e}\rangle,
\end{equation}
where the corresponding ideal transition path ending in the same leaf node is
\[\langle S,\sigma,\rho\rangle \rightarrow^{*} \langle \downarrow ,\sigma',\rho'\rangle.\]
Here, \(\rho=\ketbra{\psi}{\psi}\) is a stabilizer state, \(\rho'=\ketbra{\psi'}{\psi'}\) (all the ideal transitions preserve purity, see \cref{remark-1311516}). Since \cref{eq-1311520} witnesses the increasing of errors, we have 
\begin{equation}\label{eq-1311535}
\supp(\rho'_e)\not\subseteq \mathcal{S}_{s}(\ket{\psi'}),
\end{equation}
where \(\mathcal{S}_{s}\) is the \(s\)-error space surrounding \(\ket{\psi'}\) (see \cref{def-1311407}). 
Then, by \cref{lemma-1310325}, there exists a set of symbolic faulty transition paths
\[\{\langle S,\widetilde{\sigma},\widetilde{\rho},1,\textup{True},0\rangle \twoheadrightarrow^{*} \langle \downarrow,\widetilde{\sigma}'_i,\widetilde{\rho}'_i,p'_i,\varphi'_i,\widetilde{F}'_i\rangle\}_i,\]
where \(\widetilde{\sigma}=\sigma\) (i.e., there are no symbols in \(\widetilde{\sigma}\)), \(p_i'>0\) and there exists a set of valuations \(\{V_i\}_i\) such that \(V_i(\widetilde{\rho})=\rho\), \(V_i(\widetilde{\sigma}_i')=\sigma'\), \(V_i(\widetilde{F}_i')\leq s\), \(V_i(\varphi'_i)=\textup{True}\) and 
\begin{equation}\label{eq-1311536}
\sum_i\supp(V_i(\widetilde{\rho}_i'))\supseteq \supp(\rho'_e).
\end{equation}
Combining \cref{eq-1311535} with \cref{eq-1311536}, we know that there must exists an index \(i\) such that 
\[\supp(V_i(\widetilde{\rho}'_i))\not\subseteq \mathcal{S}_s(\ket{\psi'}).\]
Then, since the SMT solver logically traverses all possible valuations, the ``bad'' faulty transition path will be found by comparing the faulty outputs (i.e., \(V_i(\widetilde{\rho}'_i)\)) with the ideal (fault-free) output state (i.e., \(\ket{\psi'}\)) using a similar formula as that in \cref{eq-1100520}.

Then, assume conversely that the gadget is not fault-tolerant but our verification claims ``fault-tolerance''. This means all sub-verifications of the loop bodies of conservative loops pass and the final fault-tolerance condition holds. Since we have already proved that our sub-verification is sound, the only situation is that there exists a faulty transition path that violates the final fault-tolerance property (see \cref{def-ft-1310455}). However, again by \cref{lemma-1310325}, there exists a set of symbolic faulty transition paths with a set of valuations that covers the support of the output quantum state violating the fault-tolerance condition. With a similar proof as above, we can conclude that this will be detected by comparing the support of the faulty symbolic output state with the ideal output state. Therefore, our verification tool will not claim ``fault-tolerance'', which is a contradiction.

\textbf{Completeness.}
For the completeness, we only consider the cases that the loops are memory-less~\footnote{The reason that we cannot prove completeness for the cases for conservative loops is: we currently cannot deny the possibility that the loop body propagates and increases errors but the overall program is still fault-tolerant.}.

Assume that the \cqprog{}\ \(S\) implements a gate gadget corresponding to the logical gate \(\overline{U}\) (other types of gadget can be handled similarly). 
Then, assume conversely that \(S\) is fault-tolerant but our verification claims ``non-fault-tolerant'' (note that our verification tool always terminates and returns either ``fault-tolerant'' or ``non-fault-tolerant'').
This means there exists a symbolic faulty transition path 
\[\langle S,\emptyset,\widetilde{\rho},1,\textup{True},\widetilde{F}\rangle \twoheadrightarrow^{*} \langle \downarrow,\widetilde{\sigma}',\widetilde{\rho}',p',\varphi',\widetilde{F}'\rangle,\]
where \(p'>0\) and a valuation \(V\) such that \(V(\widetilde{F})=r\), \(V(\widetilde{\rho})=\ketbra{\psi_r}{\psi_r}\) (where \(\ket{\psi_r}\) is a noisy version a logical stabilizer state \(\ket{\overline{\psi}}\) with \(r\) Pauli errors), \(V(\widetilde{F}')=r+s\) (where \(r+s\leq t\)), \(V(\varphi')=\textup{True}\), and
\begin{equation}\label{eq-1311636}
\supp(V(\widetilde{\rho}')) \not\subseteq \mathcal{S}_{r+s}(\overline{U}\ketbra{\overline{\psi}}{\overline{\psi}} \overline{U}^\dag),
\end{equation}
where \(\mathcal{S}_{r+s}(\overline{U}\ketbra{\overline{\psi}}{\overline{\psi}} \overline{U}^\dag)\) is the \((r+s)\)-error space surrounding \(\overline{U}\ketbra{\overline{\psi}}{\overline{\psi}} \overline{U}^\dag\) (see \cref{def-1311407}). Then, by \cref{lemma-1310323}, there must exists a concrete faulty transition
\[\langle S,\emptyset,V(\widetilde{\rho})\rangle \Rightarrow^s \langle \downarrow, V(\widetilde{\sigma}'),p' V(\widetilde{\rho}') \rangle.\]
However, since \(S\) is fault-tolerant (by assumption), we have
\[\supp(V(\widetilde{\rho}'))\subseteq \mathcal{S}_{r+s}(\overline{U}\ketbra{\overline{\psi}}{\overline{\psi}} \overline{U}^\dag),\]
which contradicts \cref{eq-1311636}.
\end{proof}

\section{Verifying Magic State Distillation (Proof of Theorem~\ref{thm-magic-d-210221})}\label{sec-210242}

\begin{proof}
First, it is already known (see, e.g., \cite{gottesman2024surviving}) that the magic distillation framework introduced in \cref{sec-veri-magic-state-1210027}, which uses the error correction process of distillation code \(\mathcal{D}\) with fault-tolerant stabilizer operations on a QECC \(\mathcal{C}\), is fault-tolerant. Therefore, provided that a magic state distillation is implemented in the two-party framework, it suffices to verify that
\begin{enumerate}
    \item any stabilizer operations on \(\mathcal{C}\) are fault-tolerant,
    \item the error correction process on \(D\) is ideal-case correct.
\end{enumerate}
The soundness for the verification of the first part is directly guaranteed by \cref{thm-compl-sound-1251851}. Then, we consider the soundness of the verification of the second part. 

First, we prove a similar result as that in \cref{lemma-dis-input-1272245} to discretize the input space to logical-\(Z\) basis with an additional state \(\ket{\overline{+}}=\sum_{i=0}^{2^k-1}\ket{\overline{i}}/\sqrt{2^{k}}\). Note that the proof is essentially covered by that in \cref{sec-210338} by setting \(s=0\), and therefore is omitted here.
%For completeness, we still provide a proof in \cref{sec-210402}.

After discretization, we use the completeness result (i.e., \cref{lemma-1310325}) of the symbolic faulty transitions (note that this is not equivalent to the completeness or soundness of verification). In particular, we only need to use the completeness (see \cref{def-1310351}) where the number of faults \(s\) during the execution is set to \(0\). Then, by a similar arguments as that in \cref{sec-210412}, we can conclude that the verification is sound (in fact, the proof in \cref{sec-210412} strictly covers the proof of ideal-case correctness, as it is just a sub-case of fault-tolerance).
\end{proof}

\section{Additional Examples of Bug Finding}\label{sec:bugfinding}
\subsection{Bad Ordering of Syndrome Measurements.}
\begin{figure}[ht]
    \centering
    \vspace{-5mm}
    \includegraphics[width=0.88\linewidth]{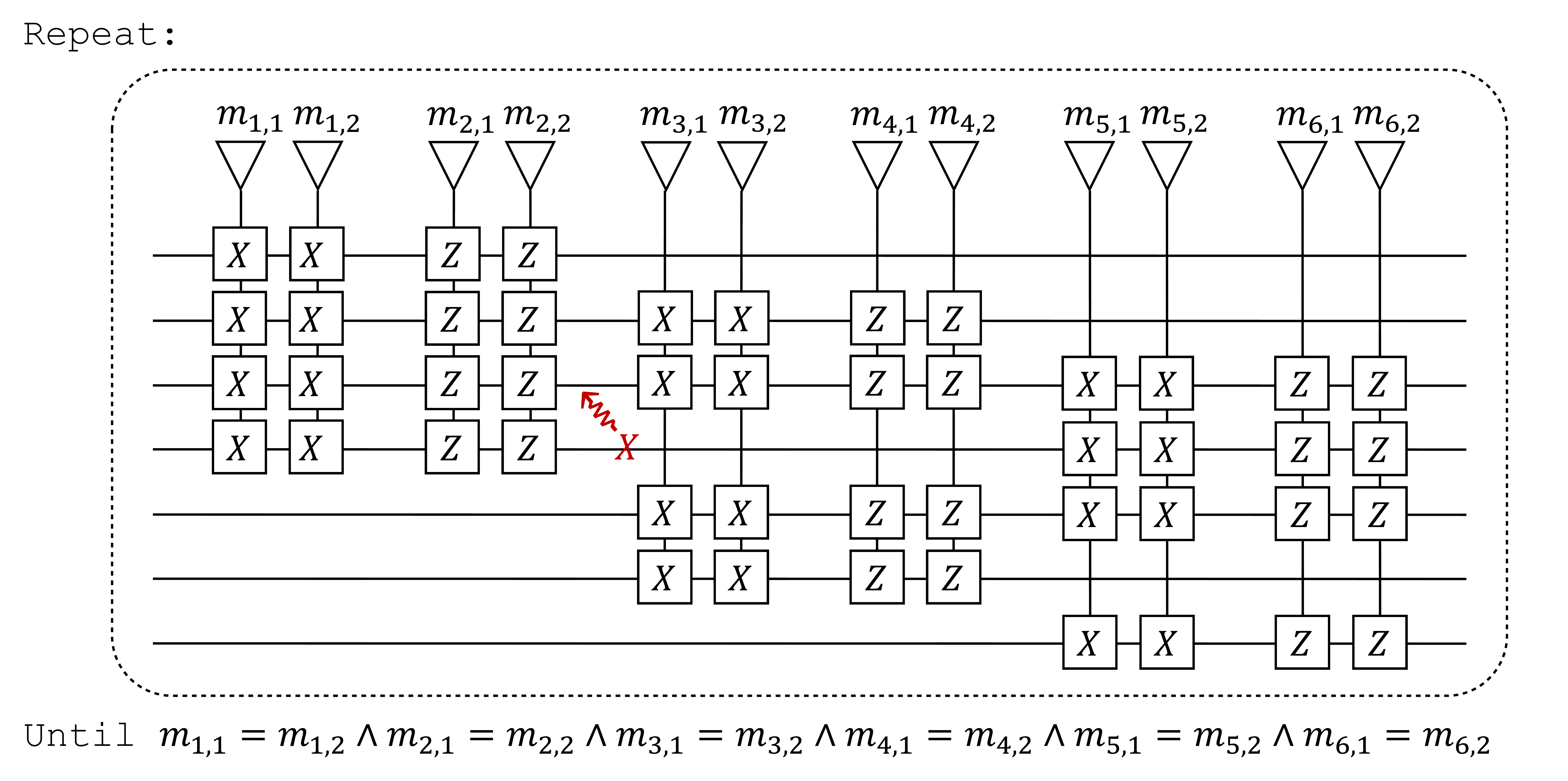}
    \caption{An example of bad syndrome measurement ordering of the color code~\cite{bombin2006topological}, where the Pauli measurements are implemented using cat states (see \cref{fig-multi-Pauli-cat-1181649}). A fault is found by our verification tool to witness its non-fault-tolerance.}
    \label{fig-bad-order-syndrome-1192330}
    \vspace{-3mm}
\end{figure}

In fault-tolerant error correction, syndrome measurements are repeated multiple times to reliably extract the syndrome for further error correction. It is worth noting that we must carefully design the ordering of syndrome measurements in the repetition. Specifically, the fault-tolerant method is to perform the full syndrome measurements for all stabilizers of the QEC and then repeat this process multiple times until we observe \(\lfloor d/2\rfloor + 1\) consecutive, all-agree, full syndrome results. In contrast, if the syndrome measurement for different stabilizers is performed sequentially --- where the repetition is conducted separately for the syndrome measurement of each stabilizer --- then a fault occurring after all repetitions of the syndrome measurement will be unnoticed in that syndrome result. An example of bad syndrome measurement ordering of the \([[7,1,3]]\) color code~\cite{bombin2006topological,rodriguez2024experimental} is shown in \cref{fig-bad-order-syndrome-1192330}, where our verification tool identified a fault that will cause a mismatch between the detected syndrome and the true syndrome, ultimately violating the fault-tolerance condition.

\subsection{Lack of Repetition.}
Consider the implementation of a fault-tolerant measurement gadget. Although we can use cat state measurement to ``transversally'' perform the multi-qubit Pauli measurement without propagating the errors, the measurement outcome itself is susceptible to error, i.e., a single fault can flip the measurement outcome. Therefore, to implement a fault-tolerant measurement, the cat state measurement is performed multiple times, each followed by a fault-tolerant error correction process on the data qubits. To tolerate up to \(t\) faults, it is enough to repeat it \(2t+1\) times and compute the majority of the outcomes. In contrast, we implement the logical-\(Z\) basis measurement of the \([[25,1,5]]\) rotated surface code~\cite{bombin2007optimal,horsman2012surface} with only \(2t=d-1=4\) repetitions. Using our verification tool, we found that two bit-flip faults in the outcomes of the first and second logical-\(Z\) measurements suffice to compromise the final measurement result, demonstrating its non-fault-tolerance.

\section{Verification of Hadamard and Phase Gates}\label{sec-5211304}
The Hadamard and phase gates can be implemented via quantum teleportation using the corresponding magic states $\ket{H}$ and $\ket{S}$ (see Theorem 13.2 in \cite{gottesman2024surviving}). The quantum teleportation procedure involves transversal Pauli gates and fault-tolerant Pauli measurement (which is similar to the $Z$-measurement previously verified in \cref{tab:veri-time-1132028}). Here, we provide the verification time of the fault-tolerant state preparation gadgets of $\ket{H}$ and $\ket{S}$ in \cref{table-5211308}.

\begin{table}[ht]
\vspace{-0pt}
\centering
\caption{Verification time of state preparation gadgets of $\ket{H}$ and $\ket{S}$.}\label{table-5211308}
\vspace{2mm}
\begin{tabular}{|c||c||cc|}
\hline
\multirow{2}{*}{\textbf{QECC}} &
  \multirow{2}{*}{\(\bm{[[n,k,d]]}\)} &
  \multicolumn{2}{c|}{\textbf{Time (s)}} \\ \cline{3-4} 
 &
   &
  \multicolumn{1}{c|}{\textbf{Preparation of $\ket{H}$}} &
  \textbf{Preparation of $\ket{S}$} \\ \hline
\multirow{2}{*}{Color Code~\cite{bombin2006topological,rodriguez2024experimental}} &
  [[7,1,3]] &
  \multicolumn{1}{c|}{2.67} &
  2.98 \\ \cline{2-4} 
 &
  [[17,1,5]] &
  \multicolumn{1}{c|}{46.37} &
   43.98 \\ \hline
\multirow{2}{*}{Rotated Surface Code~\cite{bombin2007optimal,horsman2012surface}} &
  [[9,1,3]] &
  \multicolumn{1}{c|}{3.11} &
  3.13 \\ \cline{2-4} 
 &
  [[25,1,5]] &
  \multicolumn{1}{c|}{205.94} &
  216.28 \\ \hline
\multirow{2}{*}{Toric Code~\cite{Kitaev2003anyons}} &
  [[18,2,3]] &
  \multicolumn{1}{c|}{3.08} &
  3.24 \\ \cline{2-4} 
 &
  [[50,2,5]] &
  \multicolumn{1}{c|}{402.43} &
   465.52 \\ \hline
\end{tabular}
\end{table}

\section{Fault-Tolerant Gadgets of the [[7,1,3]] Color Code\texorpdfstring{~\cite{bombin2006topological,rodriguez2024experimental}}{}}\label{sec:713-cc}

\cref{fig:713-cc} shows the schema of the [[7,1,3]] color code. The generators are 
\[\langle Z_1Z_2Z_3Z_4,X_1X_2X_3X_4,Z_2Z_3Z_5Z_6,X_2X_3X_5X_6,Z_3Z_4Z_5Z_7,X_3X_4X_5X_7\rangle.\] The logical $Z$ operator is $Z_L=Z_1Z_2Z_6$.

\cref{fig:713-gadgets} shows all four fault-tolerant gadgets of the [[7,1,3]] color code studied in \cref{sec-case-study-1221814}. Here, $t=\lfloor \frac{d-1}{2} \rfloor=1$, and all multi-Pauli measurements are performed with the cat state measurement. \cref{fig:713-cc-prep0} only shows the preparation of $\ket{\overline{0}}$. To prepare $\ket{\overline{1}}$, change the last measurement in each block ($Z_1Z_2Z_6=Z_L$) to $-Z_L$.

\begin{figure}[ht]
    \centering
    \includegraphics[width=0.4\linewidth]{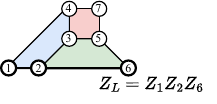}
    \vspace{-3mm}
    \caption{[[7,1,3]] color code schema}
    \label{fig:713-cc}
\end{figure}
% To explain:
% \begin{itemize}
%     \item The preparation gadget first measures all the stabilizers and the logical $Z_L=Z_1Z_2Z_6$ using cat state measurement, then applies the correction based on syndromes to ensure the output state is stabilized to $\ket{\overline{0}}$ ($Z_L\ket{\overline{0}}=\ket{\overline{0}}$).
%     \item The $CNOT$ gadget performs transversal $CNOT$ between two color codes.
%     \item The measurement gadget measures the logical $Z_L$ operator using cat state measurement and performs an error correction. The step is repeated for $2t+1$ times.
% \end{itemize}

\begin{figure}[ht]
    \centering
    \begin{subfigure}[b]{0.9\linewidth}
        \centering
        \includegraphics[width=1.0\linewidth]{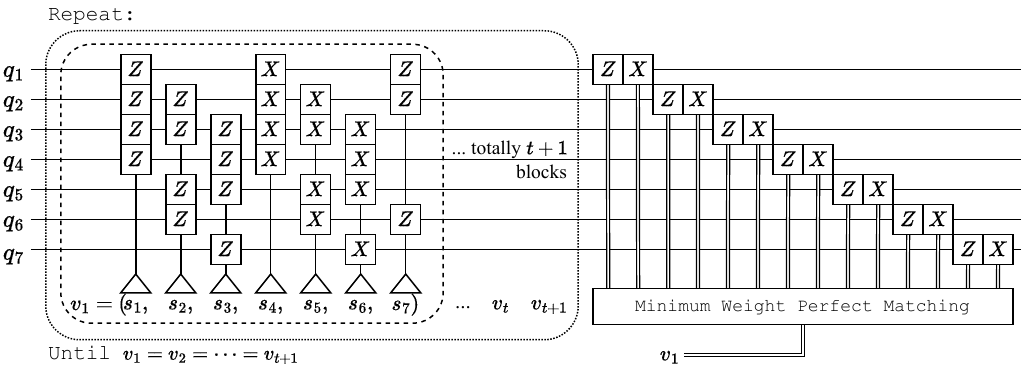}
        \caption{$\ket{\overline{0}}$ preparation gadget}
        \label{fig:713-cc-prep0}
    \end{subfigure}
    \begin{subfigure}[b]{0.25\linewidth}
        \centering
        \includegraphics[width=1.0\linewidth]{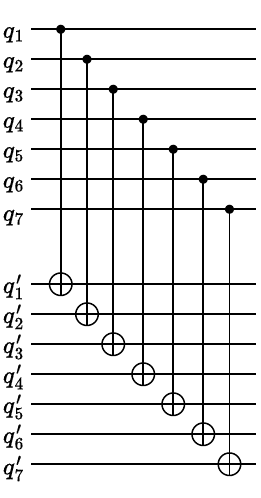}
        \caption{CNOT gadget}
        \label{fig:713-cc-cnot}
    \end{subfigure}
    \hfill
    \begin{subfigure}[b]{0.5\linewidth}
        \centering
        \includegraphics[width=1.0\linewidth]{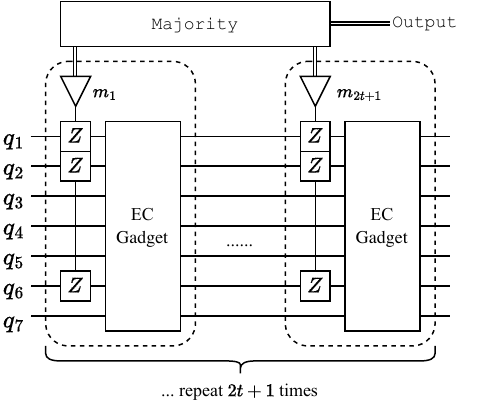}
        \caption{Measurement gadget}
        \label{fig:713-cc-meas}
    \end{subfigure}
    \begin{subfigure}[b]{\linewidth}
        \vspace{10pt}
        \centering
        \includegraphics[width=0.9\linewidth]{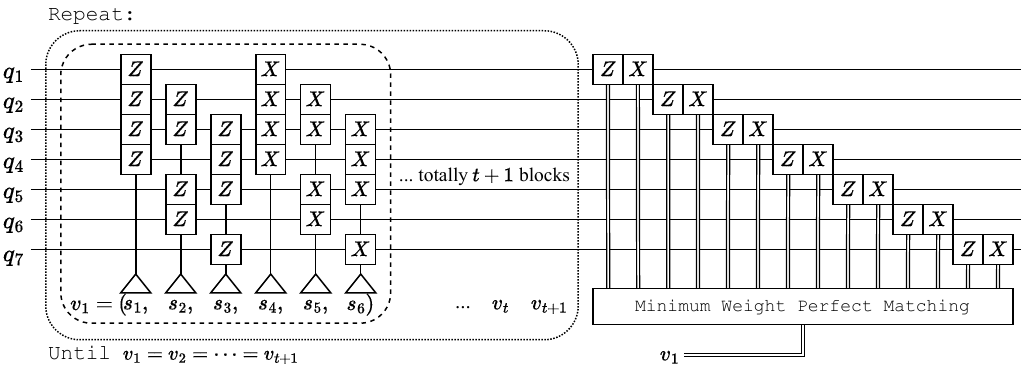}
        \caption{Error correction gadget}
        \label{fig:713-cc-ec}
    \end{subfigure}
    \caption{Fault-Tolerant gadgets of the [[7,1,3]] color code}
    \label{fig:713-gadgets}
    \vspace{-20mm}
\end{figure}

%\begin{figure}[ht]
%    \begin{subfigure}[b]{\linewidth}
%       \centering
%        \includegraphics[width=1.0\linewidth]{figs/cc-prep0.pdf}
%        \caption{$\ket{\overline{0}}$ preparation gadget}
%        \label{fig:713-cc-prep0}
%    \end{subfigure}
%    \begin{subfigure}[b]{0.25\linewidth}
%        \centering
%        \includegraphics[width=1.0\linewidth]{figs/cc-cnot.pdf}
%        \caption{CNOT gadget}
%        \label{fig:713-cc-cnot}
%    \end{subfigure}
%    \hfill
%    \begin{subfigure}[b]{0.55\linewidth}
%        \centering
%        \includegraphics[width=1.0\linewidth]{figs/cc-meas.pdf}
%        \caption{Measurement gadget}
%        \label{fig:713-cc-meas}
%    \end{subfigure}
%    \begin{subfigure}[b]{\linewidth}
%        \vspace{10pt}
%        \centering
%        \includegraphics[width=1.0\linewidth]{figs/cc-ec.pdf}
%        \caption{Error correction gadget}
%        \label{fig:713-cc-ec}
%    \end{subfigure}
%    \caption{Fault-Tolerant gadgets of the [[7,1,3]] color code}
%    \label{fig:713-gadgets}
%\end{figure}

% \begin{figure}
%     \centering
%     \includegraphics[width=0.25\linewidth]{figs/cc-cnot.pdf}
%     \caption{$CNOT$ Gadget}
%     \label{fig:713-cc-cnot}
% \end{figure}

% \begin{figure}
%     \centering
%     \includegraphics[width=\linewidth]{figs/cc-ec.pdf}
%     \caption{Error Correction Gadget}
%     \label{fig:713-cc-ec}
% \end{figure}
\end{document}